\DeclareMathOperator*{\argmin}{arg\,min}
\newtheorem{fact}{Fact}
\newtheorem{assumption}{Assumption}
\newtheorem{corollary}{Corollary}
\newtheorem{lemma}{Lemma}
\newtheorem{definition}{Definition}
\newtheorem{proposition}{Proposition}
\newtheorem{theorem}{Theorem}
  \providecommand\BibTeX{{%
    \normalfont B\kern-0.5em{\scshape i\kern-0.25em b}\kern-0.8em\TeX}}}
\begin{document}
\title{Differentially-Private Multi-Tier Federated Learning: A Formal Analysis and Evaluation}

\author{
 ~Frank Po-Chen Lin\IEEEauthorrefmark{1},~\IEEEmembership{Student Member,~IEEE},~Evan Chen\IEEEauthorrefmark{1},~\IEEEmembership{Student Member,~IEEE},
\\ Dong-Jun Han,~\IEEEmembership{Member,~IEEE}, ~Christopher G. Brinton,~\IEEEmembership{Senior Member,~IEEE}

\thanks{Evan Chen, Frank Po-Chen Lin and Christopher G. Brinton are with the Elmore Family School of Electrical and Computer Engineering, Purdue University, West Lafayette, IN, 47907, USA (e-mail: chen4388@purdue.edu, frank555076@gmail.com, cgb@purdue.edu).}
\thanks{Dong-Jun Han is with the Department of Computer Science and Engineering, Yonsei University, Seoul 03722, South Korea (e-mail: djh@yonsei.ac.kr).}
\thanks{An abridged version of this paper is under review in the 2025 IEEE International Conference on Communications (ICC).}
\thanks{This work was supported by the National Science Foundation (NSF) under grants CNS-2146171 and CPS-2313109, and by the Office of Naval Research (ONR) under grant N000142212305.}
}

\maketitle
\setlength{\footnotesep}{\baselineskip}
\def\thefootnote{\IEEEauthorrefmark{1}}\footnotetext{These authors contributed equally to this work.}\def\thefootnote{\arabic{footnote}}

\begin{abstract}
While federated learning (FL) eliminates the transmission of raw data over a network, it is still vulnerable to privacy breaches from the communicated model parameters. Differential privacy (DP) is often employed to address such issues. However, the impact of DP on FL in multi-tier networks -- where hierarchical aggregations couple noise injection decisions at different tiers, and trust models are heterogeneous across subnetworks -- is not well understood. To fill this gap, we develop \underline{M}ulti-Tier \underline{F}ederated Learning with \underline{M}ulti-Tier \underline{D}ifferential \underline{P}rivacy ({\tt M$^2$FDP}), a DP-enhanced FL methodology for jointly optimizing privacy and performance over such networks. One of the key principles of {\tt M$^2$FDP} is to adapt DP noise injection across the established edge/fog computing hierarchy (e.g., edge devices, intermediate nodes, and other tiers up to cloud servers) according to the trust models in different subnetworks. We conduct a comprehensive analysis of the convergence behavior of {\tt M$^2$FDP} under non-convex problem settings, revealing conditions on parameter tuning under which the training process converges sublinearly to a finite stationarity gap that depends on the network hierarchy, trust model, and target privacy level. 
We show how these relationships can be employed to develop an adaptive control algorithm for {\tt M$^2$FDP} that tunes properties of local model training to minimize energy, latency, and the stationarity gap while meeting desired convergence and privacy criterion.
Subsequent numerical evaluations demonstrate that {\tt M$^2$FDP} obtains substantial improvements in these metrics over baselines for different privacy budgets and system configurations.
\end{abstract}     

\begin{IEEEkeywords}
Federated Learning, Edge Intelligence, Differential Privacy, Hierarchical Networks
\end{IEEEkeywords}

 

\maketitle
\vspace{-0.10in}
\section{Introduction}
The concept of privacy has significantly evolved in the digital age, particularly with regards to data collection, sharing, and utilization in machine learning (ML) algorithms~\cite{Mohammad2019,Emiliano2021}. The ability to extract knowledge from massive datasets is a double-edged sword; while empowering ML algorithms, it simultaneously exposes individuals' sensitive information. Therefore, it is crucial to develop ML methods that respect user privacy and conform to data protection standards~\cite{Farokhi2021priv,Chu2022MitigatingBI}.

To this end, federated learning (FL) has emerged as an attractive paradigm for distributing ML over networks, as it allows for model updates to occur directly on the edge devices where the data originates~\cite{mcmahan2017communication,wang2019adaptive,haddadpour2019convergence,lin2021timescale,ju2024can}. Information transmitted over the network is in the form of locally trained models (rather than raw data), which are periodically aggregated at a central server. 
Nonetheless, FL is still susceptible to privacy threats: it has been shown that adversaries with access to model updates can reverse engineer attributes of device-side data~\cite{Wei2020,Zhu2019,Wang2019TM}. This has motivated different threads of investigation on privacy preservation within the FL framework. One promising approach has been the introduction of differential privacy (DP) into FL~\cite{Poor2020LDP,Shen2022imp,Zhao2021LDP,Shi2021HDP,chandrasekaran2024hierarchical,Wang2022DP,Sun2022DP,Zhang2020Mobi,Liu2023mobi}. DP injects calibrated noise into the model parameters to prevent the leakage of individual-level information, creating a privacy-utility tradeoff for FL.

DP in FL has been mostly studied for conventional FL architectures, where client devices are directly connected to the main server that conducts global model aggregations, i.e., in a star topology. However, in practice, direct communication with a central cloud server from the wireless edge can become prohibitively expensive, especially for large ML models and a large number of participating clients spread over a large geographic region. Much research has been devoted to enhancing the communication efficiency of this global aggregation step, e.g., \cite{wang2022federated,amiri2020federated,li2021talk,karimireddy2020scaffold,zhang2021low}.
In particular, recent studies have explored the potential of \textit{decentralizing} the star topology assumption in FL to account for more realistic networks comprised of edge devices, fog nodes, and the cloud \cite{wang2024device,mishchenko2022proxskip,lin2021timescale,deng2024communication,liu2022hierarchical}. By introducing localized communication strategies within the intermediate edge/fog layers, these approaches aim to minimize the overhead at the central cloud server. Much of this work has focused on \textit{Hierarchical FL} (HFL) \cite{hosseinalipour2020federated,nguyen2022fedfog,lin2021timescale,chen2024taming,chandrasekaran2024hierarchical}, considering a two-layer device-edge-cloud network architecture; more generally, \textit{Multi-Tier FL} (MFL) \cite{hosseinalipour2022multi,fang2024hierarchical} considers the distribution of FL tasks across computing architectures with an arbitrary number of intermediate fog layers.

Despite the practical significance of MFL, there is not yet a concrete understanding of how to apply and optimize DP within this framework. 
Indeed, the intermediate tiers between edge and cloud offer additional flexibility into where and how DP noise injection occurs, but challenge our understanding of how DP impacts ML training performance. Specifically, the hierarchical and heterogeneous structure introduce the following key challenges:
\begin{itemize}[leftmargin=8mm]
\item \textit{Noise injection coupling across different tiers:} In multi-tier networks, the injection of DP noise at any particular node will propagate to higher tiers through the HFL aggregation structure. Excessive noise introduced at one tier can accumulate and significantly degrade the ML training performance. The coupled impact of noise injection decisions thus requires careful consideration. 

\item \textit{Heterogeneous trust models across subnetworks:} Each model aggregation path from edge to cloud will pass through a different set of nodes, and thus a different set of trust models. This results in heterogeneous privacy constraints across different subnetworks, calling for a tailored strategy of noise injection to carefully balance ML training performance and DP guarantees.
\end{itemize}
In this work, we aim to address these challenges by conducting a comprehensive examination of the privacy-utility tradeoff associated with DP injection across MFL systems. Specifically, we seek to answer the following research questions:
\begin{enumerate}[leftmargin=8mm]
\item \textit{What is the coupled effect between MFL system configuration and multi-tier DP noise injection on ML performance?}

\item \textit{How can we adapt MFL and DP parameters to jointly optimize model performance, privacy preservation, and resource utilization?}
\vspace{-2mm}
\end{enumerate}

\subsection{Related Work}
DP has been widely studied as a means of preventing extraction of privacy-sensitive information from datasets or model parameters. Multiple DP mechanisms have been developed to provide guarantees on a certain level of DP, including the Laplacian mechanism \cite{dwork2006calibrating}, Gaussian mechanism \cite{Dwork2014DP}, functional mechanism \cite{zhang2012functional}, and exponential mechanism\cite{mcsherry2007mechanism}. 
The Laplacian, functional, and exponential mechanisms enforce $\epsilon$-DP, which often results in aggressive noise injection for ML tasks \cite{pan2024differential}. The Gaussian mechanism, by contrast, aims to enforce $(\epsilon,\delta)$-DP (see Sec.~\ref{ssec:DP}), a relaxed version of $\epsilon$-DP that is more suitable for learning tasks.


DP has been introduced into FL as a means of preventing servers and external eavesdroppers from extracting private information.
This has traditionally followed two paradigms: (i) central DP (CDP), involving noise addition at the main server~\cite{kon2017federated,Xiong2022CDP}, and (ii) local DP (LDP), which adds noise at each edge device~\cite{Zhao2021LDP,Shen2022imp,mobi2023Qiao,Liu2023mobi}. 
CDP generally leads to a more accurate final model, but it hinges on the trustworthiness of the main server. Conversely, LDP forgoes this trust requirement but requires a higher level of noise addition at each device to compensate~\cite{naseri2022local}.

There have been a few recent works dedicated to integrating these two paradigms into HFL, i.e., the special case of MFL with two layers. Specifically, \cite{Shi2021HDP,Zhou2023HDL} adapted the LDP strategy to the HFL structure, utilizing moment accounting to obtain strict privacy guarantees across the system. \cite{Wainakh2020HLDP} explored the advantages of flexible decentralized control over the training process in HFL and examined its implications on participant privacy. 
More recently, a third paradigm called hierarchical DP   (HDP) has been introduced~\cite{chandrasekaran2024hierarchical}. 
HDP assumes that intermediate nodes present within the network can be trusted even if the main server cannot. These nodes are assigned the task of adding calibrated DP noise to the aggregated models prior to passing them upstream. The post-aggregation DP addition requirement to meet a given privacy budget becomes smaller as a result. However, there has not yet been a comprehensive analytical study on HDP and the convergence behavior in these systems, without which control algorithm design remains elusive. Moreover, no work has attempted to extend the concept of HDP towards more general multi-tier networks found in MFL settings, where the challenges of noise injection coupling and heterogeneous trust models become further exacerbated.

\subsection{Outline and Summary of Contributions}
In this work, we bridge this gap through the development of \textit{\underline{M}ulti-Tier \underline{F}ederated Learning with \underline{M}ulti-Tier \underline{D}ifferential \underline{P}rivacy} ({\tt M$^2$FDP}), a privacy-preserving model training paradigm which fuses a flexible set of trusted servers in HDP with MFL procedures. Our convergence analysis reveals conditions necessary to secure robust convergence rates in DP-enhanced MFL systems, providing a foundation for designing control algorithms to adapt the tradeoff among energy consumption, training delay, model accuracy, and data privacy. Our main contributions can be summarized as follows:
\begin{itemize}[leftmargin=5mm]
    \item We formalize {\tt M$^2$FDP}, which solves the important challenge of determining how to inject noise into MFL according to the privacy-utility tradeoff. {\tt M$^2$FDP} integrates flexible Multi-Tier DP (MDP) trust models and noise injection with MFL systems comprised of an arbitrary number of network layers (Sec.~\ref{sec:prelim}\&\ref{sec:tthf}). We design this methodology to preserve a target privacy level throughout the entire training process, instead of only at individual aggregations, allowing for a more effective balance between privacy preservation and model performance. {\color{black} Further discussion shows our algorithm is able to recover existing DP-FL and Fed-Avg based algorithms under specific configurations of network layers and trustworthy nodes.}
    
    \item We characterize the convergence behavior of {\tt M$^2$FDP} under non-convex ML loss functions (Sec.~\ref{sec:convAnalysis}). Our analysis (culminating in Theorem~\ref{thm:noncvx}) shows that with an appropriate choice of FL step size,
    the cumulative average global model will converge sublinearly with rate $\mathcal O(1/\sqrt{T})$ to a region around a stationary point. The stationarity gap depends on factors including the trust model, multi-tier network layout, and aggregation intervals. 
    These results also reveal how DP noise injected above certain tiers induces smaller degradation on ML convergence performance compared to injection at lower tiers (Corollary~\ref{cor:noncvx}). {\color{black} We also conduct a comprehensive privacy analysis of {\tt M$^2$FDP}, showing consistent DP protection across all nodes in the multi-tier network.}
    
    \item We demonstrate an important application of our convergence analysis: developing an adaptive control algorithm for {\tt M$^2$FDP} (Sec.~\ref{sec:ctrl_DPFL}). This algorithm simultaneously optimizes energy consumption, latency, and model training performance, while maintaining a sub-linear convergence rate and desired privacy standards. This is achieved through fine-tuning the local training interval length, learning rate, and fraction of edge devices engaged in FL together with the DP injection profile during each training round. 
    
    \item Through numerical evaluations, we demonstrate that {\tt M$^2$FDP} obtains substantial improvements in convergence speed and trained model accuracy relative to existing DP-based FL algorithms (Sec.~\ref{sec:experiments}). Further, we find that the control algorithm reduces energy consumption and delay in multi-tier networks by up to $60\%$. Our results also corroborate the impact of the network configuration and trust model on training performance found in our bounds.
\end{itemize}

An abridged version of this paper published in the 2025 IEEE International Conference on Communications. Compared to the conference version, this paper makes the following additional contributions: (1) development of the adaptive control algorithm based on our analysis; (2) a more comprehensive set of experiments, with results on different datasets, network configurations, and privacy settings; and (3) inclusion of sketch proofs for key results in the main text, with detailed proofs deferred to the supplementary material.

\begin{figure*}[t]
  \centering
    \includegraphics[width=0.95\textwidth]{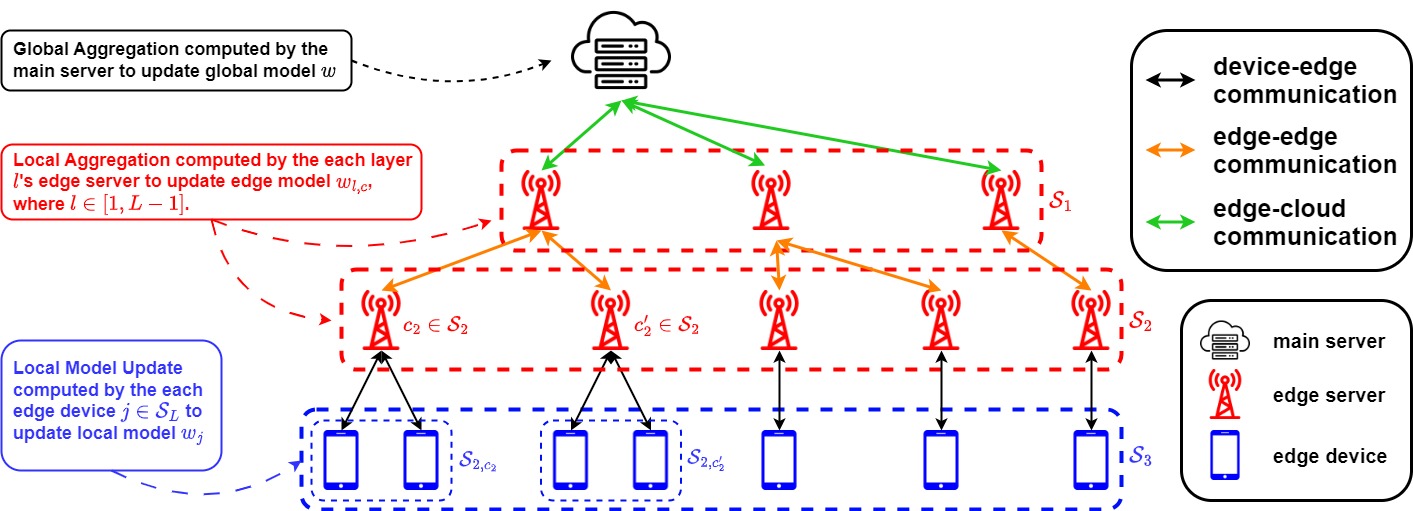}
     \caption{Multi-tier FL network architecture studied in this work with total $L = 3$ layers beneath the main server. $\mathcal{S}_1$ and $\mathcal{S}_2$ are the set of edge servers between the devices and the main server, and $\mathcal{S}_3$ is the set of edge devices computing local gradient updates. The edge devices are further grouped into children nodes of layer $l = 2$, e.g., $\mathcal{S}_{2,c_2}$ and $\mathcal{S}_{2,c'_2}$.}
     \label{fig2}
     \vspace{-1.5em}
\end{figure*}

\vspace{-0.10in}
\section{Preliminaries and System Model}
\label{sec:prelim}
\noindent This section introduces key DP concepts (Sec.~\ref{ssec:DP}), our hierarchical network model (Sec.~\ref{subsec:syst1}), and the target ML task (Sec.~\ref{subsec:syst2}). Key notations we use throughout the paper are summarized in Table~\ref{table:1}.

\vspace{-0.10in}
\subsection{Differential Privacy (DP)}\label{ssec:DP}
Differential privacy (DP) characterizes a randomization technique according to parameters $\epsilon, \delta$. Formally, a randomized mechanism $\mathcal M$ adheres to ($\epsilon$,$\delta$)-DP if it satisfies the following: 
\begin{definition}[($\epsilon$,$\delta$)-DP~\cite{Dwork2014DP}]
For all $\mathcal D$, $\mathcal D'$ that are adjacent datasets, and for all $\mathcal S \subseteq \text{Range}(\mathcal M)$, it holds that:
\begin{align}
    \Pr[\mathcal M(D)\in\mathcal S]\leq e^{\epsilon}\Pr[\mathcal M(D')\in\mathcal S]+\delta,
\end{align}
where $\epsilon>0$ and $\delta\in(0,1)$. 
\end{definition}
\noindent $\epsilon$ represents the privacy budget, quantifying the degree of uncertainty introduced in the privacy mechanism. Smaller $\epsilon$ implies a stronger privacy guarantee. $\delta$ bounds the probability of the privacy mechanism being unable to preserve the $\epsilon$-privacy guarantee, and adjacent dataset means they differ in at most one element.


\textbf{Gaussian mechanism~\cite{Dwork2014DP}}:
The Gaussian mechanism is a commonly employed randomization mechanism compatible with $(\epsilon,\delta)$-DP. With it, noise sampled from a Gaussian distribution is introduced to the output of the function being applied to the dataset. This function, in the case of {\tt M$^2$FDP}, is the computation of gradients.

Formally, to maintain ($\epsilon$,$\delta$)-DP for any query function $f$ processed utilizing the Gaussian mechanism, the standard deviation $\sigma$ of the distribution must satisfy:
\begin{align}
    \sigma > \frac{\sqrt{2log(1.25/\delta)} \Delta f}{\epsilon},
\end{align}
where $\Delta f$ is the $L_2$-sensitivity, and $\epsilon$, $\delta$ are the privacy parameters.

\textbf{$L_2$-Sensitivity~\cite{Dwork2014DP, Sun2022DP}}:
The sensitivity of a function is a measure of how much the output can change between any given pair of adjacent datasets. Formally, the $L_2$-sensitivity for a function $f$ is defined as:
\begin{align}
    \Delta f = \max_{\mathcal D, \mathcal D'} \quad&\Vert f(\mathcal D)-f(\mathcal D')\Vert_2,
\end{align}
where the maximum is taken over pairs of adjacent datasets $\mathcal{D}$ and $\mathcal{D'}$, and $\Vert\cdot\Vert_2$ is the $L_2$ norm.
In our setting, $L_2$ sensitivity allows calibrating the amount of Gaussian noise to be added to ensure the desired ($\epsilon$,$\delta$)-differential privacy in FL model training.


\begin{table}
    \caption{Summary of key notations used throughout the paper.}
    \centering
    \begin{tabular}{m{0.07\textwidth}|m{0.37\textwidth}}
    \hline
    \centering Notation   &  \centering Description \tabularnewline
    \hline
    \centering $\epsilon$ & The privacy budget that quantifies the uncertainty level of privacy protection.\\
    \hline
    \centering $\delta$ & The additional probability bound on privacy mechanism that loosens the $\epsilon$-privacy guarantee restriction.\\
    \hline
    \centering $\Delta f$ & The $L_2$-sensitivity of a function $f$.\\
    \hline
    \centering $L$  & The total number of layers (tiers) in the system below the cloud server. The set of edge servers connected to the global server are labeled as the first layer, and the set of edge devices are labeled as the $L$th layer.\\
    \hline
    \centering $\mathcal{S}_l$ & The set of nodes on the layer $l\in [1, L]$.\\
    \hline
    \centering $\mathcal{S}_{l,c}$ & For a given node at layer $l$, $c \in \mathcal{S}_l$, the set of child nodes in layer $l+1$ connecting to node $c$.\\
    \hline
    \centering {\color{black} $\Delta_{l, c_l}$} & {\color{black} The required $L_2$-sensitivity for an aggregated model at layer $l \in [1, L-1]$ for any given node $c_l \in \mathcal{S}_l$.}\\
    \hline
    \centering $N_l$ & The number of nodes on each layer $l$.\\
    \hline
    \centering $\mathcal{N}_{T,l}$ & The set of secure/trusted edge servers at layer $l$, $l \in [1, L-1]$.\\
    \hline
    \centering $\mathcal{N}_{U,l}$ & The set of insecure/untrusted edge servers at layer $l$, $l \in [1, L-1]$.\\
    \hline
    \centering $T$ & The total number of global aggregations from edge devices toward the main server throughout the whole training process.\\
    \hline
    \centering $K^t$ & The total number of local gradient updates between global iteration $t$ and $t+1$.\\
    \hline
    \centering $K^{\max}$ & The largest $K^t$ throughout the whole training process.\\
    \hline
    \centering $\mathcal{K}_l^t$ & The set of local iterations between global iteration $t$ and $t+1$, where all edge devices perform local aggregation towards layer $l$, $l \in [1, L-1]$.\\
    \hline 
    \centering $\mathcal{D}_i$ & The dataset allocated on the edge device $i \in \mathcal{S}_L$.\\
    \hline 
    \centering $w^{(t)}$ & The global model at global iteration $t$.\\
    \hline 
    \centering $w_{l,c}^{(t,k)}$ & The model aggregated to edge server $c \in \mathcal{S}_l, l\in [1, L-1]$ at global iteration $t$ and local iteration $k$.\\
    \hline
    \centering $w_j^{(t,k)}$ & The model computed at edge device $j \in \mathcal{S}_L$ at global iteration $t$ and local iteration $k$.\\
    \hline
    \centering $g_i^{(t,k)}$ & The unbiased stochastic gradient computed on edge device $i \in \mathcal{S}_L$ during global iteration $t$ and local iteration $k$.\\
    \hline
    \centering $d_{l,j}$ & A edge server in $\mathcal{S}_l$ that is the parent node of the edge device $j \in \mathcal{S}_L$ at layer $l$.\\
    \hline  
    \centering $\widetilde{w}_{l, c}^{(t,k)}$ & An intermediate model at edge server $c \in \mathcal{S}_l$ where additional noise is injected for aggregation towards an edge server in layer $l-1$.\\
    \hline
    \centering $\widetilde{w}_{j}^{(t,k)}$ & An intermediate model at edge devices $j \in \mathcal{S}_L$ where gradient update is being performed but local aggregation and broadcasting is not performed yet.\\
    \hline
    \centering $\rho_{l,i,j}$ & Aggregation weights applied onto local aggregations from $j \in \mathcal{S}_{l,i}$ towards $i \in \mathcal{S}_l$.\\
    \hline
    \centering $\rho_{c}$ & Aggregation weights applied when a edge server $c\in \mathcal{S}_1$ at layer $l=1$ aggregates to the main server.\\
    \hline 
    \centering $n_{l,c}^{(t,k)}$ & Noise attached to model $w_{l,c}^{(t,k)}$ during aggregation to protect differential privacy, $l \in [1, L]$.\\
    \hline
    \centering $\Delta_{l, c}$ & The L2-sensitivity to generate the Gaussian noise $n_{l,c}^{(t,k)}$ during aggregations.\\
    \hline 
    \centering $p_{l,c}$ & The ratio of child nodes for some $c \in \mathcal{N}_{U,l-1}$ that are secure servers.\\
    \hline 
    \centering $p_l^{\max}/p_l^{\min}$ & The maximum/minimum $p_{l,c}$ throughout all $c \in \mathcal{N}_{U,l-1}$.\\
    \hline
    \centering $s_l$ & The smallest set of child nodes out of all edge servers $c \in \mathcal{S}_l$.\\
    \hline
    \centering \textcolor{black}{$\mathcal{A}_l$} & \textcolor{black}{DP effects at layer $l$ from DP noise injected at layer $l$, without injection at layers $l' > l$.}\\
    \hline
    \centering \textcolor{black}{$\mathcal{B}_l$} & \textcolor{black}{DP effects at layer $l$ from DP noise injection at layers $l' \in [l+1, L-1]$.}\\
    \hline \centering \textcolor{black}{$\mathcal{C}_l$} & \textcolor{black}{DP effects at layer $l$ from DP noise injection at edge devices (layer $L$).}\\
    \hline
    \end{tabular}
    \vspace{-0.10in}
    \label{table:1}
\end{table}

\subsection{Multi-Tier Network System Model}
\label{subsec:syst1} 

\textbf{System architecture}: We consider the multi-tier edge/fog network architecture depicted in Fig.~\ref{fig2}. We define $L$ as the number of layers (i.e., tiers) below the cloud server, i.e., the total number of layers\textcolor{black}{, including the cloud server,} is $L+1$. The hierarchy, from bottom to top, consists of edge devices at the $L$th layer, $L-1$ layers of intermediate nodes/servers, and the cloud server. The primary responsibilities of these layers include local model training (at edge devices), local/intermediate model aggregations (at intermediate fog nodes), and global model aggregations (at the cloud server).

The set of nodes in each layer $l$ is denoted by $\mathcal{S}_l$, where $\mathcal{S}_{l,c}$ is the set of child nodes in layer $l+1$ connecting to node $c $ in $\mathcal{S}_l$. The sets $\mathcal{S}_{l,c}$ are all pairwise disjoint, meaning every node in layer $l+1$ is connected to a single upper layer node in layer $l$. We define $N_l = |\mathcal{S}_l| $ as the total number of nodes in layer $l$. The total number of edge devices is $N_L$.


\begin{figure*}[t]
\includegraphics[width=1.0\textwidth]{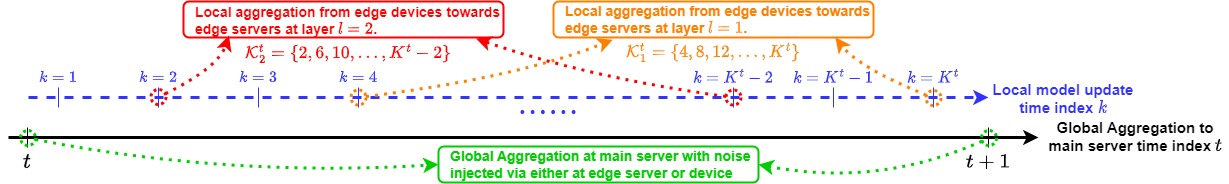}
\centering
\caption{Illustration of training timescales in {\tt M$^2$FDP} for the case of $L=3$ in Fig.~\ref{fig2}. $K^t$ total local gradient updates are performed between global iterations $t$ and $t+1$. Within the time interval of local iterations $k \in [1, K^t]$, multiple local aggregations from edge devices towards edge servers at layers $l=1$ (i.e., $\mathcal{K}_1^t$) and $l=2$ (i.e., $\mathcal{K}_2^t$) are performed.}
\label{fig:twoTimeScale}
\vspace{-0.15in}
\end{figure*}

\textbf{Threat model}:
We consider the setting where any information that is being passed through a communication link runs the risk of being captured, but there is no risk of any information being poisoned or replaced. This includes scenarios such as external eavesdroppers, or when certain intermediate nodes exhibit \textit{semi-honest behavior}~\cite{Melis2019TM,Zhu2019,Wang2019TM}, where the intermediate nodes still perform operations as the FL protocol intended, but they may seek to extract sensitive information from shared FL models. For instance, a network operator may be providing service through a combination of its own infrastructure (trusted/secure) as well as borrowed infrastructure (untrusted/insecure). 

Each node $c\in \mathcal{S}_l$, for $l \in [1,L-1]$, will periodically conduct model aggregations among its child nodes $\mathcal{S}_{l,c}$. The hierarchical nature of this operation enforces the following trustworthiness designations among nodes:
\begin{itemize}
\item Node $c$ can be categorized as a \textit{secure/trusted} intermediate node only if it is deemed trustworthy by all its children, i.e., $c \in \mathcal{N}_{T,l} \subseteq \mathcal{S}_l$. 

\item On the other hand, if not all children trust node $c$, then node $c$ is automatically categorized as an \textit{insecure/untrusted} node, i.e., $c \in \mathcal{N}_{U,l} \subseteq \mathcal{S}_l$.

\item If any of node $c$'s children have been labeled insecure by these mechanisms, i.e., $\mathcal{S}_{l,c} \cap \mathcal{N}_{U,l+1} \neq \emptyset$, then node $c$ is automatically labeled as untrusted (as it will receive models from such nodes). 
\end{itemize}
As a consequence, all insecure nodes in layer $l$ are children of insecure nodes in layer $l-1$:
\begin{align}
    \mathcal{N}_{U,l} \subseteq \bigcup_{c \in \mathcal{N}_{U, l-1}} \mathcal{S}_{l-1,c}.
\end{align}
%
\textcolor{black}{These trustworthiness labels of nodes -- i.e., whether a node $c$ in layer $l$ is in $\mathcal{N}_{T,l}$ or $\mathcal{N}_{U,l}$ -- are assumed to be set prior to the operation of our {\tt M$^2$FDP} method. These labels may, for example, reflect relationships between infrastructure owners or service providers (e.g., if there are intermediate nodes in the hierarchy not owned/managed by the entity responsible for the learning task). We make no other assumptions on the model employed for determining trustworthiness of nodes.}

\vspace{-0.10in}
\subsection{Machine Learning Model} \label{subsec:syst2}
Each edge device $j\in \mathcal{S}_L$ has a dataset $\mathcal{D}_j$ comprised of $D_j=|\mathcal{D}_j|$ data points. We consider the datasets $\mathcal{D}_j$ to be non-i.i.d. across devices, as is standard in FL research.
The \textit{loss} $\ell(d; w)$ quantifies the fit of the ML model to the learning task. It is linked to a data point $d \in \mathcal{D}_j$ and depends on the 
 model $w \in \mathbb{R}^M$, with $M$ representing the model's dimension.
Consequently, the \textit{local loss function} for device $i$ is:
\begin{align}\label{eq:1}
\textstyle F_j(w)=\frac{1}{D_j}\sum_{d\in\mathcal D_j}
\ell(d; w).
\end{align}
We define the \textit{intermediate loss function} for each $ c\in \mathcal{S}_{l}$ as
\begin{align*}
\textstyle F_{l,c}(w) = \begin{cases}
    \sum_{j\in \mathcal{S}_{l,c}} \rho_{l,c,j} F_j(w), &\quad l = L-1,\\
    \sum_{i\in \mathcal{S}_{l,c}} \rho_{l,c,i} F_{l+1,i}(w), &\quad l \in [1, L-2],
\end{cases}
\end{align*}
where $\rho_{l,c, i} = 1/|\mathcal{S}_{l,c}|$ symbolizes the relative weight when aggregating from layer $l+1$ to $l$.
Finally, the \textit{global loss function} is defined as the average loss across all subnets:
\begin{align*}
\textstyle F(w)=\sum_{c=1}^{N_1} \rho_c F_{1,c}(w),
\end{align*}
where $\rho_c = 1/N_1$ is each subnet's weight in the global loss.

The learning objective of the system is to determine the optimal global model parameter vector $w^* \in \mathbb{R}^M$ such that
$
w^* = \mathop{\argmin_{w \in \mathbb{R}^M} }F(w).
$

\vspace{-0.10in}
\section{Proposed Methodology}
\label{sec:tthf}
\noindent In this section, we formalize {\tt M$^2$FDP}, including its timescales of operation (Sec. \ref{subsec:syst3}), training process (Sec. \ref{subsec:form}), and DP mechanism (Sec. \ref{ssec:DP_main}).

\vspace{-0.10in}
\subsection{Model Training Timescales} \label{subsec:syst3}
Training in {\tt M$^2$FDP} follows a slotted-time representation, depicted in Fig.~\ref{fig:twoTimeScale}. \textit{Global aggregations} are carried out by all layers of the hierarchy to collect model parameters for the cloud server at time indices $t = 1, ..., T$. In between two global iterations $t$ and $t+1$, $K^t$ \textit{local model training iterations} are carried out by the edge devices. We assume these local update iterations occur via stochastic gradient descent (SGD).



In {\tt M$^2$FDP}, the main server first broadcasts the initial global model ${w}^{(1)}$ to all devices at $t = 1$. For each global iteration $t$, we define $\mathcal{K}_l^t \subseteq \{1, 2, \ldots, K^t\}$ as the set of local iterations for which a \textit{local aggregation} is conducted by parent nodes in layer $l$. We assume the sets of local aggregation iterations, $\mathcal{K}_1^t, \mathcal{K}_2^t, \ldots, \mathcal{K}_{L-1}^t$ are pairwise disjoint, i.e., if local aggregations occur at iteration $k$, they are dictated by one layer.

\begin{algorithm}[t]
{ \small
\caption{\strut\small {\tt M$^2$FDP}: Multi-Tier Federated Learning with Multi-Tier Differential Privacy}
\label{alg:1}
\KwIn{Number of global aggregations $T$, minibatch sizes $\vert\xi_i^{(t,k)}\vert$, target DP level $(\epsilon, \delta)$} 
\KwOut{Global model ${ w}^{(T+1)}$}
Initialize $w^{(1)}$ and broadcast it among the edge devices through the intermediate nodes, set $w_j^{(1,1)} = w^{(1)}, \forall j \in \mathcal{S}_L$.\\
Initialize $K^1$ to decide the number of local SGD steps to be performed before the first global aggregation.\\
Initialize $\eta^1$ according to the conditions on step size in Theorem~\ref{thm:noncvx}.

\For{$t = 1, \ldots, T$}{
\For{$k = 1, \ldots, K^t$}{
Each edge device $i \in \mathcal{S}_L$ performs a local SGD update based on \eqref{eq:SGD} using $w_i^{(t,k)}$ to obtain $\widetilde{w}_j^{(t,k)}$.\\
\uIf{$k \in \mathcal{K}^t_l$ for some $l\in [1, L-1]$}{
\For{$l' \in [L-1, l]$}{
\NodeFor{$c \in \mathcal{S}_{l'}$}{Children $\mathcal{S}_{l',c}$ upload their models to parent $c$, adding noise based on~\eqref{eq:noiseEdge}\&\eqref{eq:noise} if $c \in \mathcal{N}_{U,l'}$. \\
\uIf{$c$ is secure, i.e., $c\in \mathcal{N}_{T,l'}$}{
Local aggregation at $c$ follows \eqref{eq:secure_local_aggr}. }
\Else{Local aggregation at $c$ follows \eqref{eq:insecure_local_aggr}.}}
}
Nodes $c \in \mathcal{S}_l$ broadcast their aggregated models $w_{l,c}^{(t,k)}$ to the edge, resulting in $w_j^{(t,k)} = {w}_{l, d_{l,j}}^{(t,k)}, \forall j \in \mathcal{S}_L$, where $d_{l,j}$ denotes $j$'s ancestor in layer $l$.

}
\Else{
${{{w}}}_j^{(t,k)} = \widetilde{w}_j^{(t,k)}, \quad j\in\mathcal{S}_L.$
}

}
Main server conducts global aggregation to compute the global model ${ w}^{(t+1)}$ based on \eqref{eq:secure_local_aggr}, \eqref{eq:insecure_local_aggr}, and \eqref{eq:glob_aggr_3}.\\
Broadcast the aggregated model to edge devices for the next round of local iterations $w_i^{(t+1,1)} = { w}^{(t+1)} \; \forall i \in \mathcal{S}_L.$

}

}

\end{algorithm}
\setlength{\textfloatsep}{0pt}
\vspace{-0.10in}
\subsection{{\tt M$^2$FDP} Training and Aggregations}\label{subsec:form}
Algorithm~\ref{alg:1} summarizes the full {\tt M$^2$FDP} procedure. It can be divided into three major stages: (1) local model updates using on-device gradients, (2) DP noise injection and local model aggregations at intermediate nodes, and (3) global model aggregations in the cloud, as follows.

\textbf{Local model update}: At local iterations $k \in [1, K^t]$, device $j$ randomly selects a mini-batch $\xi_j^{(t,k)}$ from its dataset $\mathcal D_j$. Using this mini-batch, it calculates the stochastic gradient
\begin{equation}
    {g}_{j}^{(t,k)}=\frac{1}{\vert\xi_j^{(t,k)}\vert}\sum_{d\in\xi_j^{(t,k)}}
    \nabla\ell(d; w_j^{(t,k)}).
\end{equation}
We assume a uniform selection probability $q$ of each data point, i.e., $q=\vert\xi_j^{(t)}\vert/D_j,~\forall j$. Device $j$ employs ${ g}_{j}^{(t,k)}$ to determine $ {\widetilde{{w}}}_j^{(t)}$:
\begin{align} \label{eq:SGD}
    \textstyle\widetilde{w}_j^{(t,k+1)} = 
            w_j^{(t,k)}-\eta^t { g}_{j}^{(t,k)},~k \in [1,K^t],
\end{align}
where $\eta^{t} > 0$ is the step size. Using ${\widetilde{w}}_j^{(t,k+1)}$ as the \textit{base} local model, the \textit{updated} ${w}_j^{(t,k+1)}$ is determined in one of several ways depending on the trust model.
Specifically, if no local aggregation is performed at time $k$, i.e., $k \notin \cup_{l=1}^{L-1}\mathcal{K}_l^t$, the updated model follows ${{{w}}}_j^{(t,k+1)} = {\widetilde{{w}}}_j^{(t,k+1)}$ in \eqref{eq:SGD}. On the other hand, if $k \in \cup_{l=1}^{L-1}\mathcal{K}_l^t$, then the updated local model inherits the local model aggregation described next.

\textbf{Local model aggregations with DP noise injection}: For an iteration $k \in \mathcal{K}_l^t$, the network performs aggregations for nodes up through layer $l$. First, the \textit{local aggregated model} $w_{l',c}^{(t,k)}$ will be computed at the intermediate nodes $c \in \mathcal{S}_{l'}$ in level $l' = L-1$. These will be passed up to to layer $l' = L-2$, with the aggregations continuing sequentially until arriving at layer $l' = l$. This involves one of two operations, depending on the trustworthiness labels of the intermediate nodes:

\textit{(i) Aggregation at secure nodes:} For aggregations performed at layer $l'$, if aggregator $c\in \mathcal{S}_{l'}$ is labeled secure, i.e., $c \in \mathcal{N}_{T,l'}$, each child node $i \in \mathcal{S}_{l',c}$ uploads its model parameters without noise added. Hence, ${w}_{l',c}^{(t,k)}$ is computed as:
\vspace{-0.05in}
\begin{align}\label{eq:secure_local_aggr}
 w_{l',c}^{(t,k)} &= \begin{cases}
    \sum_{i\in \mathcal{S}_{l',c}}\rho_{l',c,i}\widetilde{w}_i^{(t,k)},\quad& l' = L-1,\\
    \sum_{i\in \mathcal{S}_{l',c}}\rho_{l',c,i}{w}_{l'+1,i}^{(t,k)} \quad& l' \in [l, L-2].\\
\end{cases}
\end{align} 

\textit{(ii) Aggregation at insecure nodes:} Conversely, if $c \in \mathcal{S}_{l'}$ is labeled untrustworthy, i.e., $c\in\mathcal{N}_{U,l'}$, each child node $i \in\mathcal{S}_{l',c}$ 
injects DP noise $n_{l'+1,i}^{(t,k)}$ within its transmission, i.e.,:

\begin{figure*}[t]
  \centering
    \includegraphics[width=0.95\textwidth]{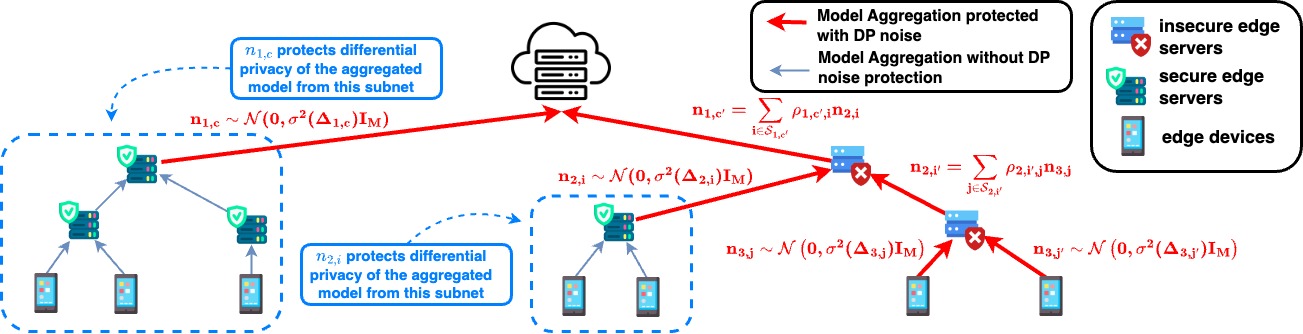}
     \caption{Illustration of how noise is injected through aggregation to ensure protection on insecure servers. During local aggregation at each layer $l$, the noise added towards the child node's model will be a linear combination of existing noises if the child node is in the set of insecure edge servers $\mathcal{N}_{U,l+1}$. Otherwise, if the child node is in the set of secure edge servers $\mathcal{N}_{T,l+1}$, a new noise that provides the target DP criterion will be generated and injected to the child node's model.}
     \label{fig:DP}
   \vspace{-0.2in}
\end{figure*}

{\small
\begin{align}\label{eq:insecure_local_aggr}
 \widetilde{w}_{l,i}^{(t,k)} &= \begin{cases}
    {w}_{l',i}^{(t,k)}, &\quad i \in \mathcal{N}_{U,l},\\
    {w}_{l',i}^{(t,k)} + n_{l',i}^{(t,k)}, &\quad i \in \mathcal{N}_{T,l},
\end{cases}\\
 w_{l',c}^{(t,k)} &= \begin{cases}
    \sum_{j\in \mathcal{S}_{l',c}}\rho_{l',c,j}\left(\widetilde{w}_j^{(t,k)} + n_{L,j}^{(t,k)}\right), \quad& l' = L-1,\\
    \sum_{i\in \mathcal{S}_{l',c}}\rho_{l',c,i}\widetilde{w}_{l'+1,i}^{(t,k)}, \quad& l' \in [l, L-2].\notag
\end{cases}
\end{align} 
}
For all edge devices $j \in \mathcal{S}_L$, the noise is sampled from:
\vspace{-1mm}
{\small
\begin{align}
    \textstyle n_{L,j}^{(t,k)} &\sim \mathcal{N}(0, \sigma^2(\Delta_{L,j})I_M),\notag\\
    \Delta_{L,j} &\overset{\Delta}{=} \max_{\mathcal{D}, \mathcal{D}'}\left\|\eta^t\sum_{k=1}^{K^t}\left(g_j^{(t,k)}(\mathcal{D}) - g_j^{(t,k)}(\mathcal{D}')\right)\right\|, \label{eq:noiseEdge}
\end{align}
}

\noindent where $\sigma(\cdot)$ is a function of $\Delta$ defined later in Proposition~\ref{prop:GM}.
On the other hand, for all intermediate nodes $i \in \mathcal{N}_{T,l'}, l' \in [l+1, L-1]$, the noise is sampled from:

\vspace{-3mm}

{\small
\begin{align}
    \textstyle n_{l',i}^{(t,k)} &\textstyle \sim \mathcal{N}(0, \sigma^2(\Delta_{l',j})I_M)\notag\\
    \textstyle \Delta_{l',i} &\textstyle \overset{\Delta}{=} \max_{\mathcal{D}, \mathcal{D}'}\Big\|\eta^t\sum_{k=1}^{K^t}\sum_{c_{l'+1}\in \mathcal{S}_{l',i}}\rho_{l, i, c_{l'+1}} \ldots \label{eq:noise} \\
    &\textstyle \sum_{j\in \mathcal{S}_{L-1, c_{L-1}}} \rho_{L-1, c_{L-1}, j} \left(g_j^{(t,k)}(\mathcal{D}) - g_j^{(t,k)}(\mathcal{D}')\right)\Big\|.\notag
    \end{align}
}

\noindent Examples of such noise additions are illustrated in Fig.~\ref{fig:DP}.

Finally, after computing the local aggregated model, node $c$ at layer $l$ broadcasts ${{{w}}}_{l,c}^{(t,k)}$ across its children, continuing sequentially to the edge. Formally, letting $d_{l,j} \in \mathcal{S}_l$ be the ancestor node of edge device $j \in \mathcal{S}_L$ at layer $l$, the devices will synchronize their local models as \begin{equation}
    {{{w}}}_j^{(t,k)} = {w}_{l, d_{l,j}}^{(t,k)}, \quad\forall j\in\mathcal{S}_L.
\end{equation} 

\textbf{Global model aggregation with DP noise injection}: 
Global aggregation occurs at the end of each training interval $K^t$. First, an intermediate aggregation at layer $l = 1$ is conducted through the procedure in \eqref{eq:secure_local_aggr}, \eqref{eq:insecure_local_aggr}. Then, for the aggregation at the main server, all models from intermediate nodes of layer $l = 1$ will be protected due to the assumption that the cloud is insecure.

For secure nodes $c \in \mathcal{N}_{T,1}$, noise $n_{1, c}^{(t,K^t+1)} \sim \mathcal{N}(0, \sigma^2(\Delta_{1,c})I_M)$ will be added as in~\eqref{eq:noise}. For insecure nodes $c \in \mathcal{N}_{U,1}$, no noise will be added since noises have already been attached during intermediate aggregations. Thus,
\begin{align}
\label{eq:glob_aggr_3}
\widetilde{w}_{1,c}^{(t, K^t+1)} &= \begin{cases}
        {w}_{1,c}^{(t, K^t+1)}, &\quad c \in \mathcal{N}_{U,1},\\
        {w}_{1,c}^{(t, K^t+1)} + n_{1,c}^{(t,K^t+1)}, &\quad c \in \mathcal{N}_{T,1},
    \end{cases}\notag\\
    w^{(t+1)} &= \sum_{c = 1}^{N_1} \rho_{c} \widetilde{w}_{1,c}^{(t, K^t+1)}.
\end{align}
Upon completion of the calculations at the main server, the resulting global model ${ w}^{(t+1)}$ is employed to synchronize the local models maintained by the edge devices, i.e., ${{{w}}}_i^{(t+1,1)} = { w}^{(t+1)} \; \forall i \in \mathcal{S}_L$.

Finally, the resulting global model ${w}^{(t+1)}$ is employed to synchronize the local models across all edge devices, i.e., ${{{w}}}_j^{(t+1,1)} = { w}^{(t+1)}, \forall j \in \mathcal{S}_L$.

\vspace{-0.10in}
\subsection{DP Mechanisms}\label{ssec:DP_main}
We now formalize the procedure for configuring the DP noise variables. In this study, we focus on the Gaussian mechanisms from Sec.~\ref{ssec:DP}, though {\tt M$^2$FDP} can be adjusted to accommodate other DP mechanisms too. 
 
Following the composition rule of DP~\cite{Dwork2014DP}, we aim to take into account the privacy budget \textit{across all aggregations throughout the training}. This will ensure cumulative privacy for the complete model training process, rather than considering each individual aggregation in isolation~\cite{Poor2020LDP,Shen2022imp}. Below, we define the Gaussian mechanisms, incorporating the moment accountant technique \cite{Shi2021HDP,Zhou2023HDL}.
These mechanisms utilize Assumption~\ref{assump:genLoss} which is stated in Sec.~\ref{sec:convAnalysis}.
\begin{proposition}[Gaussian Mechanism~\cite{Abadi2019MA}]\label{prop:GM}
    Under Assumption~\ref{assump:genLoss}, there exists constants $c_1$ and $\alpha_l$ such that given the data sampling probability $q$ at each device, and \textcolor{black}{the total number of aggregations $T$ conducted during the model training process, for any $\epsilon<c_1qT$},  {\tt M$^2$FDP} exhibits $(\epsilon,\delta)$-differential privacy for any $\delta>0$, so long as the DP noise follows $\mathbf n_{DP}^{(t,k)}\sim \mathcal{N}(0, \sigma^2(\Delta_{l,c})\mathbf{I}_M)$, where
    \begin{align}
        \sigma(\Delta_{l,c}) = \alpha_l\frac{q\Delta_{l,c}\sqrt{\textcolor{black}{T}\log(1/\delta)}}{\epsilon}.
    \end{align}
    Here, $\Delta_{l,c}$ represent the $L_2$-norm sensitivity of the gradients exchanged during the aggregations towards a node $c \in \mathcal{S}_l$.
\end{proposition}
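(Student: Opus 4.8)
The plan is to establish the claim via the moments accountant framework of \cite{Abadi2019MA}, adapted to the hierarchical noise-injection structure of {\tt M$^2$FDP}. The core object is the privacy loss random variable associated with a single subsampled Gaussian mechanism, together with its log moment generating function $\alpha_{\mathcal M}(\lambda) = \log \mathbb{E}[\exp(\lambda \cdot \text{privacy loss})]$. The argument rests on three pillars from \cite{Abadi2019MA}: (i) the per-step moment bound for a Gaussian mechanism with subsampling at rate $q$, which in the regime $\lambda \leq \sigma^2 \log(1/(q\sigma))$ scales as $\alpha(\lambda) \lesssim q^2 \lambda(\lambda+1)/\sigma^2$; (ii) the composability of moments, namely that the moment of a composition of adaptively chosen mechanisms is at most the sum of their individual moments; and (iii) the tail bound converting moments back into a DP guarantee, $\delta = \min_\lambda \exp(\alpha_{\text{total}}(\lambda) - \lambda \epsilon)$.

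First I would fix an edge device $j \in \mathcal{S}_L$ and identify, for each global round $t$, the single aggregation event in which that device's mini-batch gradient is released --- the $K^t$ local steps are folded into one cumulative-sensitivity term $\Delta_{L,j}$ via \eqref{eq:noiseEdge}, so there is exactly one subsampled-Gaussian release per global aggregation. Bounding $\Delta_{L,j}$ using the bounded-gradient condition in Assumption~\ref{assump:genLoss} calibrates the noise variance $\sigma^2(\Delta_{L,j})$ to the sensitivity. Summing the per-step moment bound over the $T$ global aggregations gives a total moment of order $q^2 T \lambda^2 / \sigma^2$; plugging this into the tail bound and optimizing over $\lambda$ yields the requirement $\sigma \gtrsim q\Delta\sqrt{T\log(1/\delta)}/\epsilon$, valid precisely in the regime $\epsilon < c_1 q T$ (which keeps the optimizing $\lambda$ within the admissible range of the per-step moment bound). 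This recovers the stated form of $\sigma(\Delta_{l,c})$ at the edge-device layer.

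Next I would lift this single-device argument to an arbitrary node $c \in \mathcal{S}_l$ and an arbitrary honest-but-curious observer in the hierarchy. The key structural observation is that, by the threat model, the dependence of any model seen at or above layer $l$ on a given device's data is mediated by the aggregation tree, so the relevant sensitivity is $\Delta_{l,c}$ from \eqref{eq:noise}, which telescopes the per-device sensitivities through the aggregation weights $\rho$. Since noise is either freshly injected at a trusted child or inherited from below at an untrusted child, I would verify that along every edge-to-observer path at least one calibrated noise term of the correct variance is present, so the same subsampled-Gaussian moment bound applies with sensitivity $\Delta_{l,c}$. The layer index enters only through the constant $\alpha_l$, which absorbs the path-dependent accumulation of aggregation weights and the number of composed releases visible at layer $l$; this yields the layer-dependent prefactor in $\sigma(\Delta_{l,c})$.

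The main obstacle I anticipate is pillar (ii) applied to the \emph{coupled, heterogeneously trusted} hierarchy: one must show that the moments composed along a given aggregation path neither double-count noise nor omit it, and that inherited-noise terms at untrusted nodes compose correctly with freshly injected noise at trusted nodes, so that every node's view simultaneously meets the target $(\epsilon,\delta)$ budget. Establishing that a worst-case path --- the longest untrusted chain feeding into node $c$ --- dominates the accounting, and that this path determines $\alpha_l$, is the delicate step; the remaining work (sensitivity bounding via Assumption~\ref{assump:genLoss} and the $\lambda$-optimization) is routine once the composition bookkeeping is settled.
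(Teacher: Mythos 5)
Your proposal is correct and follows essentially the same route as the paper: the base $(\epsilon,\delta)$ guarantee with $\sigma \propto q\Delta\sqrt{T\log(1/\delta)}/\epsilon$ is obtained exactly as you describe, via the subsampled-Gaussian moment bound, composition over the $T$ aggregations, and the tail-bound optimization from~\cite{Abadi2019MA} (which the paper imports rather than re-derives), and your ``lifting'' step is precisely the paper's Appendix~\ref{app:privacy_analysis} case analysis, which checks that at every insecure node the inherited or freshly injected noise has variance at least $\sigma^2(\Delta_{l,c})$ and fixes $\alpha_l$ accordingly (e.g., $\alpha_l=\alpha_{l+1}$ when all children are trusted, and $\alpha_l=\sqrt{\sum_{l'}\phi_{l'}^2\alpha_{l'}^2}$ in the mixed case). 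The composition bookkeeping you flag as the delicate step is exactly what that three-case argument settles.
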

The characteristics of the DP noises introduced during local and global aggregations can be established using Proposition~\ref{prop:GM}. The relevant $L_2$-norm sensitivities can be established in the following lemma.
\begin{lemma}\label{lem:DeltaM}
    Under Assumption~\ref{assump:genLoss}, the $L_2$-norm sensitivity of the exchanged gradients during local and global aggregations can be obtained as:

\begin{itemize}
    \item For any $l \in [1, L-1]$, and any given node $c_l \in \mathcal{S}_l$, the sensitivity is bounded by
    \begin{align}\label{eq:locAgg_egd_sens1}
        &\Delta_{l, c_l} = \max_{\mathcal{D}, \mathcal{D}'}\Bigg\|\eta^t\sum_{k=1}^{K^t}\sum_{c_{l+1}\in \mathcal{S}_{l,c_l}}\rho_{l, c_l, c_{l+1}} \ldots \notag\\
        &\sum_{j\in \mathcal{S}_{L-1, c_{L-1}}} \rho_{L-1, c_{L-1}, j} \left(g_j^{(t,k)}(\mathcal{D}) - g_j^{(t,k)}(\mathcal{D}')\right)\Bigg\|\notag\\
        &\leq \frac{2\eta^t K^t G}{\prod_{l'=l}^{L-1} s_{l'}}, \quad \forall l \in [1, L-1]
    \end{align}
    \item For any given node $j \in \mathcal{S}_L$, the sensitivity is bounded by
    \begin{align}\label{eq:locAgg_egd_sens2}
        &\Delta_{L,j} = \max_{\mathcal{D}, \mathcal{D}'}\left\|\eta^t\sum_{k=1}^{K^t}\left(g_j^{(t,k)}(\mathcal{D}) - g_j^{(t,k)}(\mathcal{D}')\right)\right\| \notag\\&\leq 2\eta^t K^t G
    \end{align}
\end{itemize}
    \vspace{-0.1in}
\end{lemma}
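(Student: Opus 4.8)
The plan is to reduce both sensitivity bounds to the elementary bounded-gradient condition supplied by Assumption~\ref{assump:genLoss} (which furnishes $\|g_j^{(t,k)}\| \le G$), and then to exploit the definition of adjacency together with the tree structure of the aggregation weights $\rho_{l,c,i} = 1/|\mathcal{S}_{l,c}|$. The key structural observation, used throughout, is that two adjacent datasets $\mathcal{D}, \mathcal{D}'$ differ in a single data point, which resides on exactly one edge device $j^\star$; consequently the gradient-difference terms for every unaffected device vanish, since those devices' local data and hence their stochastic gradients are identical under $\mathcal{D}$ and $\mathcal{D}'$, leaving only the term associated with $j^\star$.

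For the edge-device sensitivity \eqref{eq:locAgg_egd_sens2}, I would simply pull the norm inside the sum over $k$ by the triangle inequality, giving $\Delta_{L,j} \le \eta^t \sum_{k=1}^{K^t} \|g_j^{(t,k)}(\mathcal{D}) - g_j^{(t,k)}(\mathcal{D}')\|$, and then bound each summand by $\|g_j^{(t,k)}(\mathcal{D})\| + \|g_j^{(t,k)}(\mathcal{D}')\| \le 2G$ via Assumption~\ref{assump:genLoss}. Summing the $K^t$ resulting terms yields $\Delta_{L,j} \le 2\eta^t K^t G$, uniformly over adjacent pairs, hence also for the maximizing pair.

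For the intermediate-node sensitivity \eqref{eq:locAgg_egd_sens1}, the argument is the same in spirit but must account for the nested aggregation sums. After applying the triangle inequality to move the norm inside all the summations, I would invoke the single-device observation above: only the term indexed by the affected device $j^\star$ survives, and it survives only when $j^\star$ is a descendant of $c_l$ (otherwise $\Delta_{l,c_l} = 0$). The nested sum then collapses to the product of weights along the unique path from $c_l$ down to $j^\star$, namely $\prod_{l'=l}^{L-1} \rho_{l', c_{l'}^\star, c_{l'+1}^\star} = \prod_{l'=l}^{L-1} 1/|\mathcal{S}_{l', c_{l'}^\star}|$, multiplied by $\eta^t \sum_{k=1}^{K^t} \|g_{j^\star}^{(t,k)}(\mathcal{D}) - g_{j^\star}^{(t,k)}(\mathcal{D}')\| \le 2\eta^t K^t G$. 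Bounding each factor $1/|\mathcal{S}_{l', c_{l'}^\star}| \le 1/s_{l'}$ through the definition of $s_{l'}$ as the minimum child-set cardinality in layer $l'$ then gives the claimed bound $2\eta^t K^t G / \prod_{l'=l}^{L-1} s_{l'}$.

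The main obstacle I anticipate is the bookkeeping for the first part: one must argue carefully that the sensitivity localizes to a single root-to-leaf path rather than naively bounding every gradient difference by $2G$. The naive route would make the full nested weight sum telescope to $1$, since $\sum_{i \in \mathcal{S}_{l',c}} \rho_{l',c,i} = 1$ at each layer, and would yield only the loose bound $2\eta^t K^t G$, losing the crucial $\prod_{l'=l}^{L-1} s_{l'}$ denominator. Obtaining the tighter factor hinges entirely on the adjacency assumption confining the perturbation to one device, so this localization step, and its interaction with the pairwise-disjoint subtree structure of the sets $\mathcal{S}_{l,c}$, is where the care is required.
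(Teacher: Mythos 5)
Your proposal is correct and follows essentially the same route as the paper's proof: both localize the gradient difference to the single device affected by the adjacency of $\mathcal{D}$ and $\mathcal{D}'$, collapse the nested aggregation sums to the product of weights along the unique path (each bounded by $1/s_{l'}$), and finish with the triangle inequality and the bound $\|g_j^{(t,k)}\|\le G$. Your discussion of why the naive telescoping bound loses the $\prod_{l'=l}^{L-1} s_{l'}$ factor is in fact more explicit than the paper's rather terse treatment of that step.
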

Then, by defining $K^{\max} = \max_{t\in [1,T]}K^t$ to be the largest local training interval throughout the whole training process, the variance of the generated Gaussian noises
${\sigma^2}(\Delta_{l,c})$ can be determined based on Proposition~\ref{prop:GM}.

\vspace{-0.10in}
\section{Convergence Analysis} \label{sec:convAnalysis}
\noindent In this section, we present our analysis of the convergence behavior of {\color{black}{\tt M$^2$FDP}}. The main results are given in Sec.~\ref{ssec:convAvg}, and we provide a sketch proof of these results in Sec.~\ref{ssec:sketch}. \textbf{The full proofs are available in the supplementary material.}

\vspace{-0.10in}
\subsection{Analysis Assumptions and Quantities}
We first establish a few assumptions commonly employed in the literature that we will consider throughout our analysis.

\begin{assumption}[Characteristics of Noise in SGD~\cite{lin2021timescale,Shen2022imp,Zhang2022AS,reddi2021adaptive}] \label{assump:SGD_noise}
    Consider ${\mathbf n}_{i}^{(t)}=\widehat{\mathbf g}_{i}^{(t)}-\nabla F_i(\mathbf w_{i}^{(t)})$ as the noise of the gradient estimate through the SGD process for device $i$ at time $t$. The noise variance is upper bounded by $\sigma^2 > 0$, i.e., $\mathbb{E}_t[\Vert{\mathbf n}_{i}^{(t)}\Vert^2]\leq \sigma^2~\forall i,t$.
\end{assumption}

\begin{assumption}[General Characteristics of Loss Functions \cite{Shen2022imp}, \cite{Zhang2022AS}, \cite{reddi2021adaptive}]\label{assump:genLoss} 
Assumptions applied to loss functions include:
1) The stochastic gradient norm of the loss function $\ell(\cdot)$ is bounded by a constant $G$, i.e.,  
    $\Vert{ g}_{i}^{(t)}\Vert \leq G, ~\forall i, t.$
      2) Each local loss $F_i$ is $\beta$-smooth $\forall i\in\mathcal{I}$, i.e., 
    $\resizebox{1.\linewidth}{!}{$
        \Vert \nabla F_i(\mathbf w_1)-\nabla F_i(\mathbf w_2)\Vert \leq \beta\Vert \mathbf w_1-\mathbf w_2 \Vert, ~\forall \mathbf w_1, \mathbf w_2 \in \mathbb{R}^M.
        $}\hspace{-1.mm}$
This implies $\beta$-smoothness of $\bar{F}_c$ and $F$ as well. 
\end{assumption}
{\color{black}
Boundedness of the stochastic gradient norm is a commonly adopted assumption in the literature on DP for FL~\cite{Shen2022imp,geyer2017differentially} and DP for machine learning more generally~\cite{kang2014adaptive}. It is necessary for computation of the Gaussian noise variance that ensures $(\epsilon,\delta)$-DP addition based on the $L_2$-sensitivity of the gradient in Proposition~\ref{prop:GM}. In practice, when faced with non-smooth loss functions (stemming from e.g., choice of activation in neural networks), this boundedness can be enforced via gradient clipping techniques~\cite{mcmahan2017learning}. Gradient clipping is widely adopted in private learning algorithms to ensure that gradients remain within a fixed range before noise injection.}

Additionally, we establish a few quantities to facilitate our analysis. First, note that during aggregation at an insecure intermediate node, the level of incoming noise depends on whether the models come from secure or insecure children nodes. To capture this, we define $p_{l,c}$ to be the ratio of children for parent node $c \in \mathcal{N}_{U,l-1}$ that are secure servers, i.e.,
$\textstyle p_{l,c} \overset{\Delta}{=} \frac{|\mathcal{N}_{T, l} \cap \mathcal{S}_{l-1, c}|}{|\mathcal{S}_{l-1, c}|},  l \in [1, L-1]$. We then define the maximum and minimum ratio on each layer $l \in [1, L-1]$ as 
\begin{align}
    &p_l^{\max} = \max_{c \in \mathcal{N}_{U,l-1}} p_{l,c}\\
    &p_l^{\min} = \min_{c \in \mathcal{N}_{U,l-1}} p_{l,c}.
\end{align}
\textcolor{black}{For the main server, we define $p_0^{\max} = p_0^{\min} = 0$ if the server is insecure and requires DP protection, and let $p_0^{\max} = p_0^{\min} = 1$ if the server is secure. For edge devices, since no parameters will be aggregated towards those devices, we set $p_L^{\max} = p_L^{\min} = 1$.}

Finally, we define $s_l = \min_{c\in \mathcal{S}_{l}} |S_{l,c}|$ to be the size of the smallest set of child nodes out of all intermediate nodes $c \in \mathcal{S}_l$.




\vspace{-0.10in}
\subsection{General Convergence Behavior of {\tt M$^2$FDP}}
\label{ssec:convAvg}
We now present our main theoretical result, that the cumulative average of the global loss gradient can attain sublinear convergence to a controllable region around a stationary point under non-convex problems.

\begin{theorem}(\textbf{Non-Convex}) \label{thm:noncvx}
        Under Assumptions~\ref{assump:SGD_noise} and \ref{assump:genLoss}, if $\eta^t=\frac{\gamma}{\sqrt{t+1}}$ with $\gamma\leq\min\{\frac{1}{K^{\max}},\frac{1}{T}\}/\beta$, the cumulative average of global gradient satisfies
\begin{align} \label{eq:noncvx_rate}
\small
        &\frac{1}{T}\sum_{t=1}^T \mathbb{E}\left\|\nabla F(w^{(t)})\right\|^2 \leq \underbrace{\frac{2\beta}{\sqrt{T+1}} \mathbb{E}\left[ F(w^{(1)}) -  F(w^{(T+1)})\right]}_\textrm{(a$_1$)} \notag\\
        &+ \underbrace{\frac{K^{\max}\left(G^2\left(1 + \frac{1}{\beta}\right) + \sigma^2\right)}{T}}_\textrm{(a$_2$)}\notag\\
        &+ \underbrace{\frac{8 L M (K^{\max})^4 q^2\log(1/\delta)}{\epsilon^2}\sum_{l=1}^{L}(1 - p_{l-1}^{\min})^2 \bigg(\mathcal{A}_l + \mathcal{B}_l + \mathcal{C}_l\bigg)}_\textrm{(b)},\notag\\
\end{align}
where
\begin{align}
 \label{eq:ABC_terms}
        \mathcal{A}_l &=  p_{l}^{\max}\frac{\alpha_l^2}{\prod_{l' = l}^{L-1} s_{l'}^2}, \notag\\
        \mathcal{B}_l & = \sum_{m = l}^{L-1} \left(\prod_{l' = l}^{m-1}(1 - p_{l'}^{\min})p_{m}^{\max}\right)\frac{\alpha_m^2}{\prod_{l' = l}^{ m-1} s_{l'}\prod_{l'' = m}^{ L-1} s_{l''}^2}, \notag\\
        \mathcal{C}_l &= \prod_{l' = l}^{ L-1}(1 - p_{l'}^{\min})\frac{\alpha_L^2}{\prod_{l' = l}^{ L-1} s_{l'}}.
\end{align}
\end{theorem}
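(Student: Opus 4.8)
The plan is to run a non-convex descent analysis on the global model $w^{(t)}$, with the one-round update written so that the hierarchical DP noise is isolated and then evaluated via Proposition~\ref{prop:GM} and Lemma~\ref{lem:DeltaM}. First I would apply $\beta$-smoothness of $F$ (Assumption~\ref{assump:genLoss}) to obtain the per-round descent inequality $F(w^{(t+1)}) \le F(w^{(t)}) + \langle \nabla F(w^{(t)}), w^{(t+1)} - w^{(t)}\rangle + \tfrac{\beta}{2}\|w^{(t+1)} - w^{(t)}\|^2$. The next step is to unroll the training and aggregation rules \eqref{eq:SGD}, \eqref{eq:secure_local_aggr}, \eqref{eq:insecure_local_aggr}, and \eqref{eq:glob_aggr_3} to express the net displacement $w^{(t+1)} - w^{(t)}$ as a weighted combination of the $K^t$ local stochastic gradients plus a single aggregated noise vector $\bar{n}^{(t)}$ that collects every $n_{l,c}^{(t,k)}$ injected at any tier, each scaled by the product of aggregation weights along its path to the cloud. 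Taking conditional expectations, the unbiasedness of the SGD gradients and the zero-mean property of the Gaussian noise cancel all cross terms, leaving a deterministic descent contribution, an SGD-variance contribution, and the DP-variance contribution $\tfrac{\beta}{2}\mathbb{E}\|\bar{n}^{(t)}\|^2$.

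I would then convert the inner-product term into a negative multiple of $\|\nabla F(w^{(t)})\|^2$. Since the local models $w_j^{(t,k)}$ drift from $w^{(t)}$ within a round, I would add and subtract $\nabla F(w^{(t)})$ and bound $\|\nabla F_j(w_j^{(t,k)}) - \nabla F(w^{(t)})\| \le \beta\|w_j^{(t,k)} - w^{(t)}\|$, controlling the drift by accumulating the bounded gradients ($\|g_j^{(t,k)}\| \le G$ from Assumption~\ref{assump:genLoss}) and the intermediate DP noise over the at most $K^{\max}$ local steps. This produces term $(\mathrm{a}_2)$, where $G^2(1+1/\beta)$ aggregates the gradient-magnitude and smoothness effects of the drift and $\sigma^2$ carries the SGD noise (Assumption~\ref{assump:SGD_noise}); the squared-displacement term splits analogously into a gradient-signal piece, an SGD-variance piece, and the DP piece.

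The crux is evaluating $\mathbb{E}\|\bar{n}^{(t)}\|^2$ and organizing it into the form \eqref{eq:ABC_terms}. Here I would exploit the recursive trust structure: at an insecure parent, a secure child contributes a freshly injected noise calibrated through Proposition~\ref{prop:GM} and the sensitivity bounds of Lemma~\ref{lem:DeltaM} to protect the aggregate reaching that parent, while an insecure child merely forwards the noise already present in its model. Tracing each independent noise vector from its injection layer to the cloud, its variance is scaled by $\alpha^2/\epsilon^2$, by a squared sensitivity dilution $1/\prod s_{l'}^2$ accrued up to the injection layer, and by a further $1/\prod s_{l'}$ from averaging independent noises over the remaining path, while the ratios $p^{\max}$ (fresh injection at a secure fraction) and $\prod(1-p^{\min})$ (propagation along insecure paths) weight each term. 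Collecting the noise injected at layer $l$ gives $\mathcal{A}_l$, at intermediate layers $m \in [l+1, L-1]$ gives the sum $\mathcal{B}_l$, and at the edge devices gives $\mathcal{C}_l$; summing over contributing layers with the outer $(1-p_{l-1}^{\min})^2$ weight yields term $(\mathrm{b})$. I expect this bookkeeping --- matching each path's trust factors with the correct powers of $s_{l'}$ (squared for sensitivity calibration, first power for independent-noise averaging) --- to be the main obstacle.

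Finally, I would rearrange the per-round inequality to isolate $\tfrac{\eta^t K^t}{2}\|\nabla F(w^{(t)})\|^2$, sum over $t = 1, \dots, T$ so the $F(w^{(t)})$ differences telescope into $F(w^{(1)}) - F(w^{(T+1)})$, and normalize by $T$. The choice $\eta^t = \gamma/\sqrt{t+1}$ gives the $1/\sqrt{T+1}$ rate of term $(\mathrm{a}_1)$ after bounding $1/\eta^t$ by its terminal value. The two constraints on $\gamma$ serve distinct purposes: $\gamma \le 1/(K^{\max}\beta)$ ensures the second-order smoothness term stays dominated by the first-order descent, the standard local-SGD stability requirement; while $\gamma \le 1/(T\beta)$ is what tames the factor $T$ that the moment accountant of Proposition~\ref{prop:GM} places in the DP variance --- since $(\eta^t)^2 T = \eta^t(\eta^t T) \le \eta^t/\beta$, this factor collapses after dividing by $\eta^t$, leaving term $(\mathrm{b})$ as a finite, $T$-independent stationarity gap rather than a diverging quantity.
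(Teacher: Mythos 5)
Your proposal is correct and follows essentially the same route as the paper: a $\beta$-smoothness descent step on the global update written as gradient displacement plus an aggregated zero-mean DP noise vector, the polarization identity plus a drift bound to extract $-\|\nabla F(w^{(t)})\|^2$, a layer-by-layer variance accounting of the propagated noise (the paper's Lemma~\ref{lem:bd_on_DP_noise_main}) yielding exactly the $\mathcal{A}_l,\mathcal{B}_l,\mathcal{C}_l$ split, and a telescoping sum under the step-size conditions. Your reading of the two constraints on $\gamma$ — local-SGD stability from $\gamma\le 1/(K^{\max}\beta)$ and the collapse of the moment-accountant factor $T$ via $(\eta^t)^2 T\le \eta^t/\beta$ — matches the paper's argument precisely.
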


\begin{figure*}[t]
{\small
    \begin{align} \label{eq:ld_ncx}
    &\mathbb{E}\left[ F(w^{(t+1)}) -  F(w^{(t)})\right] \leq \frac{(\eta^tK^t)^2\beta \sigma^2}{2} \underbrace{- \eta^t \sum_{k=1}^{K^t} \mathbb{E}\left\langle \nabla F(w^{(t)}),  \sum_{c_1 = 1}^{N_1} \rho_{c_1} \ldots\sum_{j\in\mathcal{S}_{L-1,c_{L-1}}} \rho_{L-1, c_{L-1},j} \nabla F_j(w_j^{(t,k)})\right\rangle}_\text{(a)} \notag\\
    &+ \underbrace{\frac{(\eta^t)^2 \beta}{2} \mathbb{E}\left[\left\|\sum_{k=1}^{K^t} \sum_{c_1 = 1}^{N_1} \rho_{c_1} \ldots\sum_{j\in\mathcal{S}_{L-1,c_{L-1}}} \rho_{L-1, c_{L-1},j} \nabla F_j(w_j^{(t,k)})\right\|^2\right]}_\text{(b)} + \underbrace{\frac{\beta}{2}\mathbb{E}\left[\left\|n_{DP}^t\right\|^2\right]}_\text{(c)}. 
\end{align}}
\hrule
\vspace{-0.2in}
\end{figure*}


\textcolor{black}{As suggested by Proposition~\ref{prop:GM} and Lemma~\ref{lem:DeltaM}, the variance of the DP noise inserted should scale with the total number of global aggregations $T$ and the number of layers $L$. Additionally, it is also related to the parameter dimension $M$, maximum local update iterations $K^{\max}$, data sampling probability $q$, and privacy guarantee $(\epsilon, \delta)$.} To counterbalance the influence of DP noise accumulation over successive aggregations, we enforce the step-size condition $\eta^t\leq 1/T$ in Theorem~\ref{thm:noncvx} to scale down the DP noise addition by a factor of $T$. \textcolor{black}{As a result, when the number of global aggregations ($T$) increases, terms ($a_1$) and ($a_2$) in~\eqref{eq:noncvx_rate} decrease, while the impact of DP noise in term (b) remains constant.} This strategy steers the bound in~\eqref{eq:noncvx_rate} towards the region denoted by (b). At the same time, it highlights a delicate balance between privacy preservation and training performance: although the condition $\eta^t\leq1/T$ reduces DP noise, it results in a smaller learning rate, which slows {\tt M$^2$FDP} training.

\textcolor{black}{Term (b) in Theorem~\ref{thm:noncvx} emphasizes a key difference from convergence results in conventional FL and HFL: the DP mechanism in in {\tt M$^2$FDP} induces a constant error on the cumulative gradient average. Unlike the gradient noise $\sigma^2$, whose effect is controllable through proper learning-rate selection, the DP noise is added after updates are computed, and thus has a lasting effect that does not diminish with $T$. Similar error terms are seen in prior works on DP-infused FL~\cite{Shen2022imp,Poor2020LDP}}.

In addition, term (b) conveys the beneficial influence of secure intermediate nodes. The groupings $\mathcal{A}_l$, $\mathcal{B}_l$ (for networks with total layers $L > 1$), and $\mathcal{C}_l$ (for network layers $L \geq 1$) break down three differing effects of noise injection on local aggregations at layer $l$. {\color{black}$\mathcal{A}_l$ represents the part of the hierarchy where no noises have been injected up to the aggregation at layer $l$, i.e., the secure branches of the tree. The noise level introduced during aggregations here is reduced by a factor of $1/\prod_{l' = l}^{L-1} s_{l'}^2$. This is in contrast to the factor of $1/\prod_{l' = l}^{L-1} s_{l'}$ for term $\mathcal{C}_l$, which represents the part of the hierarchy where noises were injected all the way down at the edge devices in $\mathcal{S}_L$. This noise reduction of an additional factor of $\prod_{l' = l}^{L-1} s_{l'}$ underscores the rationale behind {\tt M$^2$-FDP}'s integration of MDP with MFL, allowing for an effective reduction in the requisite DP noise for preserving a given privacy level.}

\textcolor{black}{The term $\mathcal{B}_l$ in the bound represents the part of the hierarchy where the noises are injected at some layer between the edge devices and the local aggregation layer $l$. Index $m$ captures the case where noise has been injected at layer $m-1$. Observe that the factor $1/\prod_{l' = l}^{m-1} s_{l'}\prod_{l'' = m}^{ L-1} s_{l''}^2$ can be split into two parts: (i) the effect of all layers below $m$, having the same factor $\prod_{l'' = m}^{ L-1} s_{l''}^2 $ as in $\mathcal{A}_l$; and (ii) the effect of all layers above $m$, have the same factor $\prod_{l' = l}^{ m-1} s_{l'}$ as in $\mathcal{C}_l$.} Thus, overall, we see that \textit{in the trade-off between privacy protection and training accuracy for an HFL system, noises injected at lower hierarchy layers inflict a proportionally larger harm to the accuracy.}

\textcolor{black}{\textbf{Special cases of DP-infused FL:} Theorem~\ref{thm:noncvx} can be used to analyze existing DP-infused FL baselines as special cases: setting $L=1$ and $p_0=0$ reduces {\tt M$^2$FDP} to LDP~\cite{Wainakh2020HLDP}, while $L=2$ and $p_0=p_1=0$ reduces it to HDP~\cite{chandrasekaran2024hierarchical}. In both cases, the key difference manifests in term (b), which becomes $8 M (K^{\max})^4 q^2\log(1/\delta)\alpha_1^2 / \epsilon^2$ for LDP and $16 L M (K^{\max})^4 q^2\log(1/\delta)\alpha_2^2 / (\epsilon^2s_1)$ for HDP; notably, without the sharper decrease according to products of $s_{l}$ and $s_{l}^2$ across layers $l$. By contrast, {\tt M$^2$FDP} achieves the $\mathcal{O}(1/\sqrt{T})$ rate of the FedAvg-style methods, but with a strictly smaller DP-induced error gap, manifesting from noise injection according to heterogeneous trust models. Our experimental results in Sec.~\ref{sec:experiments} will confirm this advantage, showing {\tt M$^2$FDP} consistently reaches target accuracies in fewer rounds under the same privacy budget.}

\textcolor{black}{\textbf{Recovering FL and HFL:} If all servers are secure ($p_l^{\min}=p_l^{\max}=1$ for all $l$), then no DP noise is required at any aggregation step of {\tt M$^2$FDP}, and the constant term (b) in Theorem~\ref{thm:noncvx} disappears completely. Under this condition, our algorithm reduces to Hierarchical FedAvg (HFL). In the special case $L=1$ with a secure server, our results recover conventional FedAvg (FL) without any constant error term. On the other hand, when $L=1$ and the server is insecure, the edge devices must inject DP noise. The effect is captured by the $\mathcal{C}_1$ term in (b) of Theorem~\ref{thm:noncvx}, whereas $\mathcal{A}_1$ and $\mathcal{B}_1$ vanish, reflecting the fact that the device-side noise injection creates the residual constant. This behavior aligns with prior analysis of single-layer DP-FL (e.g., the $\phi$ term in the analysis of~\cite{Shen2022imp}).}

The following corollary of Theorem~\ref{thm:noncvx} more intuitively captures the effect of trustworthiness up to a particular layer:

\begin{corollary} \label{cor:noncvx}
    Under Assumptions~\ref{assump:SGD_noise} and \ref{assump:genLoss}, if $\eta^t=\frac{\gamma}{\sqrt{t+1}}$ with $\gamma\leq\min\{\frac{1}{K^{\max}},\frac{1}{T}\}/\beta$, letting $m$ be the lowest layer with an insecure intermediate node (i.e., $ p_{l'}^{\max} = p_{l'}^{\min} = 1 \; \forall l' \in [m+1 , L]$), then the cumulative average global gradient satisfies
    
    \vspace{-3mm}
{
    \begin{align} \label{eq:cor_noncvx}
        &\textstyle\frac{1}{T}\sum_{t=1}^T \mathbb{E}\left\|\nabla F(w^{(t)})\right\|^2 \leq \textstyle\frac{2\beta \mathbb{E} F(w^{(1)})}{\sqrt{T+1}}+ \frac{K^{\max}\left(G^2\left(1 + \frac{1}{\beta}\right) + \sigma^2\right)}{T}\notag\\
        &\textstyle+ \frac{8 L M (K^{\max})^4 q^2\log(\frac{1}{\delta})}{\epsilon^2}\sum_{l=1}^{m}  \frac{(1 - p_{l-1}^{\min})^2p_l^{\max}\alpha_l^2}{\prod_{l'=l}^{L-1}s_{l'}^2} \\
        &\textstyle+ \underbrace{\textstyle\frac{8 L M (K^{\max})^4 q^2\log(\frac{1}{\delta})}{\epsilon^2}\sum_{l=1}^{m} \frac{(1 - p_{l-1}^{\min})^2(1 - p_l^{\min})(\alpha_l')^2}{\prod_{l'=l}^{m}s_{l'}\prod_{l''=m+1}^{L-1}s_{l''}^2}}_\textrm{(c)}.\notag
\end{align}
}
\vspace{-4mm}
\end{corollary}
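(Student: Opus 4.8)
The plan is to treat Corollary~\ref{cor:noncvx} as a direct specialization of Theorem~\ref{thm:noncvx}: I would start from bound~\eqref{eq:noncvx_rate}, impose the hypothesis that layers $m+1,\dots,L$ are fully trusted, and simplify the three families $\mathcal{A}_l,\mathcal{B}_l,\mathcal{C}_l$ defined in~\eqref{eq:ABC_terms}. Terms (a$_1$) and (a$_2$) are untouched by the hypothesis, so all the work localizes to term (b). The engine of every simplification is the observation that $p_{l'}^{\max}=p_{l'}^{\min}=1$ for $l'\in[m+1,L]$ forces $1-p_{l'}^{\min}=0$ and $p_{l'}^{\max}=1$ on that range; wherever such a factor appears inside a product, the whole product collapses to zero.

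First I would prune the outer sum. The prefactor $(1-p_{l-1}^{\min})^2$ vanishes as soon as $l-1\ge m+1$, i.e. for $l\ge m+2$, so $\sum_{l=1}^{L}$ reduces to $\sum_{l=1}^{m+1}$; I would then argue that the residual $l=m+1$ contribution is either zero or dominated by the $l=m$ term of (c), leaving the stated range $[1,m]$. Next I would dispatch each family for $l\le m$: (i) in $\mathcal{C}_l$ the product $\prod_{l'=l}^{L-1}(1-p_{l'}^{\min})$ always contains a zero factor from some $l'\in[m+1,L-1]$, so $\mathcal{C}_l=0$ (the device-originated noise disappears precisely because the devices are now trusted); (ii) in $\mathcal{B}_l$ the factor $\prod_{l'=l}^{m'-1}(1-p_{l'}^{\min})$ vanishes once $m'-1\ge m+1$, so the inner sum truncates to $m'\in[l,m+1]$; and (iii) $\mathcal{A}_l$ is unaffected and supplies the first sum of the corollary, namely $\sum_{l=1}^m (1-p_{l-1}^{\min})^2 p_l^{\max}\alpha_l^2/\prod_{l'=l}^{L-1}s_{l'}^2$.

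It then remains to collapse the truncated $\mathcal{B}_l$ into term (c). Among the surviving indices $m'\in[l,m+1]$, the denominator $\prod_{l'=l}^{m'-1}s_{l'}\prod_{l''=m'}^{L-1}s_{l''}^2$ is smallest, and the summand therefore largest, at $m'=m+1$, where it equals $\prod_{l'=l}^{m}s_{l'}\prod_{l''=m+1}^{L-1}s_{l''}^2$. I would bound every term of the truncated inner sum by this dominant one, use $\prod_{l'=l}^{m}(1-p_{l'}^{\min})\le 1-p_l^{\min}$ (valid since each factor lies in $[0,1]$) to extract the clean numerator $(1-p_l^{\min})$, and fold the finitely many constants $\alpha_{l},\dots,\alpha_{m+1}$ together with the $O(m)$ count of collapsed terms into a single effective constant $\alpha_l'$. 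This reproduces the characteristic denominator of (c), $\prod_{l'=l}^{m}s_{l'}\prod_{l''=m+1}^{L-1}s_{l''}^2$: first-power reduction above the lowest insecure layer $m$ and \emph{squared} reduction across the now-trusted layers $m+1,\dots,L-1$, which is the quantitative content of the corollary.

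I expect the main obstacle to be the bookkeeping at the boundary layer $l=m+1$ and the honest definition of $\alpha_l'$. One must verify that the $m'=m+1$ denominator genuinely dominates (which uses $s_{l'}\ge 1$), that the dropped $\mathcal{A}_{m+1}$ and $\mathcal{B}_{m+1}$ contributions are absorbed rather than silently ignored, and that the degenerate case $m=L-1$ (only the devices trusted, so no fully-secure intermediate layer exists) is handled directly, since there $\mathcal{C}_l$ need not vanish. Everything else is substitution together with the monotone bound $(1-p^{\min})\in[0,1]$, which preserves the inequality direction throughout.
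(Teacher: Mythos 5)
Your proposal follows essentially the same route as the paper's proof: specialize term (b) of Theorem~\ref{thm:noncvx} under $p_{l'}^{\max}=p_{l'}^{\min}=1$ for $l'\in[m+1,L]$, kill $\mathcal{C}_l$ and truncate the inner sum of $\mathcal{B}_l$, bound the surviving terms by the one with the dominant denominator $\prod_{l'=l}^{m}s_{l'}\prod_{l''=m+1}^{L-1}s_{l''}^2$, and absorb the remaining constants into $(\alpha_l')^2$ using $\prod_{l'}(1-p_{l'}^{\min})\le 1-p_l^{\min}$. If anything, you are more explicit than the paper's appendix about the boundary bookkeeping (the $l=m+1$ contribution and the degenerate case $m=L-1$), which the paper glosses over by asserting an equality when reducing the outer sum to $\sum_{l=1}^{m}$.
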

 The magnitude of term (c) is increasing in the value of $m$, i.e., as the first layer of insecurity becomes closer to the edge. The gap when $m=m'$ is $s_{m'}$ times larger than the gap when $m=m'-1$.
 This shows how ensuring trustworthiness up to a particular layer gives a benefit of improved training performance for a given DP guarantee.

 {\color{black} More generally, to the best of our knowledge, this is the first work in multi-tier DP-FL to establish a formal convergence guarantee. Prior efforts on the hierarchical setup~\cite{Shi2021HDP,chandrasekaran2024hierarchical} focus on heuristic noise mechanisms and lack convergence analysis. Existing single-tier DP-FL results~\cite{xiong2021privacy} cannot be directly extended to our setup either due to the compounding of noise across layers and the coordination between secure and insecure nodes. Our framework addresses this by showing how trust heterogeneity across tiers directly shapes the error bound. Specifically, the constant gap $(b)$ in Theorem~\ref{thm:noncvx} depends on the proportion of trustworthy versus untrustworthy nodes: updates routed through secure servers fall into $\mathcal{A}_l$ terms, which decrease the error according to $s_{l'}^2$, while those from insecure devices fall into $\mathcal{C}_l$ terms, which decrease less rapidly in $s_{l'}$. Intermediate cases ($\mathcal{B}_l$) interpolate these effects.}

{\color{black} Finally, we remark that with the implementation of {\tt M$^2$FDP}, the DP guarantees in Proposition~\ref{prop:GM} hold true for all nodes in the network. We provide a formal privacy analysis demonstrating this in Appendix~\ref{app:privacy_analysis}.}



\vspace{-0.10in}
\subsection{Proof Sketch and Key Intermediate Results} \label{ssec:sketch}
Here, we provide a proof sketch for Theorem~\ref{thm:noncvx} and Corollary~\ref{cor:noncvx} in terms of key intermediate results. The detailed proofs, including of the intermediate lemmas, can be found in the supplementary material.

First, note that for a given aggregation interval $K^t$, the global average of the local models~\eqref{eq:glob_aggr_3} can be rewritten according to the following dynamics:
{
\begin{align}
\label{eq:skprof_iter}
    &w^{(t+1)} = \sum_{c = 1}^{N_1} \rho_c \left(w_{1,c}^{(t,K^t+1)} + n_{1,c}^{(t,K^t+1)}\right) \nonumber \\
    &= w^{(t)} + n_{DP}^t \notag\\
    &-\eta^t \sum_{k=1}^{K^t} \sum_{c_1 = 1}^{N_1} \rho_{c_1}\ldots\sum_{j\in\mathcal{S}_{L-1,c_{L-1}}} \rho_{L-1, c_{L-1},j} {g}_j^{(t,k)},
\end{align}
}
where 
\begin{align}
    &n_{DP}^t = \sum_{c=1}^{N_1}\rho_c n_{1,c}^{(t,K^t+1)} \notag\\
    &+\sum_{l=1}^{L-1} \sum_{\substack{k \in K_l^t }}\left(\sum_{c_1 = 1}^{N_1} \rho_{c_1}\ldots\sum_{c _l\in \mathcal{S}_l \cap \mathcal{N}_{U,l}} \sum_{i\in \mathcal{S}_{l,c_l}} \rho_{l,c_l,i}n_{l+1,i}^{(t,k)}\right).
\end{align}

{\color{black} By applying $\beta$-smoothness of the global function $F$ onto ~\eqref{eq:skprof_iter}, employing properties of stochastic gradients, and conducting a series of algebraic manipulations, we arrive at \eqref{eq:ld_ncx}. Terms $(a)$ and $(b)$ in \eqref{eq:ld_ncx} can be bounded using Assumption~\ref{assump:genLoss}, while term $(c)$ requires a upper bound on the total DP noise injected at global iteration $t$. Term (c) can be viewed as a weighted average of noises generated across layers. Referring to Lemma 1, we then separate all DP noise effects at network layer $l$ into three cases: (i) noises injected at an intermediate layer $l$, (ii) noises injected at an intermediate layer $l' \in [l+1, L-1]$, (iii) noises injected at the edge device layer ($l = L)$. Based on Lemma~\ref{lem:DeltaM}, we can see that noises in group (i) scale with $1/\prod_{l' = l}^{L-1} s_{l'}^2$, those in group (ii) scale with $1/\prod_{l' = l}^{m-1} s_{l'}\prod_{l'' = m}^{ L-1} s_{l''}^2$, and group (iii) scales with $1/\prod_{l' = l}^{L-1} s_{l'}$. The noise additions from these groups are captured in $\mathcal{A}_l, \mathcal{B}_l, \mathcal{C}_l$ of \eqref{eq:ABC_terms}, and lead to the following lemma:}

\begin{lemma}\label{lem:bd_on_DP_noise_main}
For any given $k \in [1, K^t]$, and any layer $l \in [1, L-1]$, the DP noise that is passed up the $l^{\textrm{th}}$ layer from lower layers can be bounded by:
\begin{align}
   &\mathbb{E}\left\|\sum_{i\in \mathcal{S}_{l,d_{l,i}}} \rho_{l,d_{l,i},i}n_{l+1,i}^{(t,k)} \right\|^2 \notag \\
   &\leq \frac{2\eta^t M (K^t)^2 T q^2G\log(1/\delta)}{\epsilon^2} \left(\mathcal{A}_l + \mathcal{B}_l + \mathcal{C}_l\right),
\end{align}
where $\mathcal{A}_l$, $\mathcal{B}_l$, and $\mathcal{C}_l$ are defined at \eqref{eq:ABC_terms}.
\end{lemma}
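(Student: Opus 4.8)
The plan is to exploit the recursive structure of the noise accumulation together with the independence of the injected Gaussians. First I would observe that, by the trust model, every insecure node is a child of an insecure node, so for a fixed insecure parent $c\in\mathcal{N}_{U,l}$ the quantity $\sum_{i\in\mathcal{S}_{l,c}}\rho_{l,c,i}n_{l+1,i}^{(t,k)}$ splits across its children into two disjoint groups: the secure children $i\in\mathcal{N}_{T,l+1}$, which inject fresh noise $n_{l+1,i}^{(t,k)}\sim\mathcal{N}(0,\sigma^2(\Delta_{l+1,i})I_M)$ per~\eqref{eq:noise}, and the insecure children $i\in\mathcal{N}_{U,l+1}$, whose transmitted noise is exactly the noise already accumulated from their own subtree, i.e. $\sum_{i'\in\mathcal{S}_{l+1,i}}\rho_{l+1,i,i'}n_{l+2,i'}^{(t,k)}$. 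Since the sets $\mathcal{S}_{l,c}$ are pairwise disjoint, the noise terms attached to distinct children are deterministic linear combinations of fresh Gaussians drawn on disjoint subtrees, hence zero-mean and mutually independent. Expanding the squared norm, every cross term then vanishes in expectation, leaving $\mathbb{E}\|\sum_i\rho_{l,c,i}n_{l+1,i}^{(t,k)}\|^2=\sum_i\rho_{l,c,i}^2\,\mathbb{E}\|n_{l+1,i}^{(t,k)}\|^2$.

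Next I would convert this identity into a one-step recursion in the layer index. Using $\rho_{l,c,i}=1/|\mathcal{S}_{l,c}|$ with $|\mathcal{S}_{l,c}|\geq s_l$, bounding the secure-child fraction by $p_{l+1}^{\max}$ and the insecure-child fraction by $1-p_{l+1}^{\min}$, and setting $\bar{\Psi}_l=\max_{c\in\mathcal{N}_{U,l}}\mathbb{E}\|\sum_{i\in\mathcal{S}_{l,c}}\rho_{l,c,i}n_{l+1,i}^{(t,k)}\|^2$, the decomposition yields a bound of the form $\bar{\Psi}_l\leq \tfrac{p_{l+1}^{\max}}{s_l}\,M\sigma^2(\Delta_{l+1})+\tfrac{1-p_{l+1}^{\min}}{s_l}\,\bar{\Psi}_{l+1}$, where the fresh-noise variance $M\sigma^2(\Delta_{l+1})$ is supplied by Proposition~\ref{prop:GM} together with the sensitivity bound $\Delta_{l+1}\leq 2\eta^tK^tG/\prod_{l'=l+1}^{L-1}s_{l'}$ of Lemma~\ref{lem:DeltaM}. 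The base case is the aggregation of the edge devices at layer $L-1$, where every child injects fresh noise of sensitivity $\Delta_{L,j}\leq 2\eta^tK^tG$ and there is no further subtree to recurse into; this pins down the terminal term.

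I would then unroll the recursion from layer $l$ down to the edge. Each term of the resulting telescoping sum corresponds to the layer $m$ at which the noise on a given branch is first injected: the ``stopping at the current layer'' contribution produces $\mathcal{A}_l$; stopping at an intermediate layer $m$ produces the corresponding summand of $\mathcal{B}_l$, carrying the accumulated traversal factors $\prod(1-p^{\min})$ from the insecure layers above $m$ and the secure factor $p_m^{\max}$ at $m$; and reaching the edge devices produces $\mathcal{C}_l$, carrying the full product $\prod_{l'=l}^{L-1}(1-p_{l'}^{\min})$. Substituting the Lemma~\ref{lem:DeltaM} sensitivities into the per-coordinate variance of Proposition~\ref{prop:GM} shows that every branch shares a common prefactor assembled from $M$, $(K^t)^2$, $T$, $q^2$, $\log(1/\delta)$, $\epsilon^{-2}$ and the sensitivity scale, while the node-dependent remainders---the steep $1/\prod s_{l''}^2$ reduction for layers at or below the injection point and the single $1/s_{l'}$ reduction for each insecure layer traversed above it---collapse exactly into $\mathcal{A}_l+\mathcal{B}_l+\mathcal{C}_l$ as defined in~\eqref{eq:ABC_terms}.

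The main obstacle I anticipate is the bookkeeping in the unrolling step: correctly tracking how each insecure aggregation layer contributes one factor of $(1-p^{\min})/s_{l'}$ whereas the layer of first injection contributes the steeper $1/\prod s_{l''}^2$ reduction, and reconciling the child-layer versus parent-layer conventions for the $p$, $\alpha$, and $s$ indices so that the $\mathcal{A}_l$, $\mathcal{B}_l$, and $\mathcal{C}_l$ groupings partition---rather than double count---the possible injection layers. Care is also needed to confirm the independence claim uniformly across all layers, namely that the recursively accumulated noise at any insecure node remains a linear combination of fresh Gaussians supported on pairwise-disjoint subtrees, since it is precisely this property that licenses discarding every cross term in the initial expansion.
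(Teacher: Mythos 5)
Your proposal is correct and follows essentially the same route as the paper's proof: a layer-by-layer recursion that classifies each noise contribution by the layer at which it was first injected, bounds the fresh-noise variances via Lemma~\ref{lem:DeltaM} and Proposition~\ref{prop:GM}, and collapses the unrolled sum into $\mathcal{A}_l+\mathcal{B}_l+\mathcal{C}_l$. The only divergence is at the top level, where the paper applies Jensen's inequality to obtain $\sum_{i}\rho_{l,c,i}\,\mathbb{E}\|n_{l+1,i}^{(t,k)}\|^2$ (first-power weights) and invokes independence only within each child's subtree, whereas your cross-term-cancellation argument gives the tighter $\sum_{i}\rho_{l,c,i}^2\,\mathbb{E}\|n_{l+1,i}^{(t,k)}\|^2$ --- an extra factor of $1/s_l$ that strengthens, and hence still implies, the stated bound.
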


Integrating Proposition~\ref{prop:GM}, Lemma~\ref{lem:Delta}, and Lemma~\ref{lem:bd_on_DP_noise_main} into~\eqref{eq:ld_ncx}, and considering that $\eta^t \leq \frac{1}{\max \{T, K^t\}\beta}$, we obtain the following expression:

{\small
\begin{align}
    &\frac{\eta^t K^t}{2}\mathbb{E}\left\|\nabla F(w^{(t)})\right\|^2 \leq \mathbb{E}\left[ F(w^{(t)}) -  F(w^{(t+1)})\right] \notag\\
    &+ \frac{(\eta^2K^t)^2 \beta}{2} \left(G^2 + \sigma^2 +  \frac{G^2}{K^t\beta}\right)\notag\\
    &+\frac{4(\eta^t K^t)^2 L\beta  M (K^t)^2 T q^2\log(\frac{1}{\delta})}{\epsilon^2}\sum_{l=1}^{L}(1 - p_{l-1}^{\min})^2 \bigg(\mathcal{A}_l + \mathcal{B}_l + \mathcal{C}_l\bigg).
\end{align}
}
Further algebraic manipulation yields the final results.

\section{Adaptive Control Algorithm} \label{sec:ctrl_DPFL}
\noindent 
We now show how our convergence analysis can be employed to develop control algorithms that optimize over the privacy-utility tradeoff.
We consider three key control variables: 
\begin{enumerate}[leftmargin=8mm]
    \item The size of the gradient descent step \{$\eta^t$\} across global iterations $t$.
    \item The local training interval length $K^t$ in-between global aggregations $t$ and $t+1$.
    \item A partially sampled set of edge devices $\mathcal{S}_{L-1,c}^t \subseteq \mathcal{S}_{L-1,c}$ that will participate in round $t$ for each intermediate node $c \in \mathcal{S}_{L-1}$. We specifically aim to control the size $s_c^t$ of the participating sample set:\footnote{It is worth mentioning that as long as the sampling process is unbiased (e.g., uniform random), the theoretical guarantees from Theorem~\ref{thm:noncvx} hold under partial participation.}
    \begin{equation}
    s_c^t \overset{\Delta}{=} |\mathcal{S}_{L-1,c}^t|.
\end{equation}
\end{enumerate}

Our control algorithm framework is summarized in Fig.~\ref{fig:ctrl}. 
The central server orchestrates the adjustable parameters during the global aggregation step. To simplify the presentation of our algorithm, we assume $\mathcal{K}_l^t = \emptyset$ for $l \geq 2$, which means that local aggregations are occurring at layer $l = 1$.
We also assume $p_l^{\min} = p_l^{\max}$, which means all insecure aggregations between layers $l$ and $l-1$ have the same ratio of children with and without DP noise injection. We assume the network operator will specify (i) desired $(\epsilon,\delta)$ privacy requirements, (ii) a target number of local model updates $\tau = \sum_{t=1}^{T} K^t$, and (iii) a maximum duration between global aggregations $K^\textrm{max}$.

Our control algorithm has two parts: \textit{Part I} employs an adaptive approach (detailed in Sec.~\ref{subsec:learniParam}) for calibrating the step-size to ensure the convergence performance from Theorem~\ref{thm:noncvx}. \textit{Part II} adopts an optimization framework (outlined in Sec.~\ref{subsec:learnduration}) to adapt $s_c^t$ and $K^t$, balancing the objectives of ML loss and resource consumption for the target DP requirement.

\vspace{-0.10in}
\subsection{Part I: Step Size Parameter ($\gamma^t$)}\label{subsec:learniParam}
\textcolor{black}{After global round $t-1$ ends, we begin by fine-tuning the step size parameter $\gamma^t$ for round $t$, taking into account the relevant measures smoothness  value $\beta$, total gradient computations $\tau$, and local iterations $K$.} The server is tasked with estimating $\beta$, for which we follow Section IV-C of~\cite{lin2021timescale}. $\tau$ is pre-specified, where $\tau$ is always strictly larger than the total number of global aggregations $\tau \geq T$. $K$ is either taken as $K^{t-1}$ from the previous interval, or initialized for the first interval $K^1$. Given that higher feasible $\gamma^t$ values enhance the step sizes, leading to faster model convergence as per the conditions described in Theorem~\ref{thm:noncvx}, we identify the maximum $\gamma^t$ value that complies with $\gamma^t \leq \min\{\frac{1}{K^{t-1}},\frac{1}{\tau}\} / \beta$.
  
\begin{figure}[t]
\includegraphics[width=0.47\textwidth]{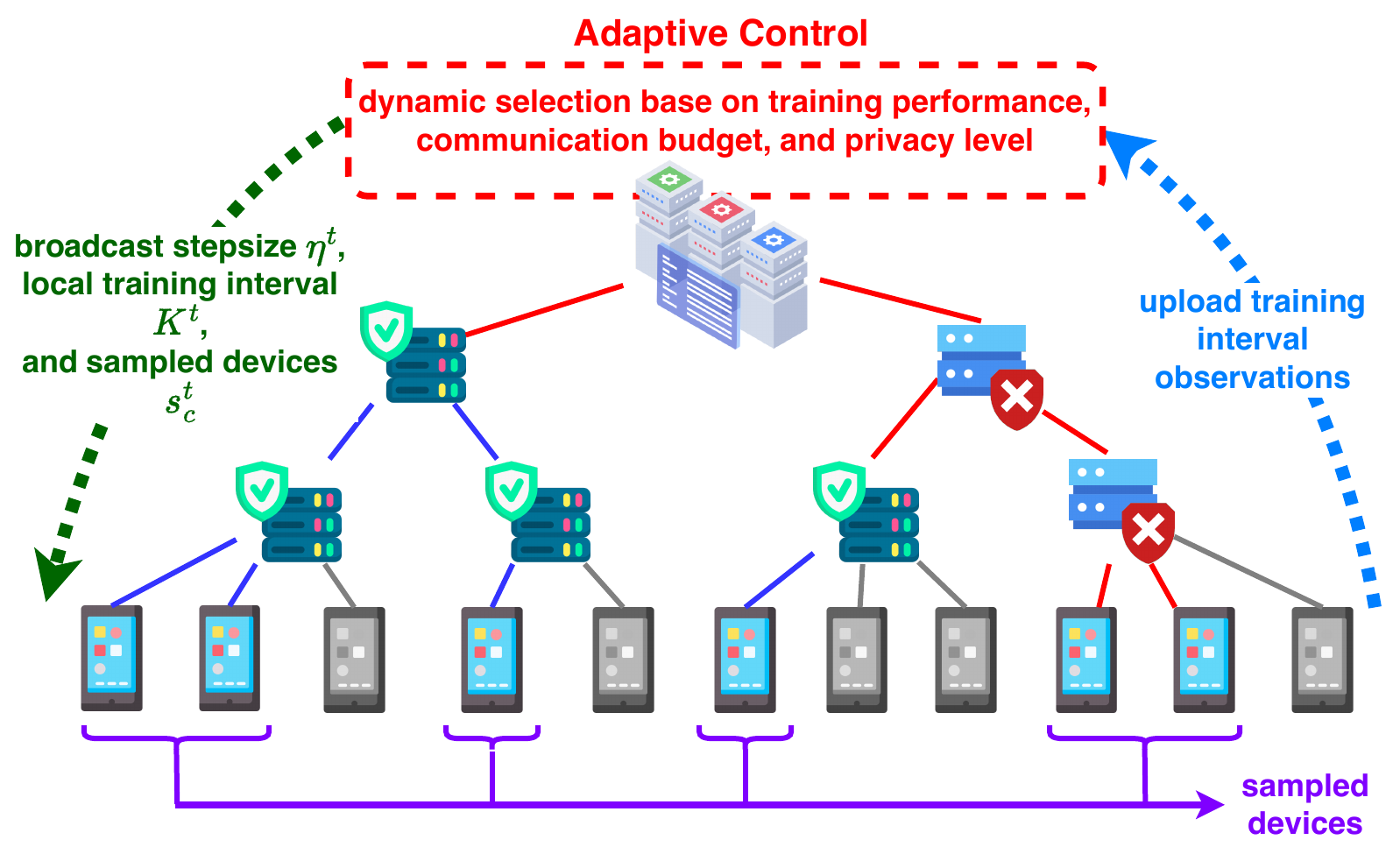}
\centering
\caption{Overview of the adaptive control algorithm, outlining its objectives, adjustable parameters, and observations.}
\label{fig:ctrl}
\end{figure} 

 \vspace{-0.10in}
\subsection{Part II: Training Interval ($K^t$) and Participation ($s_{c}^t$)}\label{subsec:learnduration}
\textcolor{black}{We proceed to craft an optimization problem that determines the number of local iterations $K^t$ and the number of edge devices $\{s_{c}^t\}_{c\in \mathcal{S}_{L-1}}$ sampled in each cluster $c$ at layer $L-1$ for global round $t$.}
\eqref{eq:adaptive_alg} is designed to jointly optimize four competing objectives: (O1) the energy consumption associated with global model aggregations, (O2) the communication delays incurred during these aggregations, and (O3) the performance of the global model, taking into account the impact of the DP noise injection procedure in {\color{black} {\tt M$^2$FDP}} dictated by Theorem~\ref{thm:noncvx}. Formally, we have:
\begin{align}
\label{eq:adaptive_alg}
    \min_{K^t, \{s_{c}^t\}_{c\in \mathcal{S}_{L-1}}} &  \alpha_1 \underbrace{E(K^t, \{s_c^t\})}_\text{$(a)$}+\alpha_2 \underbrace{\Gamma(K^t, \{s_c^t\})}_\text{$(b)$}\notag\\
    &+\alpha_3 \underbrace{\nu (K^t, \{s_{c}^t\})}_\text{$(c)$} \notag\\
    \textrm{subject to} 
   & \;\;\; 1 \leq K^t \leq \min{\{K^\textrm{max}, \tau^t\}}, K^t\in\mathbb{Z}^+, \notag\\  
   & \;\;\; 1\leq s_{c}^t \leq |\mathcal{S}_{L-1, c}|,~s_{c}^t \in \mathbb Z^+,
   \tag{$\mathcal{P}$}
\end{align}

\noindent where $\tau^t = \tau - \sum_{t'=1}^{t-1}K^{t'}$ is the remaining number of local model updates.

\textbf{Objectives}: Term $(a)$ captures the communication and computation energy expenditure over the estimated remaining global aggregations, where
\begin{align*}
&E(K^t, \{s_c^t\}) = \frac{\tau^t}{K^t}E_{\textrm{Glob}}( \{s_c^t\})\\& +\frac{\tau^t}{K^t}\left(|\mathcal{K}_1|\sum_{c'\in \mathcal{S}_{1}} E_{\textrm{c',Loc}}( \{s_{c}^t\}) + K^t\sum_{c''\in \mathcal{S}_{L-1}}s_{c''}^t{E_{iter}}\right).
\end{align*}
The energy consumption from local model aggregation at an intermediate node, denoted as $E_{\textrm{c,Loc}}$, is calculated by summing up the energy used for communication between edge devices and intermediate node $c$. Similarly, the global aggregation energy consumption, $E_{\textrm{Glob}}$, accumulates the energy used for communications during global aggregation. $E_{iter}$ is the total energy for one edge device to compute one local update, which we assume is constant (e.g., uniform mini-batch sizes).

Term $(b)$ captures communication and computation delay incurred over the estimated remaining local intervals:
\begin{align*}
    &\Gamma(K^t, \{s_c^t\})\\
    &=\frac{\tau^t}{K^t}\left(\Gamma_{\textrm{Glob}}( \{s_c^t\}) + |\mathcal{K}_1|\sum_{c'\in \mathcal{S}_1}\Gamma_{\textrm{c',Loc}}(s_{c'}^t)+ K^t\Gamma_{iter}\right).
\end{align*}
The delay for local aggregation, $\Gamma_{\textrm{c,Loc}}$, captures the total consumed time it takes for all selected devices to transmit the model updates to intermediate node $c$ based on their transmission rates. The global aggregation delay, denoted as $\Gamma_{\textrm{Glob}}$, represents the communication time to perform one round of global aggregation. $\Gamma_{iter}$ is the delay for the edge devices to perform one round of local update, again constant.


Finally, term $(c)$ represents the upper bound on the optimality gap, quantified as term (b) in Theorem~\ref{thm:noncvx}:
\begin{align}
    &\nu(K^t, \{s_c^t\}) \overset{\Delta}{=} \frac{8LM(K^t)^4 q^2 \log(\frac{1}{\delta})}{\epsilon^2} \notag\\
&\cdot\sum_{l=1}^{L}(1 - p_{l-1})^2 \bigg(\mathcal{A}_l + \mathcal{B}_l + \mathcal{C}_l\bigg).
\end{align}
A lower value is thus in line with better ML performance.





\textbf{Constraints}: The first constraint ensures that the value of $K^t$ remains within a preset range, i.e., to prevent any given local training interval from becoming too long. Meanwhile, the second constraint ensures that the number of participating devices in each cluster does not exceed the subnet size. Increasing $s_{1,c}^t$ will hinder the energy objectives, while from Theorem~\ref{thm:noncvx}, we see that it will improve the stationarity gap objective term (d). The value of $K^t$ has the opposite effect, as increasing it causes the stationarity gap to grow, but simultaneously reduces the frequency of aggregations. 

\textbf{Solution}: \eqref{eq:adaptive_alg} is classified as a non-convex mixed-integer programming problem due to term $(c)$. The total time complexity required if using a nested line search strategy is $\mathcal O(N_{L-1}\times K^{max}\times s^{\max})$. To mitigate this, instead of searching an optimal set of participating devices for all subnets, we decrease the decision variable space from $N_{L-1}$ to 2, by only differentiating sample rates between subnets $\mathcal{S}_{L-1, c}$ which have secure versus insecure ancestor nodes at layer $l=1$.

\vspace{-1mm}   
\section{Experimental Evaluation}
\label{sec:experiments}



\subsection{Simulation Setup}
\label{ssec:setup} 
By default, we consider a hierarchical FL network with $L = 2$, comprising a client layer, an edge layer, and a cloud layer. In total, $N_2 = 50$ edge devices are evenly distributed across $N_1 = 10$ edge subnetworks (subnets). In Sec. \ref{sec:more_layers}, we also consider multi-tier FL settings with more network layers.

We use four datasets commonly employed for image classification tasks: EMNIST-Letters (E-MNIST)\cite{cohen2017emnist}, Fashion-MNIST (F-MNIST)\cite{xiao2017}, CIFAR-10\cite{krizhevsky2009learning}, and CIFAR-100\cite{krizhevsky2009learning}. 
Following prior work~\cite{wang2019adaptive,lin2021timescale}, the training samples from each dataset are distributed across the edge devices in a non-i.i.d manner. 
The first two datasets employ Support Vector Machine (SVM), while the latter two datasets employ a 12-layer convolutional neural network (CNN) accompanied by softmax and cross-entropy loss.
The model dimensions are set to $M=7840$. Comparison between {\color{black} SVM} and CNN provides insight into {\tt M$^2$FDP}'s performance when handling convex versus non-convex loss functions.
Also, unless otherwise stated, we assume $p_c = 0.5$, and that semi-honest entities are all governed by the same DP budget $\epsilon=1, \delta=10^{-5}$.

For the control algorithm, the energy and delay model for our system accounts for various factors including model size \(M\), quantization level \(Q\), individual device transmit power \(p_j\), and transmission rates \(R_j^{(t)}\), as in in~\cite{lin2021timescale}. 
Wireless communications between each network layer below the cloud server employ a standard transmit power of \(p_{l,i} = 24\) dBm per device, with a data rate of \(\bar{R}_{l,i}^{(t)} = 35\) Mbps. For wired communications between intermediate nodes and the cloud server, the transmit power is set at \(\bar{p}_{c} = 38\) dBm with a data rate of \(\bar{R}_{c}^{(t)} = 100\) Mbps. The delay of local iterative update is set to $\Gamma_{iter} = 2\times 10^{-1}$ secs, and the energy cost for local updates is set to $E_{iter} = 1 \times 10^{-3}J$.

\begin{figure*}[t]
\includegraphics[width=1.0\textwidth]{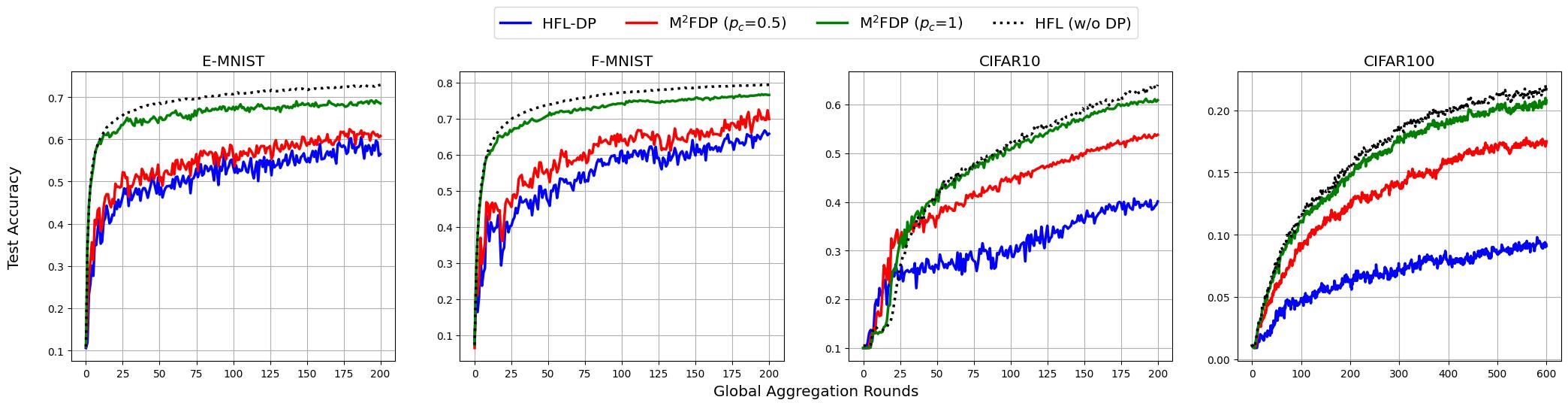}
\centering
\caption{Performance comparison between {\tt M$^2$FDP}, the {\tt HFL-DP} baseline from~\cite{Shi2021HDP}, and an upper bound established by hierarchical {\tt FedAvg} without DP. {\tt M$^2$FDP} significantly outperforms {\tt HFL-DP} and is able to leverage trusted edge servers effectively.}
\label{fig:mnist_poc_1_all}
\vspace{-0.2in}
\end{figure*} 

\vspace{-0.2mm}
\subsection{{\tt M$^2$FDP} Comparison to Baselines}
\label{ssec:conv-eval}

 
 

\begin{figure}[t]
\includegraphics[width=0.49\textwidth,height=0.15\textheight]{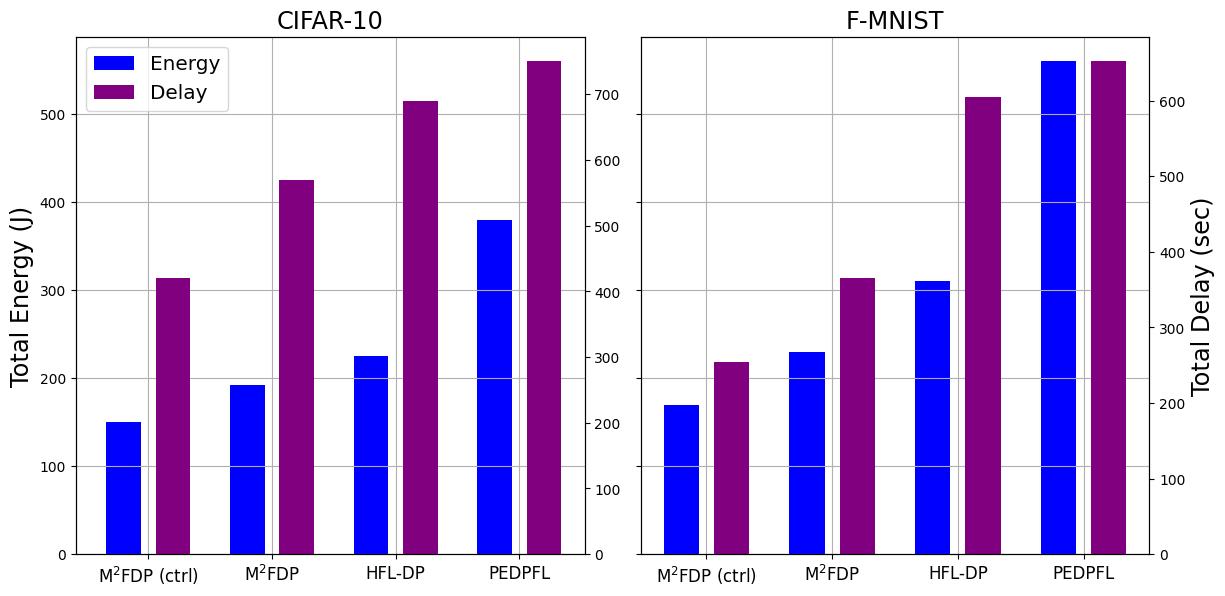}
\centering
\vspace{-0.2in}
\caption{Comparison of {\tt M$^2$FDP} with adaptive parameter control to the baselines in total energy and delay upon reaching 75\% testing accuracy for F-MNIST and 45\% testing accuracy for CIFAR-10. {\tt M$^2$FDP} obtains substantial improvements in both metrics for both F-MNIST and CIFAR-10.}
\label{fig:res}
\vspace{-0.1in}
\end{figure}
\begin{figure}
\includegraphics[width=0.49\textwidth]{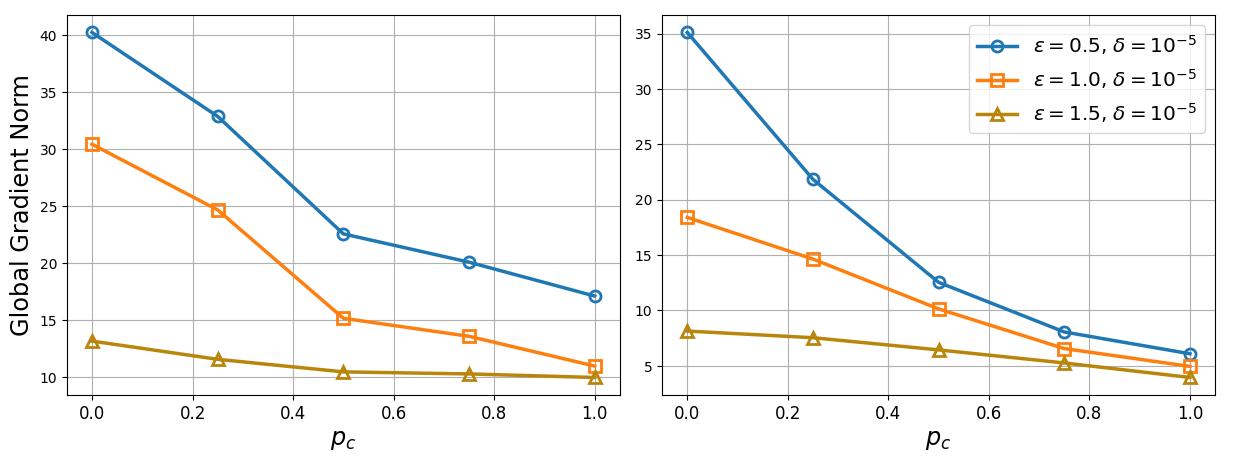}
\includegraphics[width=0.49\textwidth]{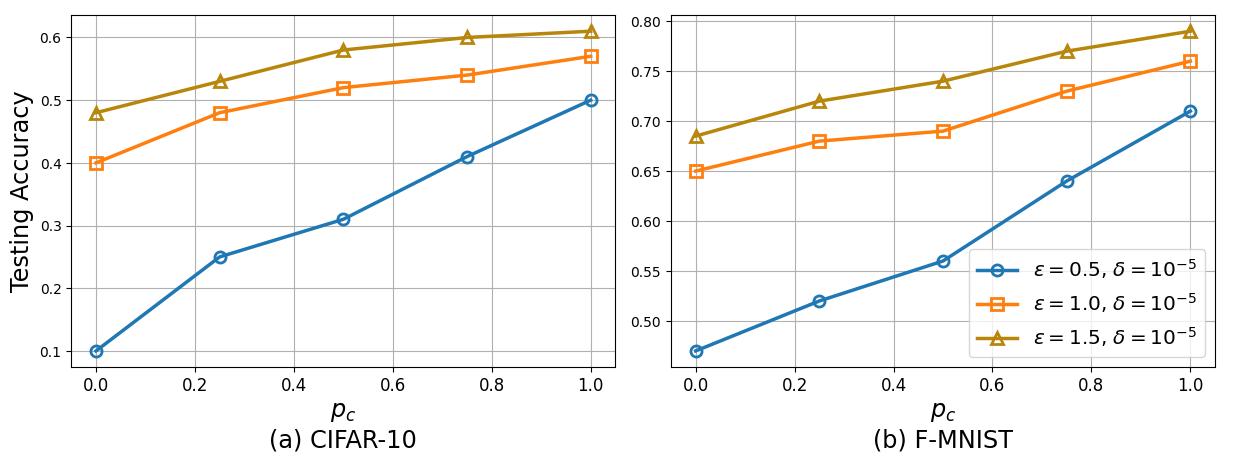}
\centering
\vspace{-0.2in}
\caption{\textcolor{black}{Interplay between privacy and performance in {\tt M$^2$FDP} across various probabilities ($p_c$) of a subnet's linkage to a secure edge server under different privacy budgets ($\epsilon$) after 200 aggregation rounds.} \vspace{0.05in}} 
\label{fig:eps_vs} 
\end{figure}

\begin{figure}[t]
\includegraphics[width=0.49\textwidth]{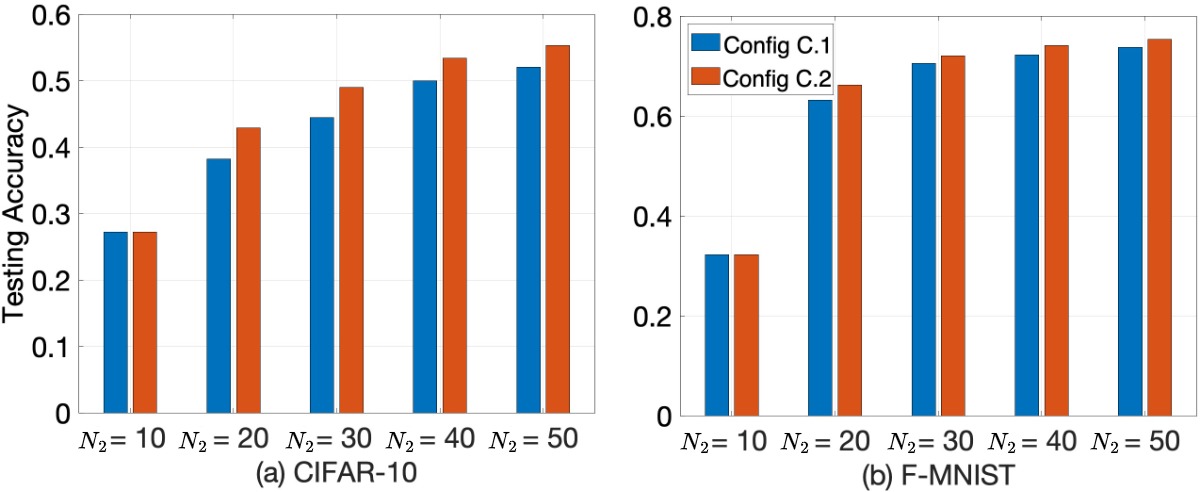} 
\centering
\vspace{-0.2in}
\caption{Impact of various network configurations on the performance of {\tt M$^2$FDP}. Under the same network size, enhancing the size of each subnet $s_c$ yields superior test accuracy compared to merely increasing the number of subnets $N$.}
\label{fig:mnist_poc_2_all}
\end{figure}
Despite the increasing interest in HFL, there remains a significant lack of research addressing differential privacy in this context. As a result, the number of established baseline algorithms available for comparison is limited.
Our first experiments examine the performance of {\tt M$^2$FDP} compared with the following baselines: We utilize the conventional hierarchical {\tt FedAvg} algorithm~\cite{liu2020client}, which offers no explicit privacy protection, as our upper bound on achievable accuracy (labeled {\tt HFL (w/o DP)}). We also implement {\tt HFL-DP}~\cite{Shi2021HDP}, which employs LDP within the hierarchical structure, for competitive analysis. Further, we consider {\tt PEDPFL}~\cite{Shen2022imp}, a DP-enhanced FL approach developed for the standard star-topology structure, for which we assume the edge devices all form a single cluster and apply LDP.



\subsubsection{Training Convergence Performance}
Fig.~\ref{fig:mnist_poc_1_all} demonstrates results comparing {\tt M$^2$FDP} to the baselines. Each algorithm employs a local model training interval of $K = 20$ and conduct local aggregations after every five local SGD iterations. We see {\tt M$^2$FDP} obtains performance enhancements over {\tt HFL-DP} by exploiting secure intermediate nodes in the hierarchical architecture. This improvement is observed both in terms of superior accuracy and decreased accuracy perturbation as the ratio of secure nodes ($p_1$) increases. 
Specifically, when $p_1 = 0.5$, {\tt M$^2$FDP} achieves an accuracy gain at $t = 200$ of $5\%$ for E-MNIST, $5\%$ for F-MNIST, $13\%$ for CIFAR-10, and at $t=600$ a gain of $16\%$ for CIFAR-100, respectively, and displays reduced volatility in the accuracy curve compared to {\tt HFL-DP}. When all edge servers are secure ($p_1=1$), the improvement almost doubles in E-MNIST and F-MNIST, while the improvement decreases for CIFAR-10 and CIFAR-100. 
Notably, compared to the upper bound benchmark, {\tt M$^2$FDP} with $p_1=1$ achieves an accuracy within $4\%$ of the benchmarks of all four datasets.
In other words, the exploitation of secure nodes in the middle layers of the hierarchy significantly mitigates the amount of noise required to maintain a desired privacy level.


\subsubsection{Adaptive Control Algorithm Performance}
\label{ssec:control-eval}
Next, we evaluate {\tt M$^2$FDP}'s control algorithm performance. In this setting, all experiments started with $K^t = 5$ and edge device sample rate $100\%$. Fig.~\ref{fig:res} demonstrates the results,
where the total energy consumption (O1) and total delay (O2) are assessed upon the global model achieving a testing accuracy of 75\%.
Overall, we see that {\tt M$^2$FDP} with control ({\tt ctrl}) substantially improves over the baselines for both metrics. For (O1), the blue bars show reductions in energy consumption by $17.6\%$ and $25.1\%$ compared to {\tt M$^2$FDP} without control, by $30.3\%$ and $46.2\%$ compared to {\tt HFL-DP}, and by $62.6\%$ and $71.5\%$ compared to {\tt PEDPFL} for the CIFAR-10 and FMNIST datasets, respectively. Similarly, for (O2), the red bars indicate that {\tt {\tt M$^2$FDP}(ctrl)}'s delay is $18.1\%$ and $24.9\%$ less than {\tt M$^2$FDP} without control, $30.1\%$ and $43.2\%$ less than {\tt HFL-DP}, and $36.2\%$ and $51.3\%$ less than {\tt PEDPFL} for the CIFAR-10 and FMNIST datasets. These results highlight the enhanced performance and efficiency in resource usage offered by {\tt M$^2$FDP} through its adaptive parameter control, which optimizes the balance between the optimality gap (as established in Theorem~\ref{thm:noncvx}), communication delay, and energy consumption. Notably, the improvement in both metrics underscores the advantage of the joint device participation and training interval optimization strategy employed by {\tt M$^2$FDP}.
 


\vspace{-0.2mm}
\subsection{Impact of System Parameters}

\subsubsection{Privacy Budget and Secure Nodes} \label{subsubsec:combiner}
\textcolor{black}{We next examine the impact of the ratio $p_c$ under different $(\epsilon,\delta)$-DP guarantees. The results are shown in Fig.~\ref{fig:eps_vs}, where $p_c=0$ corresponds to {\tt HFL-DP}. The upper plots of Fig.~\ref{fig:eps_vs} depict the gradient norm of the global model after training. We observe that the gradient norm increases as the privacy requirement becomes more stringent, and also rises as $p_c$ decreases (i.e., when more nodes are insecure). At $p_c = 1$, the gradient norm is smallest, but non-zero, as the main server is still insecure. These trends are consistent with our theoretical analysis of {\tt M$^2$FDP}, including the DP-induced term (b) in Theorem~\ref{thm:noncvx}.}

The lower plots in Fig.~\ref{fig:eps_vs} shows that {\tt M$^2$FDP} obtains a considerable improvement in the privacy performance tradeoff as the probability increases. Specifically, under the same privacy conditions, {\tt M$^2$FDP} exhibits an improvement of at least $20\%$ for CIFAR-10 and $10\%$ for F-MNIST when all edge servers in the mid-layer can be trusted (ie, $p_c=1$) compared to {\tt HFL-DP} (i.e., $p_c = 0$). For example, when $p_c=1$ and $\epsilon=0.5$, {\tt M$^2$FDP} achieves accuracy boosts of $40\%$ for CIFAR-10 and $25\%$ for F-MNIST. 
We see that {\tt M$^2$FDP} can significantly mitigate the accuracy degradation from a more stringent DP requirement when more secure aggregators are available. 

\subsubsection{Varying Subnet Sizes} \label{subsubsec:netSize}

Next, we investigate the impact of different network configurations. Two distinct configurations are evaluated:
\begin{itemize}[leftmargin=8mm]
\item {\tt Config. C.1}, where the size of subnets is kept at $s_1 = 5$ as the number of subnets ($N_1$) varies;

\item {\tt Config. C.2}, where $N_1$ is fixed at $2$ while $s_1$ varies.
\end{itemize}
The results are shown in Fig.~\ref{fig:mnist_poc_2_all}, where a positive correlation between network size and model performance is apparent in both configurations.
Specifically, the accuracy gain can be as substantial as $25\%$ and $42\%$ for CIFAR-10 and F-MNIST, respectively, when the number of edge devices transitions from $N_2=10$ to $N_2=50$.
{\color{black} This observation aligns with Theorem~\ref{thm:noncvx}}, which quantifies the noise reduction as the network size expands.

Also, {\tt Config. C.2} leads to superior model performance compared to {\tt Config. C.1}: increasing subnet sizes is more beneficial than increasing the number of subnets. This pattern once again aligns with Theorem~\ref{thm:noncvx}: when subnets are linked to a secure intermediate node, the extra noise needed to maintain an equivalent privacy level can be downscaled by $1/s_1^2$.
These findings underscore the importance of considering network configuration in privacy-accuracy trade-off optimization for {\tt M$^2$FDP}.

\vspace{-0.10in}
\subsection{Results with More Tiers ($L \geq 3$) }\label{sec:more_layers}
In this subsection, we change our setting from  the $L = 2$ networks to multi-tier networks, and interchanging parameters that are tier-related to observe the effect on performance.

\begin{figure}[t]
\vspace{-0.1in}
\includegraphics[width=0.49\textwidth]{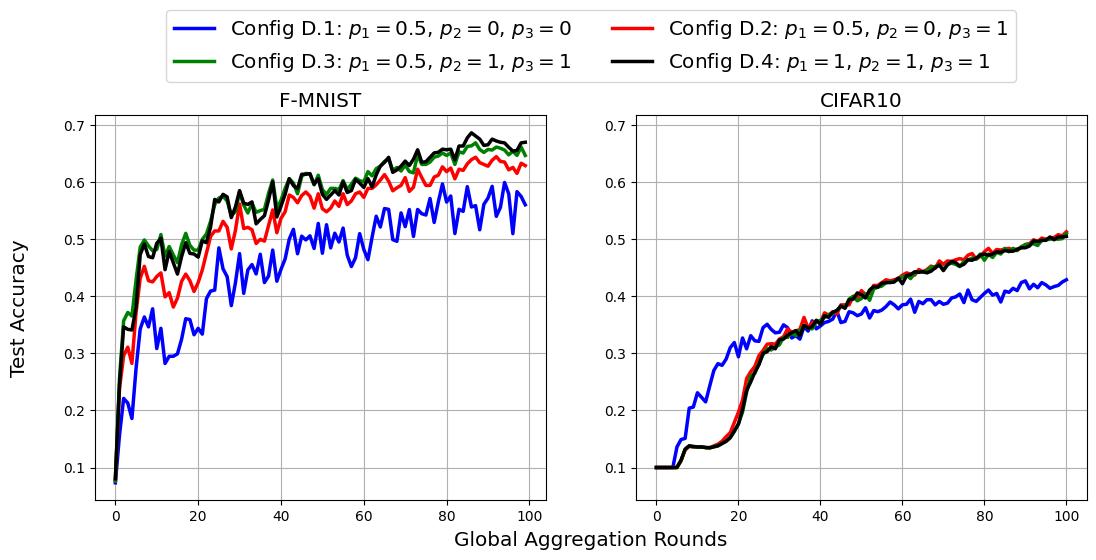} 
\centering
\vspace{-0.2in}
\caption{Impact of changing the ratio of secure intermediate nodes in each layer on F-MNIST and CIFAR-10. The convergence result improved as more layers $l$ have all nodes secure ($p_l = 1$).}
\label{fig:secure_ratio}
\end{figure}

\begin{figure}[t]
\includegraphics[width=0.49\textwidth]{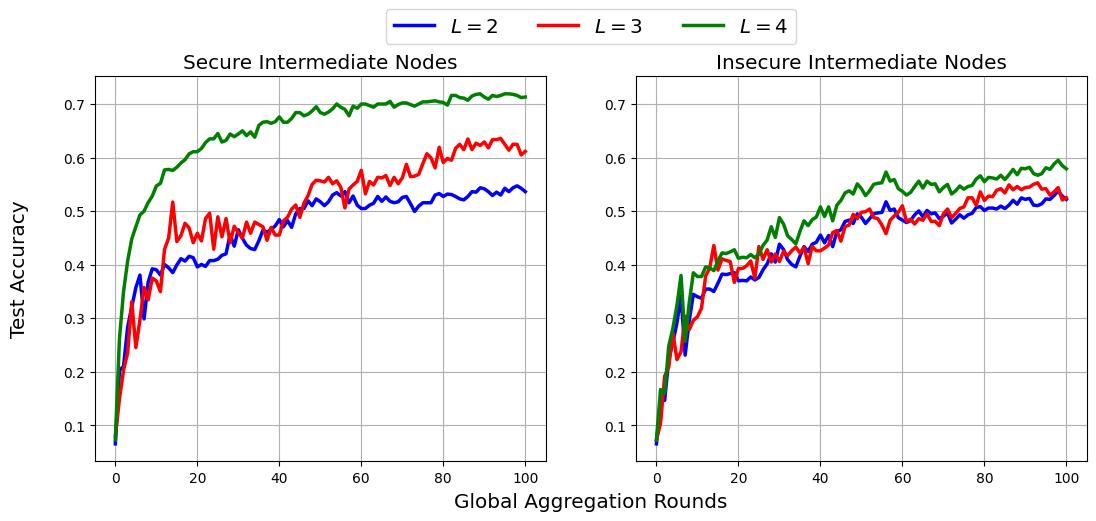} 
\centering
\vspace{-0.2in}
\caption{Impact of changing the number of layers of the multi- network on F-MNIST. The results shows that a larger number of tiers is beneficial to secure intermediate nodes, while unhelpful to insecure intermediate nodes.}
\label{fig:layer_number}
\end{figure}

\subsubsection{The Number of Privacy Protected Layers} We first consider a fixed $L = 4$ multi-tier network, while varying the ratio of secure/insecure intermediate nodes on each tier. There are four configurations considered:
\begin{itemize}[leftmargin=8mm]
\item {\tt Config. D.1}, where $p_1 = 0.5, p_2 = 0, p_3 = 0$, which means half of all intermediate nodes are secure/insecure;
\item {\tt Config. D.2}, where $p_1 = 0.5, p_2 = 0, p_3 = 1$, which means all intermediate nodes in layer $l=3$ are changed to secure nodes, with the remaining nodes the same as the configuration i;
\item {\tt Config. D.3}, where $p_1 = 0.5, p_2 = 1, p_3 = 1$, which means all intermediate nodes in layer $l=2$ are changed to secure nodes, with the remaining nodes the same as the configuration ii;
\item {\tt Config. D.4}, where $p_1 = 1, p_2 = 1, p_3 = 1$, which means all intermediate nodes are secure.
\end{itemize}

Fig. \ref{fig:secure_ratio} demonstrates that {\tt Config. D.2, D.3}, and {\tt D.4} outperform {\tt Config. D.1} on both the F-MNIST and CIFAR-10 datasets. This observation aligns with the result presented in Corollary~\ref{cor:noncvx}, which states that an increased number of layers fully composed of secure intermediate nodes reduces the stationarity gap, thereby enhancing model performance. However, the performance differences among {\tt Config D.2, D.3}, and {\tt D.4} exhibit varying behaviors between F-MNIST and CIFAR-10. This discrepancy can be attributed by magnitudes of certain terms from Theorem~\ref{thm:noncvx}: when terms ($a_1$) and ($a_2$) in Theorem~\ref{thm:noncvx} are relatively larger compared to ($b$), the influence of ($b$) does not dominate the final convergence behavior. As a result, the model performance remains similar even when the proportion of secure intermediate nodes varies.

\subsubsection{Varying Network Tiers} We then investigate how the performance of {\tt M$^2$FDP} changes when the number of tiers changed under certain privacy settings. We consider the following three networks: 1) A three layer network $L=2$, where there are two intermediate nodes at layer 1 $N_1 = 2$, and 8 edge devices at layer 2 $N_2 = 8$. 2) A four layer network $L = 3$, where $N_1 = 2$, $N_2 = 8$, and $N_3 = 32$. 3) A five layer network $L = 4$, where $N_1 = 2$, $N_2 = 8$, $N_3 = 32$, and $N_4 = 128$. In other words, each network configuration is the previous network with four child nodes attached to all leaf nodes. We conduct experiments on two extreme cases, the first case is $p_1 = p_2 = \ldots = p_{L-1} = 1$, which means all intermediate nodes are secure, and the second case is $p_1 = p_2 = \ldots = p_{L-1} = 0$,  which means all intermediate nodes are insecure.

Fig. \ref{fig:layer_number} illustrates the contrasting behaviors of networks with secure and insecure intermediate nodes as the number of tiers increases. For the F-MNIST dataset, networks with only secure intermediate nodes exhibit significant accuracy improvements, achieving a $7\%$ gain when increasing the tiers from $L=2$ to $L=3$, and a $10\%$ gain from $L=3$ to $L=4$. In contrast, networks with only insecure intermediate nodes maintain relatively consistent accuracy across varying tiers. This trend is consistent with the total magnitude of differential privacy (DP) noise, as indicated in Lemma \ref{lem:DeltaM}. Specifically, the variance of DP noise decreases with an increasing number of tiers for networks comprising secure intermediate nodes, whereas it remains unchanged for networks with insecure intermediate nodes.

\begin{figure*}[t]
\includegraphics[width=0.95\textwidth,height=0.15\textheight]{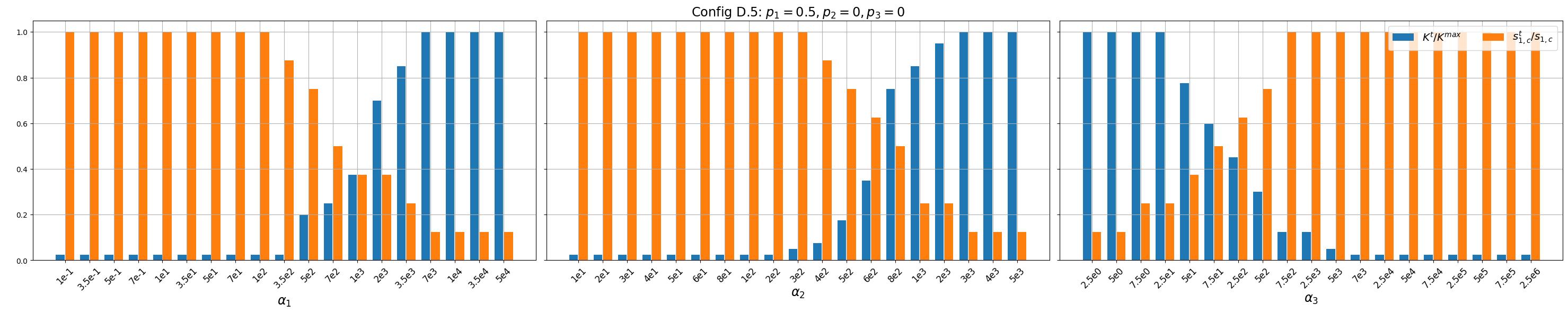}
\includegraphics[width=0.95\textwidth,height=0.15\textheight]{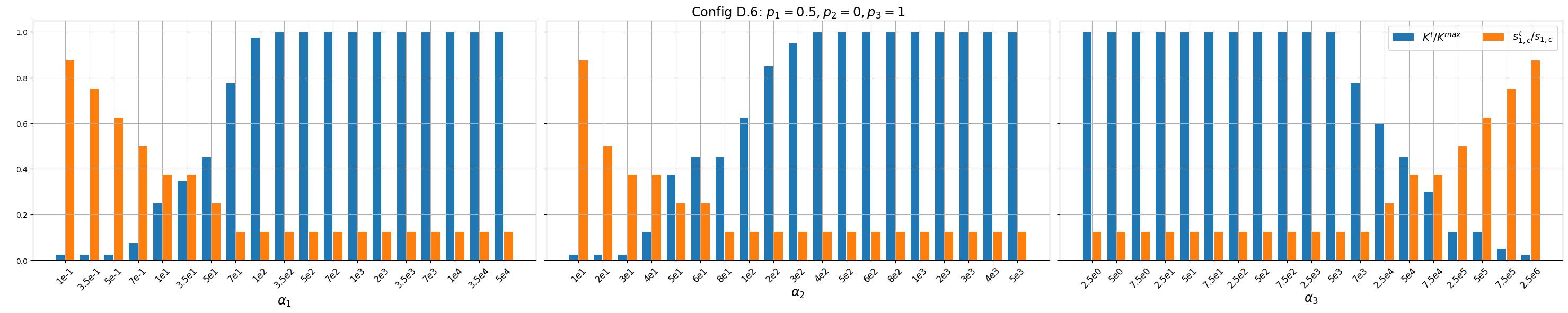}
\centering
\vspace{-0.1in}
\caption{Average values of \(K^t\) and \(s_c^t\) chosen by \eqref{eq:adaptive_alg} across various configurations of coefficients \(\alpha_1\), \(\alpha_2\) and \(\alpha_3\) under two security configurations. One with less secure intermediate nodes (Config 1, $p_1 = 0.5, p_2 = 0, p_3 = 0$), the other with more secure intermediate nodes (Config 2,  $p_1 = 0.5$, $p_2 = 0$, $p_3 = 1$).} 
\label{fig:weight}
\vspace{-5mm}
\end{figure*}

\begin{figure}[t]
\includegraphics[width=0.49\textwidth]{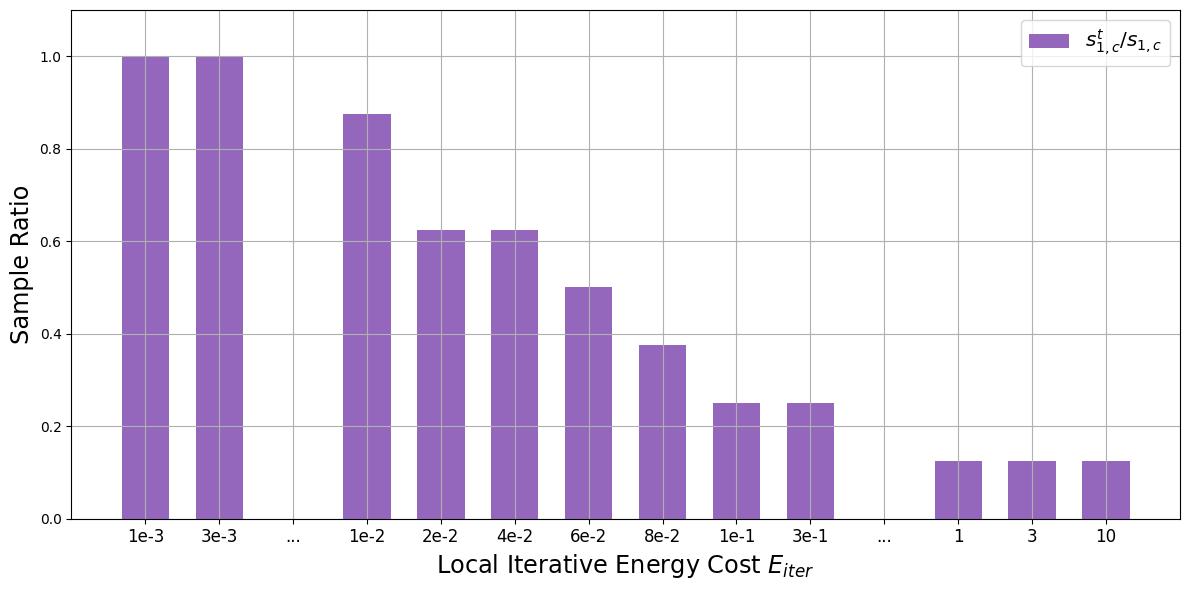} 
\centering
\vspace{-0.2in}
\caption{Average value of sample ratio $s_c^t$ chosen by \eqref{eq:adaptive_alg} under different magnitudes of energy expended per local iteration $E_{iter}$. The control algorithm adapts to a higher energy cost by sampling less edge devices for participation.} 
\label{fig:ctrl_energy}
\end{figure}

\subsubsection{Objective Weights in \eqref{eq:adaptive_alg}}
Fig.~\ref{fig:weight} investigates the control algorithm's response to varying optimization weights in \eqref{eq:adaptive_alg}. In this experiment, we set $L = 4$, $K^{max} = 40$, and $|\mathcal{S}_{L-1, c}| = 8$, and use the F-MNIST dataset. The plotted values of $K^t$ and $s_c^t$ are averaged over the entire training process. We compare the change of the values under two network secure ratios: (i) {\tt Config. D.5}, $p_1 = 0.5, p_2 = 0, p_3 = 0$, which has a lower number of secure intermediate nodes, and (ii) {\tt Config. D.6}, $p_1 = 0.5, p_2 = 0, p_3 = 1$, which has a higher number of secure intermediate nodes.

We see that an increase in \( \alpha_1, \alpha_2 \), the weight on communication energy and communication delay, results in an extended training intervals $K^t$, attributed to less frequent global aggregations cutting down on communications. We also observe a decreased count $s_c^t$ of devices engaged in training, thereby reducing communication overhead.
In contrast, increasing \( \alpha_3 \), the weight on the ML performance, leads to more frequent global aggregations, reducing extensive intervals of local training that could lead to biased local models, and augments the number of devices participating in training, leveraging a larger pool of training data for enhanced learning performance.

In comparing {\tt Config. D.5} and {\tt D.6}, it is evident that {\tt Config. D.6} requires a larger value of $\alpha_3$ to achieve comparable results to {\tt Config. D.5}. This can be attributed to the fact that the network in {\tt Config. D.6} incorporates a greater number of secure intermediate nodes. {\color{black} Based on} Theorem~\ref{thm:noncvx}, the stationarity gap caused by DP noises is significantly smaller in {\tt Config. D.6} than in {\tt Config. D.5}. As a result, a substantially larger $\alpha_3$ coefficient is necessary to yield the same optimal control parameters, $K^t$ and $s_c^t$. This observation also explains the behavior of the $\alpha_1$ and $\alpha_2$ plots, where the plot corresponding to {\tt Config. D.6} exhibits a leftward shift relative to that of {\tt Config. D.5}.

Fig.~\ref{fig:ctrl_energy} illustrates the impact of energy levels on the control algorithm's performance during local iterative updates. When the energy cost per iteration for each edge device $E_{iter}$ is low, the system tends to sample a larger number of devices for training, leveraging the reduced cost of gradient descent steps on edge devices. Conversely, when $E_{iter}$ is high, indicating a greater energy expenditure for each gradient descent step, the system adapts by sampling fewer devices for training. This behavior demonstrates our control algorithm's adaptability to varying system energy constraints, ensuring efficient training under diverse energy settings.

\section{Conclusion and Future Work}
\noindent In this study, we developed {\tt M$^2$FDP}, which integrates multi-tier differential privacy (MDP) into multi-tier federated learning (MFL) to enhance the trade-off between privacy and performance. We conducted a thorough theoretical analysis of {\tt M$^2$FDP}, identifying conditions under which the algorithm will converge sublinearly to a controllable region around a stationary point, and revealing the impact of different system factors on the privacy-utility trade-off. Based on our analysis, we developed an adaptive control algorithm to jointly optimize communication energy, latency, and the stationarity gap while enforcing the sub-linear rate and meeting desired privacy requirements. Numerical evaluations confirmed {\tt M$^2$FDP}'s superior training performance and improvements in resource efficiency compared to existing DP-infused FL/HFL algorithms.      

Future works will focus on developing more generalized privacy-preserving algorithms for MFL, capable of withstanding diverse types of attacks. Additionally, we aim to investigate the impact of network design on the effectiveness of various defense strategies across different attack frameworks.

\bibliographystyle{IEEEtran}
\bibliography{sample-base}
 

\newpage

\begingroup
\onecolumn
\raggedbottom

\setcounter{lemma}{0}
\setcounter{theorem}{0}
\setcounter{assumption}{0}
\setcounter{corollary}{0}

\appendix

\subsection{Preliminaries and Outline}\label{app:notations}
\noindent To facilitate the proofs, we define the maximum $L2$-norm sensitivity from each layer's aggregation as:
\begin{equation}
    \Delta_{l} \overset{\Delta}{=} \max_{c_l \in \mathcal{S}_l} \Delta_{l, c_l}, \quad \forall l \in [1, L].
\end{equation}
Additionally, we define the node $d_{l,j}$ to be the parent node of a leaf node $j \in \mathcal{S}_L$ at a specific layer $l$ (which is a unique node in a hierarchical structure).

We will find the following result useful in the proofs:
\begin{fact}\label{fact:1} 
Consider $n$ random real-valued vectors $\mathbf x_1,\cdots,\mathbf x_n\in\mathbb R^m$, the following inequality holds: 
 \begin{equation}
     \sqrt{\mathbb E\left[\Big\Vert\sum\limits_{i=1}^{n} \mathbf x_i\Big\Vert^2\right]}\leq \sum\limits_{i=1}^{n} \sqrt{\mathbb E[\Vert\mathbf x_i\Vert^2]}.
 \end{equation}
\end{fact}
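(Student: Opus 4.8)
The plan is to recognize the claimed inequality as the Minkowski (triangle) inequality for the $L^2$ norm of random vectors, and to establish it by induction on $n$ with the $n=2$ case as the workhorse. The conceptual underpinning is that the map $\mathbf{x}\mapsto\sqrt{\mathbb{E}[\|\mathbf{x}\|^2]}$ is precisely the norm induced by the inner product $\langle\mathbf{x},\mathbf{y}\rangle=\mathbb{E}[\mathbf{x}^\top\mathbf{y}]$ on the space of square-integrable $\mathbb{R}^m$-valued random vectors, so the statement is just the iterated subadditivity of that norm. Rather than invoking abstract $L^2$ theory, I would make the argument self-contained through two applications of Cauchy--Schwarz.

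First I would dispatch the base case $n=2$. Expanding the inner product gives
\[
\mathbb{E}\!\left[\|\mathbf{x}_1+\mathbf{x}_2\|^2\right]=\mathbb{E}\!\left[\|\mathbf{x}_1\|^2\right]+2\,\mathbb{E}\!\left[\mathbf{x}_1^\top\mathbf{x}_2\right]+\mathbb{E}\!\left[\|\mathbf{x}_2\|^2\right].
\]
The cross term is controlled by Cauchy--Schwarz used twice: pointwise one has $\mathbf{x}_1^\top\mathbf{x}_2\leq\|\mathbf{x}_1\|\,\|\mathbf{x}_2\|$, and then applying Cauchy--Schwarz to the scalar random variables $\|\mathbf{x}_1\|$ and $\|\mathbf{x}_2\|$ under the expectation yields $\mathbb{E}[\|\mathbf{x}_1\|\,\|\mathbf{x}_2\|]\leq\sqrt{\mathbb{E}[\|\mathbf{x}_1\|^2]}\,\sqrt{\mathbb{E}[\|\mathbf{x}_2\|^2]}$. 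Substituting shows the right-hand side is bounded by the perfect square $\big(\sqrt{\mathbb{E}[\|\mathbf{x}_1\|^2]}+\sqrt{\mathbb{E}[\|\mathbf{x}_2\|^2]}\big)^2$, and taking square roots of both (nonnegative) sides delivers the $n=2$ inequality.

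Then I would close the induction: setting $\mathbf{S}_{n-1}=\sum_{i=1}^{n-1}\mathbf{x}_i$ and applying the $n=2$ bound to the pair $\mathbf{S}_{n-1},\mathbf{x}_n$ gives $\sqrt{\mathbb{E}[\|\mathbf{S}_{n-1}+\mathbf{x}_n\|^2]}\leq\sqrt{\mathbb{E}[\|\mathbf{S}_{n-1}\|^2]}+\sqrt{\mathbb{E}[\|\mathbf{x}_n\|^2]}$, after which the induction hypothesis bounds the first term by $\sum_{i=1}^{n-1}\sqrt{\mathbb{E}[\|\mathbf{x}_i\|^2]}$. There is no genuine obstacle here; the only points requiring mild care are the double use of Cauchy--Schwarz in the cross term and the implicit square-integrability of each $\mathbf{x}_i$ (ensuring all quantities are finite, with the inequality holding trivially otherwise).
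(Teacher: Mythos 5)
Your proof is correct and rests on the same key step as the paper's one-line argument: bounding each cross term $\mathbb{E}[\mathbf{x}_i^\top\mathbf{x}_j]$ by $\sqrt{\mathbb{E}[\Vert\mathbf{x}_i\Vert^2]\,\mathbb{E}[\Vert\mathbf{x}_j\Vert^2]}$ via pointwise Cauchy--Schwarz followed by Cauchy--Schwarz (H\"older with $p=q=2$) under the expectation. The only difference is organizational --- the paper expands $\mathbb{E}\bigl[\Vert\sum_{i=1}^n\mathbf{x}_i\Vert^2\bigr]$ into the full double sum, bounds all cross terms at once, and factors the result as the perfect square $\bigl(\sum_{i=1}^n\sqrt{\mathbb{E}[\Vert\mathbf{x}_i\Vert^2]}\bigr)^2$, whereas you prove the $n=2$ Minkowski inequality and induct --- which is a cosmetic repackaging of the same argument.
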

\begin{proof} Note that
    \begin{align}
        &\sqrt{\mathbb E\left[\Big\Vert\sum\limits_{i=1}^{n}\mathbf x_i\Big\Vert^2\right]}
        =
        \sqrt{\sum\limits_{i,j=1}^{n}\mathbb E [\mathbf x_i^\top\mathbf x_j]}
        \overset{(a)}{\leq}
\sum\limits_{i,j=1}^{n}\sqrt{\mathbb E [\Vert\mathbf x_i\Vert^2] \mathbb E[\Vert\mathbf x_j\Vert^2]]}
        =
        \sum\limits_{i=1}^{n} \sqrt{\mathbb E[\Vert\mathbf x_i\Vert^2]},
    \end{align}
    where $(a)$ follows from Holder's inequality, $\mathbb E[|XY|] \leq \sqrt{\mathbb E[|X|^2]\mathbb E[ |Y|^2]}$.
\end{proof}

The proofs of the main results, Theorem~\ref{thm:noncvx} and Corollary~\ref{cor:noncvx}, are given in Appendix~\ref{app:thm2}. The proofs of supporting lemmas referenced in these results are given in Appendix~\ref{app:lemmas}.

\subsection{Proofs of Theorem~\ref{thm:noncvx} and Corollary~\ref{cor:noncvx}} \label{app:thm2}
\begin{theorem} \label{thm:noncvx} 
        Under Assumptions~\ref{assump:SGD_noise} and~\ref{assump:genLoss}, upon using {\tt DP-HFL} for ML model training, if $\eta^t=\frac{\gamma}{\sqrt{t+1}}$ with $\gamma\leq\min\{\frac{1}{K^{\max}},\frac{1}{T}\}/\beta$, the cumulative average of global loss gradients satisfies
\begin{align*} 
        &\frac{1}{T}\sum_{t=1}^T \mathbb{E}\left\|\nabla F(w^{(t)})\right\|^2 \leq \frac{2\beta}{\sqrt{T+1}} \mathbb{E}\left[ F(w^{(1)}) -  F(w^{(T+1)})\right] + \frac{K^{\max}\left(G^2\left(1 + \frac{1}{\beta}\right) + \sigma^2\right)}{T}\\
        &+ \frac{8 L M (K^{\max})^4 q^2\log(1/\delta)}{\epsilon^2}\sum_{l=1}^{L}(1 - p_{l-1}^{\min})^2 \bigg(p_{l}^{\max}\frac{\alpha_l^2}{\prod_{l' = l}^{L-1} s_{l'}^2} + \sum_{m = l}^{L-1} \prod_{l' = l}^{m-1}(1 - p_{l'}^{\min})p_{m}^{\max}\frac{\alpha_m^2}{\prod_{l' = l}^{ m-1} s_{l'}\prod_{l'' = m}^{ L-1} s_{l''}^2} \\&+ \prod_{l' = l}^{ L-1}(1 - p_{l'}^{\min})\frac{\alpha_L^2}{\prod_{l' = l}^{ L-1} s_{l'}}\bigg)
\end{align*}
\end{theorem}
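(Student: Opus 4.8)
The plan is to establish a one-step descent inequality for the global loss and then telescope it over the global rounds $t=1,\dots,T$. First I would start from the global-model dynamics in~\eqref{eq:skprof_iter}, which expresses $w^{(t+1)}$ as $w^{(t)}$ plus the aggregated DP noise $n_{DP}^t$ minus the step-size-scaled sum of local stochastic gradients routed up the aggregation tree. Applying the $\beta$-smoothness of $F$ (Assumption~\ref{assump:genLoss}) to this update and taking expectations — using that both the SGD noise and the injected Gaussian noise are zero-mean and independent of the aggregated gradient — yields the three-term decomposition in~\eqref{eq:ld_ncx}: the descent inner-product term $(a)$, the squared-gradient-sum term $(b)$, and the DP-noise term $(c)=\tfrac{\beta}{2}\mathbb{E}\|n_{DP}^t\|^2$.

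The second step is to control $(a)$ and $(b)$, which I expect to be routine. For $(a)$, the standard manipulation rewrites the inner product of $\nabla F(w^{(t)})$ with the aggregated gradient in terms of $\|\nabla F(w^{(t)})\|^2$ and the drift of each local model $w_j^{(t,k)}$ away from $w^{(t)}$ within the interval; the bounded-gradient property $\|g_i^{(t,k)}\|\le G$ together with $\beta$-smoothness bounds this drift, producing the $G^2(1+\tfrac{1}{\beta})$ contribution. For $(b)$, Fact~\ref{fact:1} (the Minkowski-type inequality for $L_2$ norms of expectations) and the gradient bound give a factor of order $(K^t G)^2$, while the SGD-noise variance $\sigma^2$ from Assumption~\ref{assump:SGD_noise} enters the variance of the aggregated gradient; together these deliver what ultimately becomes term $(a_2)$.

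The main obstacle is term $(c)$, the expected squared norm of $n_{DP}^t$. Here I would expand $n_{DP}^t$ as the layer-indexed weighted sum of injected Gaussian noises, apply Fact~\ref{fact:1} to reduce the estimate to the noise contributed at each layer $l$, and then invoke the decisive structural fact formalized in Lemma~\ref{lem:bd_on_DP_noise_main}: the noise reaching layer $l$ partitions into three disjoint contributions according to where the \emph{first} insecure node appears along each root-to-leaf path — branches secure all the way down to layer $l$ (captured by $\mathcal{A}_l$, scaling like $1/\prod_{l'=l}^{L-1}s_{l'}^2$), branches where noise first enters at an intermediate layer $m\in[l+1,L-1]$ (captured by $\mathcal{B}_l$), and branches where noise enters at the edge devices (captured by $\mathcal{C}_l$, scaling like $1/\prod_{l'=l}^{L-1}s_{l'}$). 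Substituting the per-layer sensitivities $\Delta_{l,c_l}$ from Lemma~\ref{lem:DeltaM} and the Gaussian-mechanism variance $\sigma^2(\Delta_{l,c})$ from Proposition~\ref{prop:GM} — in which the variance scales with $T\log(1/\delta)/\epsilon^2$ — produces the layered bound in Lemma~\ref{lem:bd_on_DP_noise_main} and hence the $(\mathcal{A}_l+\mathcal{B}_l+\mathcal{C}_l)$ structure of term $(b)$.

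Finally, I would reinsert these bounds into~\eqref{eq:ld_ncx} and invoke the step-size condition $\gamma\le\min\{\tfrac{1}{K^{\max}},\tfrac{1}{T}\}/\beta$, which guarantees $\eta^t\le 1/(\max\{T,K^t\}\beta)$. The factor $1/T$ is what cancels the $T$ growth carried by the DP variance from Proposition~\ref{prop:GM}, so that the DP contribution collapses into the $T$-independent constant $(b)$ rather than diverging; this is the crux of the whole argument. Rearranging gives a per-round inequality of the form $\tfrac{\eta^t K^t}{2}\mathbb{E}\|\nabla F(w^{(t)})\|^2\le \mathbb{E}[F(w^{(t)})-F(w^{(t+1)})]+\mathcal{O}((\eta^t K^t)^2)+(\text{DP constant})$. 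Summing over $t$ telescopes the loss differences into $\mathbb{E}[F(w^{(1)})-F(w^{(T+1)})]$, and bounding the aggregate effective step size via the schedule $\eta^t=\gamma/\sqrt{t+1}$ (together with the elementary estimates $\sum_t 1/\sqrt{t+1}=\mathcal{O}(\sqrt{T})$ and $\eta^t\ge\eta^T$) converts the telescoped descent into the $\mathcal{O}(1/\sqrt{T+1})$ prefactor of $(a_1)$ and the $\mathcal{O}(1/T)$ prefactor of $(a_2)$. Corollary~\ref{cor:noncvx} then follows by specializing the trust profile so that $p_{l'}^{\max}=p_{l'}^{\min}=1$ for all $l'>m$, which zeroes out the noise contributions from the fully secure lower layers and collapses the triple sum in term $(b)$ to the two displayed terms.
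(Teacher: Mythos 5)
Your proposal follows essentially the same route as the paper's proof: the smoothness-based one-step decomposition in \eqref{eq:ld_ncx}, the polarization/drift bound for the inner-product term, the layered $\mathcal{A}_l,\mathcal{B}_l,\mathcal{C}_l$ decomposition of the DP noise via Lemma~\ref{lem:bd_on_DP_noise_main} with sensitivities from Lemma~\ref{lem:DeltaM} and variances from Proposition~\ref{prop:GM}, the step-size condition $\eta^t\le 1/(\max\{T,K^t\}\beta)$ cancelling the $T$-growth of the DP variance, and the final telescoping. The one detail you gloss over is that the local-model drift $w_j^{(t,k)}-w^{(t)}$ also contains the DP noise broadcast back down to the devices during within-round local aggregations (which is why the paper's Lemma~\ref{lem:inner} carries its own DP term that is later absorbed into term (b)), but this does not change the structure or validity of the argument.
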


\begin{proof}
    Consider $t \in [1, T]$, for a given aggregation interval $K^t$, the global average of the local models follows the following dynamics:
    \begin{align}
    \label{global_w_iter}
        w^{(t+1)} =& \sum_{c = 1}^{N_1} \rho_c \left(w_{1,c}^{(t,K^t+1)} + n_{1,c}^{(t,K^t+1)}\right) \nonumber \\
        =& w^{(t)} - \eta^t \sum_{k=1}^{K^t} \sum_{c_1 = 1}^{N_1} \rho_{c_1} \sum_{c_2\in \mathcal{S}_{1,c_1}} \rho_{1,c_1,c_2}\ldots\sum_{j\in\mathcal{S}_{L-1,c_{L-1}}} \rho_{L-1, c_{L-1},j} {g}_j^{(t,k)}\\
        &+ \underbrace{\sum_{c=1}^{N_1}\rho_c n_{1,c}^{(t,K^t+1)} + \sum_{l=1}^{L-1} \sum_{\substack{k=1,\\ K_l^t | k }}^{K^t} \left(\sum_{c_1 = 1}^{N_1} \rho_{c_1} \sum_{c_2\in \mathcal{S}_{1,c_1}} \rho_{1,c_1,c_2}\ldots\sum_{c _l\in \mathcal{S}_l \cap \mathcal{N}_{U,l}} \sum_{i\in \mathcal{S}_{l,c_l}} \rho_{l,c_l,i}n_{l+1,i}^{(t,k)}\right)}_\text{$n_{DP}^t$}
    \end{align}
    On the other hand, the $\beta$-smoothness of the global function $F$ implies
    \begin{equation}
        F(w^{(t+1)}) \leq F(w^{(t)}) + \nabla F(w)^\top (w^{(t+1)} - w^{(t)}) + \frac{\beta}{2}\|w^{(t+1)} - w^{(t)}\|^2.
    \end{equation}
    By injecting this inequality into \eqref{global_w_iter}, taking the expectations on both sides of the inequality, and use the fact that $\mathbb{E}[n_{DP}^t] = 0$ yields:
    \begin{align}
    \label{eq:A_main_iter}
        &\mathbb{E}\left[ F(w^{(t+1)}) -  F(w^{(t)})\right] \leq \underbrace{- \eta^t \sum_{k=1}^{K^t} \mathbb{E}\left\langle \nabla F(w^{(t)}),  \sum_{c_1 = 1}^{N_1} \rho_{c_1} \sum_{c_2\in \mathcal{S}_{1,c_1}} \rho_{1,c_1,c_2}\ldots\sum_{j\in\mathcal{S}_{L-1,c_{L-1}}} \rho_{L-1, c_{L-1},j} \nabla F_j(w_j^{(t,k)})\right\rangle}_\text{(a)} \notag\\
        &+ \underbrace{\frac{(\eta^t)^2 \beta}{2} \mathbb{E}\left[\left\|\sum_{k=1}^{K^t} \sum_{c_1 = 1}^{N_1} \rho_{c_1} \sum_{c_2\in \mathcal{S}_{1,c_1}} \rho_{1,c_1,c_2}\ldots\sum_{j\in\mathcal{S}_{L-1,c_{L-1}}} \rho_{L-1, c_{L-1},j} \nabla F_j(w_j^{(t,k)})\right\|^2\right]}_\text{(b)} + \underbrace{\frac{\beta}{2}\mathbb{E}\left[\left\|n_{DP}^t\right\|^2\right]}_\text{(c)} + \frac{(\eta^tK^t)^2\beta \sigma^2}{2} 
    \end{align}
    To bound (a), we apply Lemma~\ref{lem:inner} (see Appendix \ref{app:lemmas}) to get
    \begin{align}
    \label{eq:A_bound_on_innerprod}
       &- \eta^t \sum_{k=1}^{K^t} \mathbb{E}\left\langle \nabla F(w^{(t)}),  \sum_{c_1 = 1}^{N_1} \rho_{c_1} \sum_{c_2\in \mathcal{S}_{1,c_1}} \rho_{1,c_1,c_2}\ldots\sum_{j\in\mathcal{S}_{L-1,c_{L-1}}} \rho_{L-1, c_{L-1},j} \nabla F_j(w_j^{(t,k)})\right\rangle
        \leq
        -\frac{\eta^t K^t}{2}\mathbb{E}\left\|\nabla F(w^{(t)})\right\|^2 \notag\\
        &-\frac{\eta^t}{2} \mathbb{E}\sum_{k=1}^{K^t}\left\|\sum_{c_1 = 1}^{N_1} \rho_{c_1} \sum_{c_2\in \mathcal{S}_{1,c_1}} \rho_{1,c_1,c_2}\ldots\sum_{j\in\mathcal{S}_{L-1,c_{L-1}}} \rho_{L-1, c_{L-1},j} \nabla F_j(w_j^{(t,k)})\right\|^2 + (\eta^t)^3(K^t)^3\beta^2G^2\notag\\
        &+\frac{2(\eta^t)^2 L\beta^2 M (K^t)^5 T q^2 G^2\log(1/\delta)}{\epsilon^2}\sum_{l=2}^{L}(1 - p_{l-1}^{\min})^2 \bigg(p_{l}^{\max}\frac{\alpha_l^2}{\prod_{l' = l}^{L-1} s_{l'}^2} \\&+ \sum_{m = l}^{L-1} \prod_{l' = l}^{m-1}(1 - p_{l'}^{\min})p_{m}^{\max}\frac{\alpha_m^2}{\prod_{l' = l}^{ m-1} s_{l'}\prod_{l'' = m}^{ L-1} s_{l''}^2} + \prod_{l' = l}^{ L-1}(1 - p_{l'}^{\min})\frac{\alpha_L^2}{\prod_{l' = l}^{ L-1} s_{l'}}\bigg)
    \end{align}
    To bound (b), we use Assumption~\ref{assump:genLoss} and get
    \begin{align}
    \label{eq:A_bound_on_grad}
        &\frac{(\eta^t)^2 \beta}{2} \mathbb{E}\left[\left\|\sum_{k=1}^{K^t} \sum_{c_1 = 1}^{N_1} \rho_{c_1} \sum_{c_2\in \mathcal{S}_{1,c_1}} \rho_{1,c_1,c_2}\ldots\sum_{j\in\mathcal{S}_{L-1,c_{L-1}}} \rho_{L-1, c_{L-1},j} \nabla F_j(w_j^{(t,k)})\right\|^2\right] \\
        &\leq \frac{(\eta^t)^2 \beta}{2} K^t\sum_{k=1}^{K^t} \sum_{c_1 = 1}^{N_1} \rho_{c_1} \sum_{c_2\in \mathcal{S}_{1,c_1}} \rho_{1,c_1,c_2}\ldots\sum_{j\in\mathcal{S}_{L-1,c_{L-1}}} \rho_{L-1, c_{L-1},j}\mathbb{E}\left[\left\| \nabla F_j(w_j^{(t,k)})\right\|^2\right]\\
        &\leq \frac{(\eta^tK^t)^2 \beta G^2}{2}
    \end{align}
    To bound (c), we apply Lemma~\ref{lem:bd_on_DP_noise} and get
    \begin{align}
    \label{eq:A_bound_on_DP}
        &\frac{\beta}{2}\mathbb{E}\left[\left\|n_{DP}^t\right\|^2\right] \\
        &\leq \frac{\beta}{2}\mathbb{E}\left[\left\| \sum_{c=1}^{N_1}\rho_c n_{1,c}^{(t,K^t+1)} + \sum_{l=1}^{L-1} (1 - p_{l}^{\min})\sum_{\substack{k=1,\\ K_l^t | k }}^{K^t} \left(\sum_{c_1 = 1}^{N_1} \rho_{c_1} \sum_{c_2\in \mathcal{S}_{1,c_1}} \rho_{1,c_1,c_2}\ldots\sum_{c _l\in \mathcal{S}_l \cap \mathcal{N}_{U,l}} \sum_{i\in \mathcal{S}_{l,c_l}} \rho_{l,c_l,i}n_{l+1,i}^{(t,k)}\right)\right\|^2\right]\\
        & \leq \frac{\beta L}{2}\mathbb{E}\left[\left\|\sum_{c=1}^{N_1}\rho_c n_{1,c}^{(t,K^t+1)}\right\|^2\right] \\
        &+ \frac{\beta L}{2}\sum_{l=1}^{L-1}(1 - p_{l}^{\min})^2\mathbb{E} \left[\left\| \sum_{\substack{k=1,\\ K_l^t | k }}^{K^t} \left(\sum_{c_1 = 1}^{N_1} \rho_{c_1} \sum_{c_2\in \mathcal{S}_{1,c_1}} \rho_{1,c_1,c_2}\ldots\sum_{c _l\in \mathcal{S}_l \cap \mathcal{N}_{U,l}} \sum_{i\in \mathcal{S}_{l,c_l}} \rho_{l,c_l,i}n_{l+1,i}^{(t,k)}\right)\right\|^2\right]\\
        &\leq \frac{ 2(\eta^t)^2 L\beta  M (K^t)^4 T q^2G^2\log(1/\delta)}{\epsilon^t}\sum_{l=1}^{L}(1 - p_{l-1}^{\min})^2 \bigg(p_{l}^{\max}\frac{\alpha_l^2}{\prod_{l' = l}^{L-1} s_{l'}^2} \\&+ \sum_{m = l}^{L-1} \prod_{l' = l}^{m-1}(1 - p_{l'}^{\min})p_{m}^{\max}\frac{\alpha_m^2}{\prod_{l' = l}^{ m-1} s_{l'}\prod_{l'' = m}^{ L-1} s_{l''}^2} + \prod_{l' = l}^{ L-1}(1 - p_{l'}^{\min})\frac{\alpha_L^2}{\prod_{l' = l}^{ L-1} s_{l'}}\bigg)
    \end{align}
    By substituting \eqref{eq:A_bound_on_innerprod}, \eqref{eq:A_bound_on_grad}, \eqref{eq:A_bound_on_DP} into \eqref{eq:A_main_iter}, and using Assumption~\ref{assump:genLoss} yields
    \begin{align}
        &\mathbb{E}\left[ F(w^{(t+1)}) -  F(w^{(t)})\right] \leq -\frac{\eta^t K^t}{2}\mathbb{E}\left\|\nabla F(w^{(t)})\right\|^2 + \frac{(\eta^2K^t)^2 \beta}{2} \left(G^2 + \sigma^2 \right)\\
        &+\frac{4(\eta^t)^3 L\beta^2 M (K^t)^5 T q^2 G^2\log(1/\delta)}{\epsilon^2}\sum_{l=2}^{L}(1 - p_{l-1}^{\min})^2 \bigg(p_{l}^{\max}\frac{\alpha_l^2}{\prod_{l' = l}^{L-1} s_{l'}^2} + \sum_{m = l}^{L-1} \prod_{l' = l}^{m-1}(1 - p_{l'}^{\min})p_{m}^{\max}\frac{\alpha_m^2}{\prod_{l' = l}^{ m-1} s_{l'}\prod_{l'' = m}^{ L-1} s_{l''}^2} \\&+ \prod_{l' = l}^{ L-1}(1 - p_{l'}^{\min})\frac{\alpha_L^2}{\prod_{l' = l}^{ L-1} s_{l'}}\bigg)\\
        &+\frac{ 2(\eta^t)^2 L\beta  M (K^t)^4 T q^2G^2\log(1/\delta)}{\epsilon^t}\sum_{l=1}^{L}(1 - p_{l-1}^{\min})^2 \bigg(p_{l}^{\max}\frac{\alpha_l^2}{\prod_{l' = l}^{L-1} s_{l'}^2} + \sum_{m = l}^{L-1} \prod_{l' = l}^{m-1}(1 - p_{l'}^{\min})p_{m}^{\max}\frac{\alpha_m^2}{\prod_{l' = l}^{ m-1} s_{l'}\prod_{l'' = m}^{ L-1} s_{l''}^2} \\&+ \prod_{l' = l}^{ L-1}(1 - p_{l'}^{\min})\frac{\alpha_L^2}{\prod_{l' = l}^{ L-1} s_{l'}}\bigg)
    \end{align}
   By selecting $\eta^t \leq \frac{1}{ K^t\beta}$ gives us
    \begin{align}
        &\frac{\eta^t K^t}{2}\mathbb{E}\left\|\nabla F(w^{(t)})\right\|^2 \leq \mathbb{E}\left[ F(w^{(t)}) -  F(w^{(t+1)})\right] + \frac{(\eta^2K^t)^2 \beta}{2} \left(G^2 + \sigma^2 \right)\\
        &+ (\eta^tK^2)^2\frac{6 L\beta  M (K^t)^2 T q^2G^2\log(1/\delta)}{\epsilon^2}\sum_{l=1}^{L}(1 - p_{l-1}^{\min})^2 \bigg(p_{l}^{\max}\frac{\alpha_l^2}{\prod_{l' = l}^{L-1} s_{l'}^2} + \sum_{m = l}^{L-1} \prod_{l' = l}^{m-1}(1 - p_{l'}^{\min})p_{m}^{\max}\frac{\alpha_m^2}{\prod_{l' = l}^{ m-1} s_{l'}\prod_{l'' = m}^{ L-1} s_{l''}^2} \\&+ \prod_{l' = l}^{ L-1}(1 - p_{l'}^{\min})\frac{\alpha_L^2}{\prod_{l' = l}^{ L-1} s_{l'}}\bigg)
    \end{align}
    Dividing both hand sides by $\frac{\eta^tK^t}{2}$ and plug in $K^{\max} = \max_t K^t$, averaging across global aggregation yields
    \begin{align}
        &\frac{1}{T}\sum_{t=1}^T \mathbb{E}\left\|\nabla F(w^{(t)})\right\|^2 \leq \frac{2\beta}{\sqrt{T+1}} \mathbb{E}\left[ F(w^{(1)}) -  F(w^{(T+1)})\right] + \frac{K^{\max}\left(G^2 + \sigma^2\right)}{T}\\
        &+ \frac{8 L M (K^{\max})^3 q^2\log(1/\delta)}{\epsilon^2}\sum_{l=1}^{L}(1 - p_{l-1}^{\min})^2 \bigg(p_{l}^{\max}\frac{\alpha_l^2}{\prod_{l' = l}^{L-1} s_{l'}^2} + \sum_{m = l}^{L-1} \prod_{l' = l}^{m-1}(1 - p_{l'}^{\min})p_{m}^{\max}\frac{\alpha_m^2}{\prod_{l' = l}^{ m-1} s_{l'}\prod_{l'' = m}^{ L-1} s_{l''}^2} \\&+ \prod_{l' = l}^{ L-1}(1 - p_{l'}^{\min})\frac{\alpha_L^2}{\prod_{l' = l}^{ L-1} s_{l'}}\bigg)
    \end{align}
\end{proof}

\begin{corollary} \label{cor:noncvx}
    Under Assumptions~\ref{assump:SGD_noise} and \ref{assump:genLoss}, if $\eta^t=\frac{\gamma}{\sqrt{t+1}}$ with $\gamma\leq\min\{\frac{1}{K^{\max}},\frac{1}{T}\}/\beta$, and let $m$ be the layer where all edge servers below it are all secure servers ($m$ is the lowest layer to have insecure servers), i.e.
    \begin{align}
    \label{eq:cor_condition}
        p_{l'}^{\max} = p_{l'}^{\min} = 1, \quad l' \in [m+1 , L],
    \end{align}
    then the cumulative average of global gradient satisfies
    \begin{align} \label{eq:cor_noncvx}
\small
        &\frac{1}{T}\sum_{t=1}^T \mathbb{E}\left\|\nabla F(w^{(t)})\right\|^2 \leq \frac{2\beta}{\sqrt{T+1}} \mathbb{E}\left[ F(w^{(1)}) -  F(w^{(T+1)})\right] \notag\\
        &+ \frac{K^{\max}\left(G^2\left(1 + \frac{1}{\beta}\right) + \sigma^2\right)}{T}\notag\\
        &+ \frac{8 L M (K^{\max})^4 q^2\log(1/\delta)}{\epsilon^2}\sum_{l=1}^{m}  \frac{(1 - p_{l-1}^{\min})^2p_l^{\max}\alpha_l^2}{\prod_{l'=l}^{L-1}s_{l'}^2}\notag\\
        &+ \frac{8 L M (K^{\max})^4 q^2\log(1/\delta)}{\epsilon^2}\sum_{l=1}^{m} \frac{(1 - p_{l-1}^{\min})^2(1 - p_l^{\min})(\alpha_l')^2}{\prod_{l'=l}^{m}s_{l'}\prod_{l''=m+1}^{L-1}s_{l''}^2}.
\end{align}
\end{corollary}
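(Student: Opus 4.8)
The plan is to obtain Corollary~\ref{cor:noncvx} as a specialization of Theorem~\ref{thm:noncvx}: I would substitute the trust condition \eqref{eq:cor_condition} into the DP term (b) of \eqref{eq:noncvx_rate} and simplify the triple-indexed sum $\sum_{l=1}^{L}(1-p_{l-1}^{\min})^2(\mathcal{A}_l+\mathcal{B}_l+\mathcal{C}_l)$. The only structural input required is the pair of identities $1-p_{l'}^{\min}=0$ and $p_{l'}^{\max}=1$ for every $l'\in[m+1,L]$, which encode that all layers strictly below the lowest insecure layer $m$ are fully trusted. The first two terms of the bound contain no trust ratios, so they are copied unchanged.

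First I would dispose of the outer summation. The prefactor $(1-p_{l-1}^{\min})^2$ vanishes whenever $l-1\geq m+1$, i.e.\ for all $l\geq m+2$, so only indices $l\in[1,m+1]$ can contribute; the boundary index $l=m+1$ is delicate and is treated last. For each surviving $l\leq m$ I would simplify the three groupings of \eqref{eq:ABC_terms} separately. The secure-branch grouping $\mathcal{A}_l=p_l^{\max}\alpha_l^2/\prod_{l'=l}^{L-1}s_{l'}^2$ contains no product over the trusted layers and therefore passes through untouched, reproducing exactly the first DP sum of \eqref{eq:cor_noncvx}. The edge grouping $\mathcal{C}_l=\prod_{l'=l}^{L-1}(1-p_{l'}^{\min})\,\alpha_L^2/\prod_{l'=l}^{L-1}s_{l'}$ carries the factor $1-p_{l'}^{\min}$ for every $l'\in[m+1,L-1]$, so it is identically zero whenever $m\leq L-2$ (collapsing into the $\prod_{l'=l}^{m}s_{l'}$ form only in the degenerate case $m=L-1$), consistent with the fact that no fresh noise is injected at the edge when the bottom layers are trusted.

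The core of the argument is the intermediate grouping $\mathcal{B}_l$. In its inner sum over $m'\in[l,L-1]$, the product $\prod_{l'=l}^{m'-1}(1-p_{l'}^{\min})$ annihilates every term with $m'-1\geq m+1$, leaving only $m'\in[l,m+1]$. Among these the denominator $\prod_{l'=l}^{m'-1}s_{l'}\prod_{l''=m'}^{L-1}s_{l''}^2$ is smallest --- hence the summand largest --- at the boundary $m'=m+1$, where it equals precisely $\prod_{l'=l}^{m}s_{l'}\prod_{l''=m+1}^{L-1}s_{l''}^2$, the denominator appearing in the last sum of \eqref{eq:cor_noncvx}. I would therefore upper-bound the whole family of surviving $\mathcal{B}_l$ contributions by this dominant boundary summand: replace the nested product $\prod_{l'=l}^{m}(1-p_{l'}^{\min})$ by its leading single factor $(1-p_l^{\min})$ (legitimate since each factor lies in $[0,1]$), bound every admissible denominator below by $\prod_{l'=l}^{m}s_{l'}\prod_{l''=m+1}^{L-1}s_{l''}^2$ (legitimate since $s_{l'}\geq 1$), and fold the finitely many constants $\alpha_{m'}^2$ and the count of surviving indices into the single consolidated constant $(\alpha_l')^2$ --- precisely what the prime records. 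This yields term (c) as a sum over $l\in[1,m]$.

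The main obstacle is the boundary index $l=m+1$: the generic Theorem~\ref{thm:noncvx} bound assigns it the nonzero weight $(1-p_m^{\min})^2(\mathcal{A}_{m+1}+\mathcal{B}_{m+1})$, yet physically no DP noise is received at a layer all of whose nodes are trusted. Handling this cleanly requires returning to Lemma~\ref{lem:bd_on_DP_noise_main} and observing that the noise passed into layer $m+1$ is genuinely zero, since its children upload to trusted parents and inject nothing; the boundary noise first appears at layer $m$, whose insecure nodes receive fresh noise from their trusted layer-$(m+1)$ children, and that contribution is exactly what the last sum of \eqref{eq:cor_noncvx} captures at $l=m$. I would make this reduction rigorous (including the edge cases with $m$ near $L$) before consolidating constants. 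As a final consistency check I would verify the monotonicity claim stated after the corollary: advancing $m\to m+1$ moves the factor $s_{m+1}$ from the squared block $\prod_{l''=m+1}^{L-1}s_{l''}^2$ into the single block $\prod_{l'=l}^{m}s_{l'}$, dividing the denominator of term (c) by $s_{m+1}$ and thus enlarging the gap by exactly that factor.
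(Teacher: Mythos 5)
Your overall route is the same as the paper's: substitute the trust condition into term (b) of Theorem~\ref{thm:noncvx}, observe that the prefactor $(1-p_{l-1}^{\min})^2$ kills all indices $l\geq m+2$, that $\mathcal{C}_l$ vanishes (for $m\leq L-2$), and that in $\mathcal{B}_l$ only $m'\leq m+1$ survives; then dominate the surviving inner summands by the boundary one with denominator $\prod_{l'=l}^{m}s_{l'}\prod_{l''=m+1}^{L-1}s_{l''}^2$ and fold the remaining constants into $(\alpha_l')^2$. All of that matches the paper's Appendix computation, and your closing monotonicity check of the factor $s_{m+1}$ is also correct.

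The gap is in your treatment of the boundary index $l=m+1$, which you rightly single out as the delicate point but then dispose of with an argument that does not work. Your claim that ``the noise passed into layer $m+1$ is genuinely zero'' only shows that $\mathcal{B}_{m+1}$ and $\mathcal{C}_{m+1}$ vanish (nothing is inherited from layers $m+2,\dots,L$, since all layer-$(m+1)$ parents are trusted). It says nothing about $\mathcal{A}_{m+1}=p_{m+1}^{\max}\alpha_{m+1}^2/\prod_{l'=m+1}^{L-1}s_{l'}^2$, which under the hypothesis equals $\alpha_{m+1}^2/\prod_{l'=m+1}^{L-1}s_{l'}^2$ and represents the \emph{fresh} noise that the secure layer-$(m+1)$ children must inject when uploading to their \emph{insecure} layer-$m$ parents during an aggregation at layer $m$. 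Since layer $m$ contains an insecure node by definition, $p_m^{\min}<1$, so the weight $(1-p_m^{\min})^2$ is strictly positive and the $l=m+1$ term of Theorem~\ref{thm:noncvx} is genuinely nonzero. Nor is it ``exactly what the last sum of \eqref{eq:cor_noncvx} captures at $l=m$'': that summand has prefactor $(1-p_{m-1}^{\min})^2(1-p_m^{\min})$ and an extra $s_m$ in its denominator, whereas the missing term has prefactor $(1-p_m^{\min})^2$ and denominator $\prod_{l'=m+1}^{L-1}s_{l'}^2$; it would naturally appear as an $l=m+1$ summand of the \emph{first} DP sum in \eqref{eq:cor_noncvx}, which the corollary truncates at $l=m$. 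For what it is worth, the paper's own proof handles this no better --- its first displayed ``equality'' silently replaces $\sum_{l=1}^{L}$ by $\sum_{l=1}^{m}$ --- so you have surfaced a real loose end in the corollary rather than introduced one; but your proposed resolution does not close it, and the step as you describe it would fail.
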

\begin{proof}
    Starting from the result from Theorem~\ref{thm:noncvx}, we can inject the condition from \eqref{eq:cor_condition}:
    \begin{align}
        &\sum_{l=1}^{L}(1 - p_{l-1}^{\min})^2 \bigg(p_{l}^{\max}\frac{\alpha_l^2}{\prod_{l' = l}^{L-1} s_{l'}^2} + \sum_{m' = l}^{L-1} \prod_{l' = l}^{m'-1}(1 - p_{l'}^{\min})p_{m'}^{\max}\frac{\alpha_{m'}^2}{\prod_{l' = l}^{ m'-1} s_{l'}\prod_{l'' = m'}^{ L-1} s_{l''}^2} + \prod_{l' = l}^{ L-1}(1 - p_{l'}^{\min})\frac{\alpha_L^2}{\prod_{l' = l}^{ L-1} s_{l'}}\bigg)\notag\\
        & = \sum_{l=1}^{m}(1 - p_{l-1}^{\min})^2 \bigg(p_{l}^{\max}\frac{\alpha_l^2}{\prod_{l' = l}^{m} s_{l'}^2} + \sum_{m' = l}^{m} \prod_{l' = l}^{m'}(1 - p_{l'}^{\min})p_{m'+1}^{\max}\frac{\alpha_{m'}^2}{\prod_{l' = l}^{ m'} s_{l'}\prod_{l'' = m'+1}^{ L-1} s_{l''}^2}\bigg)\notag\\
        &\leq \sum_{l=1}^{m}(1 - p_{l-1}^{\min})^2 \bigg(p_{l}^{\max}\frac{\alpha_l^2}{\prod_{l' = l}^{m} s_{l'}^2} + \sum_{m' = l}^{m} \prod_{l' = l}^{m'}(1 - p_{l'}^{\min})p_{m'+1}^{\max}\frac{\alpha_m^2}{\prod_{l' = l}^{ m} s_{l'}\prod_{l'' = m+1}^{ L-1} s_{l''}^2}\bigg) \notag\\
        &\leq \sum_{l=1}^{m}(1 - p_{l-1}^{\min})^2 \bigg(p_{l}^{\max}\frac{\alpha_l^2}{\prod_{l' = l}^{m} s_{l'}^2} + (1-p_{l}^{\min})\frac{(\alpha_l')^2}{\prod_{l' = l}^{ m} s_{l'}\prod_{l'' = m+1}^{ L-1} s_{l''}^2}\bigg)
    \end{align}
    Plugging this inequality back to Theorem~\ref{thm:noncvx} yields the result.
\end{proof}

\subsection{Proofs of Lemmas}\label{app:lemmas}
\begin{lemma}\label{lem:Delta} 
    Under Assumption~\ref{assump:genLoss}, the $L_2$-norm sensitivity of the exchanged gradients during local aggregations can be obtained as follows:
\begin{itemize}
    \item For any $l \in [1, L-1]$, and any given node $c_l \in \mathcal{S}_l$, the sensitivity is bounded by
    \begin{align}\label{eq:locAgg_egd_sens1}
        \Delta_{l, c_l} &= \max_{\mathcal{D}, \mathcal{D}'}\left\|\eta^t\sum_{k=1}^{K^t}\sum_{c_{l+1}\in \mathcal{S}_{l,c_l}}\rho_{l, c_l, c_{l+1}} \ldots \sum_{j\in \mathcal{S}_{L-1, c_{L-1}}} \rho_{L-1, c_{L-1}, j} \left(g_j^{(t,k)}(\mathcal{D}) - g_j^{(t,k)}(\mathcal{D}')\right)\right\| \\
        &\leq \frac{2\eta^t K^t G}{\prod_{l'=l}^{L-1} s_{l'}}, \quad \forall l \in [1, L-1]
    \end{align}
    \item For any given node $j \in \mathcal{S}_L$, the sensitivity is bounded by
    \begin{align}\label{eq:locAgg_egd_sens2}
        \Delta_{L,j} = \max_{\mathcal{D}, \mathcal{D}'}\left\|\eta^t\sum_{k=1}^{K^t}\left(g_j^{(t,k)}(\mathcal{D}) - g_j^{(t,k)}(\mathcal{D}')\right)\right\| \leq 2\eta^t K^t G
    \end{align}
\end{itemize}

\end{lemma}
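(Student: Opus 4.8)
The plan is to prove the two sensitivity bounds separately, handling the simpler edge-device case \eqref{eq:locAgg_egd_sens2} first and then exploiting the hierarchical weight structure for the intermediate-node case \eqref{eq:locAgg_egd_sens1}. The only ingredients required are the triangle inequality, the uniform gradient-norm bound $\|g_j^{(t,k)}\|\leq G$ from Assumption~\ref{assump:genLoss}, and the structural fact that adjacent datasets $\mathcal{D},\mathcal{D}'$ differ in a single data point; since the device datasets $\{\mathcal{D}_j\}$ partition the global data, this perturbs the stochastic gradient of exactly one edge device.

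For the edge-device bound, I would apply the triangle inequality across the $K^t$ local steps and then bound each per-step gradient difference using Assumption~\ref{assump:genLoss} on both datasets:
\begin{align*}
\left\|\eta^t\sum_{k=1}^{K^t}\left(g_j^{(t,k)}(\mathcal{D}) - g_j^{(t,k)}(\mathcal{D}')\right)\right\| &\leq \eta^t\sum_{k=1}^{K^t}\left(\|g_j^{(t,k)}(\mathcal{D})\| + \|g_j^{(t,k)}(\mathcal{D}')\|\right) \leq 2\eta^t K^t G.
\end{align*}
Taking the maximum over adjacent pairs preserves the bound, giving \eqref{eq:locAgg_egd_sens2}.

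For the intermediate-node bound, the key observation is that because $\mathcal{D}$ and $\mathcal{D}'$ differ in one element, there is a unique device $j^\star$ whose gradient difference $g_{j^\star}^{(t,k)}(\mathcal{D}) - g_{j^\star}^{(t,k)}(\mathcal{D}')$ can be nonzero; every other leaf contributes zero. Since the child sets $\mathcal{S}_{l',c}$ are pairwise disjoint, there is a single path $c_l \to c_{l+1} \to \cdots \to j^\star$ down the hierarchy, nonempty only when $j^\star$ is a descendant of $c_l$ (otherwise $\Delta_{l,c_l}=0$). The nested weighted sum therefore collapses to this one path, weighted by the product $\prod_{l'=l}^{L-1}\rho_{l',c_{l'},c_{l'+1}}$. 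Because each aggregation weight is uniform, $\rho_{l',c,c'} = 1/|\mathcal{S}_{l',c}| \leq 1/s_{l'}$ by the definition $s_{l'} = \min_{c\in\mathcal{S}_{l'}}|\mathcal{S}_{l',c}|$, so the product is at most $1/\prod_{l'=l}^{L-1} s_{l'}$. Combining this with the same triangle-inequality-plus-gradient-bound argument yields
\begin{align*}
\Delta_{l,c_l} \leq \eta^t \cdot \frac{1}{\prod_{l'=l}^{L-1} s_{l'}} \cdot \sum_{k=1}^{K^t}\left\|g_{j^\star}^{(t,k)}(\mathcal{D}) - g_{j^\star}^{(t,k)}(\mathcal{D}')\right\| \leq \frac{2\eta^t K^t G}{\prod_{l'=l}^{L-1} s_{l'}}.
\end{align*}

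The main obstacle I anticipate is the careful justification of the collapse-to-a-single-path step: one must argue that adjacency perturbs only one device and that the disjointness of the children sets $\mathcal{S}_{l',c}$ forces a unique surviving summand, so that no cross-terms inflate the weight product beyond $1/\prod_{l'=l}^{L-1} s_{l'}$. Once this combinatorial structure is pinned down, the remaining norm bounding is routine.
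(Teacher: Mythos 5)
Your proposal is correct and follows essentially the same route as the paper's proof: both arguments observe that adjacency perturbs the gradient of exactly one device, collapse the nested weighted sum to the single surviving path whose weight product is at most $1/\prod_{l'=l}^{L-1} s_{l'}$ (using $\rho_{l',c,c'} \leq 1/s_{l'}$), and then apply the triangle inequality over the $K^t$ steps together with the gradient bound $G$. Your write-up is in fact somewhat more explicit than the paper's about why the disjointness of the child sets forces a unique surviving summand, but the underlying argument is identical.
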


    \begin{proof}
    (Case $l \in [1, L-1]$) From the condition that $\mathcal{D}$ and $\mathcal{D}'$ are adjacent dataset, which means the data samples only differs in one entry. As a result, all the summations within the norm, other than the sum for $k$, can be removed directly after upper bounding $\rho_{l, c_l, c_{l+1}}$. From previous definitions, we have:
    \begin{equation}
        \rho_{l, c_l, c_{l+1}} \leq \frac{1}{s_l}
    \end{equation}
As as result, we can bound $\Delta_{l, c_l}$ as
\begin{align}
    \Delta_{l, c_l} &\leq \Delta_{l}\\
    &\leq \frac{\eta^t}{\prod_{l'=l}^{L-1} s_{l'}} \sum_{k=1}^{K^t}\max_{\mathcal{D}, \mathcal{D}'}\left\|g_j^{(t,k)}(\mathcal{D}) - g_j^{(t,k)}(\mathcal{D}')\right\|\\
    &\leq \frac{\eta^t}{\prod_{l'=l}^{L-1} s_{l'}} \sum_{k=1}^{K^t}\max_{\mathcal{D}, \mathcal{D}'}\left\|g_j^{(t,k)}(\mathcal{D}) \right\|+\left\| g_j^{(t,k)}(\mathcal{D}')\right\|\\
    &\leq \frac{2\eta^t K^t G}{\prod_{l'=l}^{L-1} s_{l'}}
\end{align}
giving us the result in~\eqref{eq:locAgg_egd_sens1}.
    
(Case $l = L$) Similarly, for the leaf nodes, we can bound $\Delta_{L, j}$ as
        \begin{align}
    \Delta_{L, j} &\leq \Delta_{L}\\
    &\leq \eta^t \sum_{k=1}^{K^t}\max_{\mathcal{D}, \mathcal{D}'}\left\|g_j^{(t,k)}(\mathcal{D}) - g_j^{(t,k)}(\mathcal{D}')\right\|\\
    &\leq \eta^t \sum_{k=1}^{K^t}\max_{\mathcal{D}, \mathcal{D}'}\left\|g_j^{(t,k)}(\mathcal{D}) \right\|+\left\| g_j^{(t,k)}(\mathcal{D}')\right\|\\
    &\leq 2\eta^t K^t G
\end{align} giving us the result in~\eqref{eq:locAgg_egd_sens2}.
\end{proof}

\begin{lemma}\label{lem:bd_on_DP_noise}
For any given $k \in [1, K^t]$, and any layer $l \in [1, L-1]$, the DP noise that is passed up the $l^{\textrm{th}}$ layer from lower layers at local iteration $k$ can be bounded by:
\begin{align}
   &\mathbb{E}\left\|\sum_{i\in \mathcal{S}_{l,d_{l,i}}} \rho_{l,d_{l,i},i}n_{l+1,i}^{(t,k)} \right\|^2 \notag \\
   &\leq \frac{2(\eta^t)^2 M (K^t)^2 T q^2G^2\log(1/\delta)}{\epsilon^2} \left(p_{l+1}^{\max}\frac{\alpha_l^2}{\prod_{l' = l+1}^{L-1} s_{l'}^2} + \sum_{m = l+1}^{L-1} \prod_{l' = l+1}^{m-1}(1 - p_{l'}^{\min})p_{m}^{\max}\frac{\alpha_m^2}{\prod_{l' = l+1}^{ m-1} s_{l'}\prod_{l'' = m}^{ L-1} s_{l''}^2} \right)\\
   &+ \frac{2(\eta^t)^2 M (K^t)^2 T q^2G^2\log(1/\delta)}{\epsilon^2}\left(\prod_{l' = l+1}^{ L-1}(1 - p_{l'}^{\min})\frac{\alpha_L^2}{\prod_{l' = l+1}^{ L-1} s_{l'}}\right)
\end{align}
\end{lemma}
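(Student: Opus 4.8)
The plan is to prove the bound by backward induction on the layer index, exploiting the recursive way DP noise accumulates in the aggregation tree. I would begin by observing that the left-hand side is exactly the total noise that enters the layer-$l$ parent $c = d_{l,i}$ from its children, namely $\mathcal{E}_{l,c} \overset{\Delta}{=} \sum_{i \in \mathcal{S}_{l,c}} \rho_{l,c,i}\, n_{l+1,i}^{(t,k)}$. By the trust rules in \eqref{eq:secure_local_aggr}--\eqref{eq:insecure_local_aggr}, each child $i$ contributes one of two mutually exclusive kinds of noise: if $i \in \mathcal{N}_{T,l+1}$ is secure, then $n_{l+1,i}^{(t,k)}$ is a \emph{fresh} Gaussian with covariance $\sigma^2(\Delta_{l+1,i})\mathbf{I}_M$, and its entire subtree is noise-free (a secure node has an all-secure subtree, by downward induction on $\mathcal{N}_{U,l}\subseteq\bigcup_{c\in\mathcal{N}_{U,l-1}}\mathcal{S}_{l-1,c}$); if $i \in \mathcal{N}_{U,l+1}$ is insecure, then $n_{l+1,i}^{(t,k)}$ equals the noise $\mathcal{E}_{l+1,i}$ already accumulated from $i$'s own children. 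Because the fresh noises and the distinct-subtree inherited noises are all independent and zero-mean (the sets $\mathcal{S}_{l,c}$ being pairwise disjoint), every cross term vanishes and the expected squared norm splits into a weighted sum of per-child second moments.

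I would then establish the one-step recursion. Using independence and $\rho_{l,c,i} = 1/|\mathcal{S}_{l,c}| \le 1/s_l$,
\[
\mathbb{E}\bigl\|\mathcal{E}_{l,c}\bigr\|^2 = \sum_{i \in \mathcal{S}_{l,c}\cap\mathcal{N}_{T,l+1}} \rho_{l,c,i}^2\, M\sigma^2(\Delta_{l+1,i}) + \sum_{i \in \mathcal{S}_{l,c}\cap\mathcal{N}_{U,l+1}} \rho_{l,c,i}^2\, \mathbb{E}\bigl\|\mathcal{E}_{l+1,i}\bigr\|^2.
\]
For the first sum, Proposition~\ref{prop:GM} gives $M\sigma^2(\Delta_{l+1,i}) = M\alpha_{l+1}^2 q^2\Delta_{l+1,i}^2 T\log(1/\delta)/\epsilon^2$, and Lemma~\ref{lem:Delta} bounds $\Delta_{l+1,i} \le 2\eta^t K^t G / \prod_{l'=l+1}^{L-1} s_{l'}$; since the number of secure children is at most $p_{l+1}^{\max}|\mathcal{S}_{l,c}|$, this yields the fresh-noise term governed by $p_{l+1}^{\max}$ and $1/\prod_{l'=l+1}^{L-1}s_{l'}^2$. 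For the second sum, the number of insecure children is at most $(1-p_{l+1}^{\min})|\mathcal{S}_{l,c}|$, which carries the inductive quantity forward with a factor of order $(1-p_{l+1}^{\min})/s_l$. The base case is layer $L-1$, where all children are edge devices that always inject noise per \eqref{eq:insecure_local_aggr} with sensitivity $\Delta_{L,j}\le 2\eta^t K^t G$, producing the leaf-level $\mathcal{C}$-type contribution $\alpha_L^2/\prod_{l'}s_{l'}$ with no extra $s$-reduction at the device.

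Finally I would unroll the recursion. Telescoping the factor $(1-p_{\cdot}^{\min})/s_{\cdot}$ along each chain of insecure ancestors produces exactly the three groupings of \eqref{eq:ABC_terms}: $\mathcal{A}_l$, the fresh noise at the secure children directly below layer $l$ (scaled by $1/\prod s_{l'}^2$); $\mathcal{B}_l$, the fresh noise first appearing at some intermediate secure layer $m$ reached through a chain of insecure nodes, where the chain above $m$ contributes one power of $s$ per layer while the subtree below $m$ contributes two; and $\mathcal{C}_l$, the noise originating at the edge devices and reached through an entirely insecure chain. The main obstacle I anticipate is the combinatorial bookkeeping of this unrolling: one must track, layer by layer, how each squared aggregation weight $\rho^2$ and each secure/insecure count interacts with the per-layer sensitivity scaling, so that the telescoped products of $s_{l'}$ together with the accumulated $(1-p_{l'}^{\min})$ and $p_{l'}^{\max}$ factors align precisely with $\mathcal{A}_l,\mathcal{B}_l,\mathcal{C}_l$. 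The only other point requiring care is the independence argument at each level, since it is what collapses the squared norm of a sum into a sum of squared norms and hence justifies the recursion in the first place.
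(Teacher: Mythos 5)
Your proposal is correct and follows essentially the same route as the paper's proof: both decompose the noise accumulated at layer $l$ according to the layer at which each component was freshly injected, bound the secure/insecure child fractions via $p^{\max}$ and $p^{\min}$ with a $1/s$ reduction per traversed layer, and then substitute the sensitivity bounds of Lemma~\ref{lem:Delta} and the Gaussian variance of Proposition~\ref{prop:GM}; your backward induction is simply the recursive form of the unrolled layer-indexed sum that the paper writes out directly. The only minor difference is that you invoke independence and zero mean to obtain exact $\rho_{l,c,i}^2$ weights at the outermost level, where the paper instead applies Jensen's inequality with first-power weights $\rho_{l,c,i}$ --- your version is slightly tighter and still implies the stated bound.
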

\begin{proof}
    We first move the norm into the sum
    \begin{align}
    \label{eq:sum_of_dp_noise}
        \mathbb{E}\left\|\sum_{i\in \mathcal{S}_{l,d_{l,i}}} \rho_{l,d_{i,j},i}n_{l+1,i}^{(t,k)} \right\|^2 \leq \sum_{i\in \mathcal{S}_{l,d_{l,i}}} \rho_{l,d_{i,j},i} \mathbb{E}\left\|n_{l+1,i}^{(t,k)}\right\|^2 
    \end{align}
    Since each $\mathbb{E}\left\|n_{l+1,i}^{(t,k)}\right\|^2$ is the variance of a guassian noise that's a linear combination of multiple noises, we can use Lemma~\ref{lem:Delta} to show that for any given $i$, by applying the total law of expectation:
    \begin{align}
        &\mathbb{E}\left\|n_{l+1,i}^{(t,k)}\right\|^2  \\
        &=  \mathbb E\Big[\Big\Vert{ n}_{l+1,i}^{(t,k)}\Big\Vert^2\Big|{ n}_{l+1,i}^{(t,k)}\sim \mathcal{N}(0, ({\sigma}_{l+1,i}^{(t,k)})^2)\Big]
    \\
    &= M\left({\sigma}_{l+1,i}^{(t,k)}\right)^2\\
    &\leq p_{l+1}^{\max} M\sigma^2 (\Delta_{l+1}) + (1 - p_{l+1}^{\min})p_{l+2}^{\max} \frac{M\sigma^2 (\Delta_{l+2})}{s_{l+1}} \\
    &+ (1 - p_{l+1}^{\min})(1 - p_{l+2}^{\min})p_{l+3}^{\max} \frac{M\sigma^2 (\Delta_{l+3})}{s_{l+1}s_{l+2}} + \ldots + \prod_{l' = l+1}^{ L-1}(1-p_{l'}^{\min}) \frac{M\sigma^2(\Delta_{L})}{\prod_{l' = l+1}^{ L-1} s_{l'}}\\
    &\leq \frac{2(\eta^t)^2 M (K^t)^2 T q^2G^2\log(1/\delta)}{\epsilon^2} \left(p_{l+1}^{\max}\frac{\alpha_l^2}{\prod_{l' = l+1}^{L-1} s_{l'}^2} + \sum_{m = l+1}^{L-1} \prod_{l' = l+1}^{m-1}(1 - p_{l'}^{\min})p_{m}^{\max}\frac{\alpha_m^2}{\prod_{l' = l+1}^{ m-1} s_{l'}\prod_{l'' = m}^{ L-1} s_{l''}^2} \right)\\
    &+ \frac{2(\eta^t)^2 M (K^t)^2 T q^2G^2\log(1/\delta)}{\epsilon^2}\left(\prod_{l' = l+1}^{ L-1}(1 - p_{l'}^{\min})\frac{\alpha_L^2}{\prod_{l' = l+1}^{ L-1} s_{l'}}\right)
    \end{align}
    We can plug this inequality back to \eqref{eq:sum_of_dp_noise} and obtain the results.
\end{proof}

\begin{lemma} \label{lem:inner} 
Under Assumption~\ref{assump:genLoss}, we have
    \begin{align*}
       &- \eta^t \sum_{k=1}^{K^t} \mathbb{E}\left\langle \nabla F(w^{(t)}),  \sum_{c_1 = 1}^{N_1} \rho_{c_1} \sum_{c_2\in \mathcal{S}_{1,c_1}} \rho_{1,c_1,c_2}\ldots\sum_{j\in\mathcal{S}_{L-1,c_{L-1}}} \rho_{L-1, c_{L-1},j} \nabla F_j(w_j^{(t,k)})\right\rangle
        \leq
        -\frac{\eta^t K^t}{2}\mathbb{E}\left\|\nabla F(w^{(t)})\right\|^2 \\
        &-\frac{\eta^t}{2} \sum_{k=1}^{K^t}\left\|\sum_{c_1 = 1}^{N_1} \rho_{c_1} \sum_{c_2\in \mathcal{S}_{1,c_1}} \rho_{1,c_1,c_2}\ldots\sum_{j\in\mathcal{S}_{L-1,c_{L-1}}} \rho_{L-1, c_{L-1},j} \nabla F_j(w_j^{(t,k)})\right\|^2 + (\eta^t)^3(K^t)^3\beta^2G^2\\
        &+\frac{(\eta^t)^2 L\beta^2 M (K^t)^5 T q^2 G^2\log(1/\delta)}{\epsilon^2}\sum_{l=2}^{L}(1 - p_{l-1}^{\min})^2 \bigg(p_{l}^{\max}\frac{\alpha_l^2}{\prod_{l' = l}^{L-1} s_{l'}^2} \\
        &+ \sum_{m = l}^{L-1} \prod_{l' = l}^{m-1}(1 - p_{l'}^{\min})p_{m}^{\max}\frac{\alpha_m^2}{\prod_{l' = l}^{ m-1} s_{l'}\prod_{l'' = m}^{ L-1} s_{l''}^2} + \prod_{l' = l}^{ L-1}(1 - p_{l'}^{\min})\frac{\alpha_L^2}{\prod_{l' = l}^{ L-1} s_{l'}}\bigg)
    \end{align*}
\end{lemma}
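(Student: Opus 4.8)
The plan is to start from the elementary polarization identity $-\langle a, b\rangle = \tfrac12\big(\|a-b\|^2 - \|a\|^2 - \|b\|^2\big)$, applied termwise with $a = \nabla F(w^{(t)})$ and $b = \bar g^{(t,k)} \overset{\Delta}{=} \sum_{c_1}\rho_{c_1}\cdots\sum_{j}\rho_{L-1,c_{L-1},j}\nabla F_j(w_j^{(t,k)})$, the tree-aggregated gradient appearing inside the inner product. Summing over $k=1,\dots,K^t$ and multiplying by $\eta^t$ immediately reproduces the two negative terms $-\tfrac{\eta^t K^t}{2}\mathbb{E}\|\nabla F(w^{(t)})\|^2$ and $-\tfrac{\eta^t}{2}\sum_k\|\bar g^{(t,k)}\|^2$ on the right-hand side, leaving me to bound the single residual quantity $\tfrac{\eta^t}{2}\sum_{k=1}^{K^t}\mathbb{E}\|\nabla F(w^{(t)}) - \bar g^{(t,k)}\|^2$.

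Next, I would rewrite $\nabla F(w^{(t)})$ in the same weighted-tree form $\sum_{c_1}\rho_{c_1}\cdots\sum_j\rho_{L-1,c_{L-1},j}\nabla F_j(w^{(t)})$ using the definitions of the intermediate and global losses, so that $\nabla F(w^{(t)}) - \bar g^{(t,k)}$ becomes a convex combination (the weights $\rho$ sum to one at every layer) of the per-device differences $\nabla F_j(w^{(t)}) - \nabla F_j(w_j^{(t,k)})$. Applying Jensen's inequality to pull the weights outside $\|\cdot\|^2$ and then invoking the $\beta$-smoothness in Assumption~\ref{assump:genLoss}, this residual is controlled by the weighted local-model drift $\beta^2\,\mathbb{E}\|w^{(t)} - w_j^{(t,k)}\|^2$.

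The core of the argument is then to bound this drift. I would expand $w_j^{(t,k)} - w^{(t)}$ along the device's update trajectory and split it into (i) the accumulated SGD steps $-\eta^t\sum_{k'}g_j^{(t,k')}$, and (ii) the DP noise that the device inherits through the intermediate aggregations performed at iterations in $\mathcal{K}_l^t$ preceding $k$. For (i), the bounded-gradient part of Assumption~\ref{assump:genLoss} together with a Cauchy--Schwarz (or Fact~\ref{fact:1}) step gives a contribution of order $(\eta^t K^t G)^2$, which after the outer factor $\tfrac{\eta^t}{2}\sum_k$ yields exactly the $(\eta^t)^3(K^t)^3\beta^2 G^2$ term. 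For (ii), I would apply Fact~\ref{fact:1} to the sum of noise vectors accumulated over the aggregation instants of the round and bound each per-iteration contribution with Lemma~\ref{lem:bd_on_DP_noise}; the layerwise grouping of that lemma is precisely what produces the $\mathcal{A}_l + \mathcal{B}_l + \mathcal{C}_l$ structure and the prefactor $(1-p_{l-1}^{\min})^2$, while the count of aggregation instants together with the outer sum over $k$ supply the remaining powers of $K^t$.

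The step I expect to be the main obstacle is the careful bookkeeping in part (ii): I must track how DP noise injected at each untrusted node of each layer propagates down to device $j$'s local model and accumulates across the aggregation instants, while keeping the layer-dependent scalings $1/\prod_{l'} s_{l'}$ versus $1/\prod_{l'} s_{l'}^2$ straight so that the three cases collapse exactly into $\mathcal{A}_l$, $\mathcal{B}_l$, $\mathcal{C}_l$. Getting the exponent of $K^t$ right --- the aggregation count contributes extra factors through Fact~\ref{fact:1}, so that matching the stated $(K^t)^5$ scaling requires attention --- is the delicate part; everything else is routine once the drift decomposition is in place.
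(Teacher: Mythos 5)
Your proposal follows essentially the same route as the paper's proof: the polarization identity to extract the two negative squared-norm terms, Jensen plus $\beta$-smoothness to reduce the residual to the weighted local-model drift, a decomposition of that drift into accumulated SGD steps (bounded via the gradient-norm bound $G$) and inherited DP noise (bounded via Lemma~\ref{lem:bd_on_DP_noise}). The bookkeeping concerns you flag are exactly where the paper's work lies, and your plan handles them the same way.
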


\begin{proof}  Since $-2 \mathbf a^\top \mathbf b = -\Vert \mathbf a\Vert^2-\Vert \mathbf b \Vert^2 + \Vert \mathbf a- \mathbf b\Vert^2$ holds for any two vectors $\mathbf a$ and $\mathbf b$ with real elements, we have

\begin{align}
    &- \eta^t \sum_{k=1}^{K^t} \mathbb{E}\left\langle \nabla F(w^{(t)}),  \sum_{c_1 = 1}^{N_1} \rho_{c_1} \sum_{c_2\in \mathcal{S}_{1,c_1}} \rho_{1,c_1,c_2}\ldots\sum_{j\in\mathcal{S}_{L-1,c_{L-1}}} \rho_{L-1, c_{L-1},j} \nabla F_j(w_j^{(t,k)})\right\rangle\\
    &= \frac{\eta^t}{2} \sum_{k=1}^{K^t} \Bigg[ -\left\|\nabla F(w^{(t)})\right\|^2 - \left\|\sum_{c_1 = 1}^{N_1} \rho_{c_1} \sum_{c_2\in \mathcal{S}_{1,c_1}} \rho_{1,c_1,c_2}\ldots\sum_{j\in\mathcal{S}_{L-1,c_{L-1}}} \rho_{L-1, c_{L-1},j} \nabla F_j(w_j^{(t,k)})\right\|^2 \\
    &+ \underbrace{\left\|\nabla F(w^{(t)})-\sum_{c_1 = 1}^{N_1} \rho_{c_1} \sum_{c_2\in \mathcal{S}_{1,c_1}} \rho_{1,c_1,c_2}\ldots\sum_{j\in\mathcal{S}_{L-1,c_{L-1}}} \rho_{L-1, c_{L-1},j} \nabla F_j(w_j^{(t,k)})\right\|^2}_\text{(a)}\Bigg]
\end{align}
Applying Assumption~\ref{assump:genLoss}, we can further bound (a) above as:
\begin{align}
    &\left\|\nabla F(w^{(t)})-\sum_{c_1 = 1}^{N_1} \rho_{c_1} \sum_{c_2\in \mathcal{S}_{1,c_1}} \rho_{1,c_1,c_2}\ldots\sum_{j\in\mathcal{S}_{L-1,c_{L-1}}} \rho_{L-1, c_{L-1},j} \nabla F_j(w_j^{(t,k)})\right\|^2\\
    \leq& \sum_{c_1 = 1}^{N_1} \rho_{c_1} \sum_{c_2\in \mathcal{S}_{1,c_1}} \rho_{1,c_1,c_2}\ldots\sum_{j\in\mathcal{S}_{L-1,c_{L-1}}} \rho_{L-1, c_{L-1},j} \left\| \nabla F(w^{(t)}) - \nabla F(w_j^{(t,k)}) \right\|^2\\
    \leq& \beta^2\sum_{c_1 = 1}^{N_1} \rho_{c_1} \sum_{c_2\in \mathcal{S}_{1,c_1}} \rho_{1,c_1,c_2}\ldots\sum_{j\in\mathcal{S}_{L-1,c_{L-1}}} \rho_{L-1, c_{L-1},j} \left\| w^{(t)} - w_j^{(t,k)} \right\|^2
\end{align}
Now for each layer $l$, we use the notation $d_{l,j}$ to denote the parent node of a leaf node $j \in \mathcal{S}_L$:
\begin{align}
&\left\|\nabla F(w^{(t)})-\sum_{c_1 = 1}^{N_1} \rho_{c_1} \sum_{c_2\in \mathcal{S}_{1,c_1}} \rho_{1,c_1,c_2}\ldots\sum_{j\in\mathcal{S}_{L-1,c_{L-1}}} \rho_{L-1, c_{L-1},j} \nabla F_j(w_j^{(t,k)})\right\|^2\\
    \leq& \beta^2\sum_{c_1 = 1}^{N_1} \rho_{c_1} \sum_{c_2\in \mathcal{S}_{1,c_1}} \rho_{1,c_1,c_2}\ldots\sum_{j\in\mathcal{S}_{L-1,c_{L-1}}} \rho_{L-1, c_{L-1},j} \mathbb{E}\left\| -\eta^t\sum_{k'=1}^{k-1} \widehat{g}_j^{(t,k')} + \sum_{l=1}^{L-1} (1 - p_l^{\min}) \sum_{\substack{k'\in \mathcal{K}_l^t }}  \sum_{i\in \mathcal{S}_{l,d_{l,j}}} \rho_{l,c,i}n_{l+1,i}^{(t,k')} \right\|^2\\
    \leq& 2\beta^2(\eta^t)^2  \sum_{c_1 = 1}^{N_1} \rho_{c_1} \sum_{c_2\in \mathcal{S}_{1,c_1}} \rho_{1,c_1,c_2}\ldots\sum_{j\in\mathcal{S}_{L-1,c_{L-1}}} \rho_{L-1, c_{L-1},j} \mathbb{E}\left\| \sum_{k'=1}^{k-1} \widehat{g}_j^{(t,k')} \right\|^2\\
    &+ 2L\beta^2 \sum_{c_1 = 1}^{N_1} \rho_{c_1} \sum_{c_2\in \mathcal{S}_{1,c_1}} \rho_{1,c_1,c_2}\ldots\sum_{j\in\mathcal{S}_{L-1,c_{L-1}}} \rho_{L-1, c_{L-1},j} \mathbb{E}\left\|\sum_{l=1}^{L-1} (1 - p_l^{\min})\sum_{\substack{k'\in \mathcal{K}_l^t }}  \sum_{i\in \mathcal{S}_{l,d_{l,j}}} \rho_{l,c,i}n_{l+1,i}^{(t,k')} \right\|^2
\end{align}

We can then use the Lemma~\ref{lem:bd_on_DP_noise}, and Assumption~\ref{assump:genLoss} to show that
\begin{align}
    &\mathbb{E}\left\|\nabla F(w^{(t)})-\sum_{c_1 = 1}^{N_1} \rho_{c_1} \sum_{c_2\in \mathcal{S}_{1,c_1}} \rho_{1,c_1,c_2}\ldots\sum_{j\in\mathcal{S}_{L-1,c_{L-1}}} \rho_{L-1, c_{L-1},j} \nabla F_j(w_j^{(t,k)})\right\|^2\\
    \leq & 2\beta^2(K^t)^2 G^2 \\&+  \frac{4  (\eta^t)^2 L\beta^2 M (K^t)^3 T q^2 G^2\log(1/\delta)}{\epsilon^2}\sum_{l=2}^{L} (1 - p_{l-1}^{\min})^2\bigg(p_{l}^{\max}\frac{\alpha_l^2}{\prod_{l' = l}^{L-1} s_{l'}^2} + \sum_{m = l}^{L-1} \prod_{l' = l}^{m-1}(1 - p_{l'}^{\min})p_{m}^{\max}\frac{\alpha_m^2}{\prod_{l' = l}^{ m-1} s_{l'}\prod_{l'' = m}^{ L-1} s_{l''}^2} \\
    &+ \prod_{l' = l}^{ L-1}(1 - p_{l'}^{\min})\frac{\alpha_L^2}{\prod_{l' = l}^{ L-1} s_{l'}}\bigg)
\end{align}

\end{proof}

\newpage
\subsection{Privacy Analysis for {\tt M$^2$FDP}} \label{app:privacy_analysis}
{\color{black}
We analyze the privacy guarantee for differential privacy on any insecure node $w_{l,c}^{(t,k)}$ at layer $l$. The privacy is guaranteed when adding noise $n_{l,c}^{(t,k)}$ when aggregating to parent nodes at layer $l-1$, as shown in \eqref{eq:insecure_local_aggr}. Hence we analyze the noise added to the model.
\begin{itemize}
    \item \textbf{Case 1: Node $c \in \mathcal{S}_{l}$ in layer $l$ is trustworthy.} In this case, a new noise $n_{l,c}^{(t,k)}$ needs to be generated and be added onto the $w_{l,c}^{(t,k)}$. Then, based on Proposition~\ref{prop:GM}, $(\epsilon,\delta)$-DP is directly guaranteed when $n_{l,c}^{(t,k)}$ is sampled from a gaussian distribution $\mathcal{N}(0, \sigma^2 (\Delta_{l,c})\mathbf{I}_M)$ with 
    \begin{equation}
        \sigma(\Delta_{l,c})^2 = \alpha_l^2 \frac{q^2\Delta_{l,c}^2L\log(1/\delta)}{\epsilon^2}
    \end{equation}
    where $\Delta_{l,c}$ is set to the upper bound of $L_2$-sensitivity $\frac{\eta^tK^tG}{\Pi_{l'=l}^{L-1}s_{l'}}$.

    \item \textbf{Case 2: Node $c \in \mathcal{S}_{l}$ in layer $l$ is not trustworthy, and all child nodes $i \in \mathcal{S}_{l,c}$ are trustworthy.} Since all child nodes are trustworthy, they will all generated a new noise $n_{l+1,i}^{(t,k)}$ when aggregating to node $c$. As a result, the noise added onto node $c$ will be a stochastic combination of all $n_{l+1,i}^{(t,k)}$ noises. 
    \begin{align}
        w_{l,c}^{(t,k)} &= \sum_{i \in \mathcal{S}_{l,c}}\rho_{l,c,i} (w_{l+1,i}^{(t,k)} +n_{l+1,i}^{(t,k)})\\
        &= \sum_{i \in \mathcal{S}_{l,c}}\rho_{l,c,i} w_{l+1,i}^{(t,k)} + \sum_{i \in \mathcal{S}_{l,c}}\rho_{l,c,i} n_{l+1,i}^{(t,k)}
    \end{align}
    Since this noise is not actually generated in the aggregation process but transmitted when model weights are combined and pass through the hierarchy, we define a auxiliary noise $\hat{n}_{l,c}^{(t,k)} = \sum_{i \in \mathcal{S}_{l,c}}\rho_{l,c,i}n_{l+1,i}^{(t,k)}$. Thus we can show that this noise can be viewed as a sampled noise from the Gaussian $\mathcal{N}(0, (\hat{\sigma}_{l,c}^{(t,k)})^2\mathbf{I}_M)$. Where:
    \begin{align}
        (\hat{\sigma}_{l,c}^{(t,k)})^2 &= \sum_{i \in \mathcal{S}_{l,c}}\rho_{l,c,i}^2 \sigma(\Delta_{l+1,i})^2\\
        &= \alpha_{l+1}^2 \frac{q^2L\log(1/\delta)}{\epsilon^2}\cdot \sum_{i \in \mathcal{S}_{l,c}}\rho_{l,c,i}^2 \Delta_{l+1,i}^2\\
        &= \alpha_{l+1}^2 \frac{q^2L\log(1/\delta)}{\epsilon^2}\cdot \sum_{i \in \mathcal{S}_{l,c}}\rho_{l,c,i}^2 (\frac{\eta^tK^t G}{\Pi_{l'=l+1}^{K-1}s_{l'}})^2\\
        &\geq \alpha_{l+1}^2 \frac{q^2L\log(1/\delta)}{\epsilon^2} \cdot \Delta_{l,c}^2
    \end{align}
    Hence by selecting e.g. $\alpha_l = \alpha_{l+1}$, Proposition~\ref{prop:GM} holds true and $(\epsilon,\delta)$-DP is guaranteed.

\item \textbf{Case 3: Node $c \in \mathcal{S}_{l}$ in layer $l$ is not trustworthy, and there are in total $s_{c,l'} \subseteq \mathcal{N}_{l'}$ nodes with noise generated and added during the aggregation process at layer $l'\in [l+1, L]$ from edge devices to node $c$.} In this case, the noise can be viewed as a sampled noise from the Gaussian $\mathcal{N}(0, (\hat{\sigma}_{l,c}^{(t,k)})^2\mathbf{I}_M)$. Then, there exists a set of stochastic weights $\{\phi_{l',i}\}_{i\in s_{c,l'},l' \in [l+1,L]}$ such that
    \begin{align}
        (\hat{\sigma}_{l,c}^{(t,k)})^2 
        &= (\sum_{l'=l+1}^L\sum_{i \in s_{c,l'}} \phi_{l',i}^2 \alpha_{l'}^2) \frac{q^2L\log(1/\delta)}{\epsilon^2}\cdot \sum_{l'=l+1}^L\sum_{i \in s_{c,l'}} \phi_{l',i}^2 \Delta_{l',i}^2
    \end{align}
    We can then define a new set of stochastic weights $\{\phi_{l'}\}_{l' \in [l+1,L]}$ and simply the equation:
    \begin{align}
         (\hat{\sigma}_{l,c}^{(t,k)})^2 
        &= (\sum_{l'=l+1}^L \phi_{l'}^2 \alpha_{l'}^2) \frac{q^2L\log(1/\delta)}{\epsilon^2}\cdot \sum_{l'=l+1}^L \phi_{l'}^2 (\frac{\eta^tK^t G}{\Pi_{l''=l'}^{K-1}s_{l''}})^2\\
        &\geq (\sum_{l'=l+1}^L \phi_{l'}^2 \alpha_{l'}^2 )\frac{q^2L\log(1/\delta)}{\epsilon^2}\cdot \Delta_{l,c}^2
    \end{align}

    Hence by selecting e.g. $\alpha_l = \sqrt{\sum_{l'=l+1}^L \phi_{l'}^2 \alpha_{l'}^2}$, Proposition~\ref{prop:GM} holds true and $(\epsilon,\delta)$-DP is guaranteed.
\end{itemize}

Since the three cases covers all scenarios for insecure data transmission during aggregation for arbitrary layer $l$ at any iteration $t,k$, we show that our algorithm ensures $(\epsilon,\delta)$-DP throughout the whole network for the whole training process.
}




\pagebreak

\end{document}